\title{%
	Equitable Connected Partition and Structural Parameters Revisited: %
	$N$-fold Beats Lenstra%
}
\titlerunning{Equitable Connected Partition and Structural Parameters Revisited} %
\author{Václav Blažej}
{University of Warwick, United Kingdom \and \url{https://blazeva1.pages.fit/}}
{vaclav.blazej@warwick.ac.uk}
{https://orcid.org/0000-0001-9165-6280}
{}
\author{Dušan Knop}
{Czech Technical University in Prague, Czech Republic \and \url{https://knopdusa.pages.fit/}}
{dusan.knop@fit.cvut.cz}
{https://orcid.org/0000-0003-2588-5709}
{}
\author{Jan Pokorný}
{Czech Technical University in Prague, Czech Republic\and \url{https://pokorj54.pages.fit/}}
{pokorj54@fit.cvut.cz}
{https://orcid.org/0000-0003-3164-0791}
{}
\author{Šimon Schierreich}
{Czech Technical University in Prague, Czech Republic \and \url{https://pages.fit.cvut.cz/schiesim/}}
{schiesim@fit.cvut.cz}
{https://orcid.org/0000-0001-8901-1942}
{}
\authorrunning{V. Blažej, D. Knop, J. Pokorný, and Š. Schierreich}
\keywords{Equitable Connected Partition, structural parameters, fixed-parameter tractability, N-fold integer programming.}
\tikzset{snake it/.style={decorate, decoration=snake}}
\newlength{\RoundedBoxWidth}
\newsavebox{\GrayRoundedBox}
\newenvironment{GrayBox}[1]%
   {\setlength{\RoundedBoxWidth}{.93\columnwidth}%
    \def\boxheading{#1}%
    \begin{lrbox}{\GrayRoundedBox}%
       \begin{minipage}{\RoundedBoxWidth}}%
   {   \end{minipage}
    \end{lrbox}
    \begin{center}
    \begin{tikzpicture}%
       \node(Text)[draw=black!20,fill=white,rounded corners,inner sep=2ex,text width=\RoundedBoxWidth]
             {\usebox{\GrayRoundedBox}};
        \coordinate(x) at (current bounding box.north west);
        \node [draw=white,rectangle,inner sep=3pt,anchor=north west,fill=white]
        at ($(x)+(6pt,.75em)$) {\boxheading};
    \end{tikzpicture}
    \end{center}}
\newenvironment{defproblemx}[1]{\noindent\ignorespaces%
                                \FrameSep=6pt%
                                \parindent=0pt%
			\begin{center}%
                \begin{GrayBox}{#1}%
                \begin{tabular*}{\columnwidth}{!{\extracolsep{\fill}}@{\hspace{.1em}} >{\itshape} p{0.1\columnwidth} p{0.85\columnwidth} @{}}%
            }{
                 \vspace*{-1em}
                 \end{tabular*}%
                \end{GrayBox}%
            \end{center}%
                \ignorespacesafterend
            }
\newcommand{\defProblemQuestion}[3]{%
  \begin{defproblemx}{#1}
    Input: & #2 \\
    Question: & #3
  \end{defproblemx}
}
\newcommand{\N}{\ensuremath{\mathbb{N}}}
\newcommand{\Z}{\ensuremath{\mathbb{Z}}}
\newcommand{\bigoh}{\ensuremath{\mathcal{O}}}
\newcommand{\Oh}[1]{\ensuremath{{\bigoh\left(#1\right)}}}
\newcommand{\cc}[1]{{\mbox{\textnormal{\textsf{#1}}}}\xspace}  %
\newcommand{\FPT}{\cc{FPT}}
\newcommand{\XP}{\cc{XP}}
\newcommand{\NP}{{\cc{NP}}}
\newcommand{\NPh}{\NP-hard\xspace}
\newcommand{\NPhness}{\NP-hardness\xspace}
\newcommand{\NPc}{\NP-complete\xspace}
\newcommand{\paraNP}{\cc{para-NP}}
\newcommand{\paraNPh}{\cc{para-NP}-hard\xspace}
\newcommand{\Wh}[1][1]{\cc{W[#1]}-hard\xspace}
\newcommand{\Whness}[1][1]{\cc{W[#1]}-hardness\xspace}
\newcommand{\YES}{\emph{yes}}
\newcommand{\Yes}{\YES}
\newcommand{\YesI}{{\YES-instance}\xspace}
\newcommand{\ECP}{\textsc{Equitable Connected Partition}\xspace}
\newcommand{\ECPshort}{\normalfont\textsc{ECP}\xspace}
\newcommand{\MCC}{\normalfont\textsc{Multicoloured Clique}\xspace}
\newcommand{\UBP}{\normalfont\textsc{Unary Bin Packing}\xspace}
\newcommand{\DP}{\operatorname{DP}}
\newcommand{\tw}{\ensuremath{\operatorname{tw}}\xspace} %
\newcommand{\pw}{\ensuremath{\operatorname{pw}}\xspace} %
\newcommand{\vi}{\ensuremath{\operatorname{vi}}\xspace} %
\newcommand{\nd}{\ensuremath{\operatorname{nd}}\xspace} %
\newcommand{\mw}{\ensuremath{\operatorname{mw}}\xspace} %
\newcommand{\pvcn}[1][d]{\ensuremath{\text{#1-}\hspace{-0.1em}\operatorname{pvcn}}\xspace} %
\newcommand{\td}{\ensuremath{\operatorname{td}}\xspace} %
\newcommand{\fes}{\ensuremath{\operatorname{fes}}\xspace} %
\newcommand{\pieces}{\ensuremath{\mathcal{P}}}
\newcommand{\config}{\ensuremath{\mathcal{C}}}
\Crefname{claim}{Claim}{Claims}
\begin{document}
	
\maketitle

\begin{abstract}
	We study the \ECP (\ECPshort for short) problem, where we are given a graph $G=(V,E)$ together with an integer $p\in\N$, and our goal is to find a partition of~$V$ into~$p$~parts such that each part induces a connected sub-graph of $G$ and the size of each two parts differs by at most~$1$. On the one hand, the problem is known to be \NPh in general and \Wh with respect to the path-width, the feedback-vertex set, and the number of parts~$p$ combined. On the other hand, fixed-parameter algorithms are known for parameters the vertex-integrity and the max leaf number.
	
	As our main contribution, we resolve a long-standing open question [Enciso et al.; IWPEC~'09] regarding the parameterisation by the tree-depth of the underlying graph. In particular, we show that \ECPshort is \Wh with respect to the $4$-path vertex cover number, which is an even more restrictive structural parameter than the tree-depth. In addition to that, we show \Whness of the problem with respect to the feedback-edge set, the distance to disjoint paths, and \NPhness with respect to the shrub-depth and the clique-width. On a positive note, we propose several novel fixed-parameter algorithms for various parameters that are bounded for dense graphs.
\end{abstract}

\section{Introduction}

A partition of a set $V$ into $p\in\N$ parts is a set ${\pi=\{V_1,\ldots,V_p\}}$ of subsets of~$V$ such that for every $i,j\in [p]\colon V_i \cap V_j = \emptyset$ and $\bigcup_{i=1}^p V_i = V$. In the \ECP problem, we are given an undirected graph~${G=(V,E)}$ together with an integer $p$, and our goal is to partition the vertex set $V$ into $p$ \emph{parts} such that the sizes of each two parts differ by at most $1$ and each class induces a connected sub-graph. Formally, our problem is defined as follows.

\defProblemQuestion{\ECP (\ECPshort)}%
	{A simple undirected and connected $n$-vertex graph~$G = (V,E)$ and a positive integer~${p\in\N}$.}%
	{Is there a partition $\pi=\{V_1, \ldots, V_p\}$ of $V$ such that every part $G[V_i]$ is connected, and $||V_i|-|V_j||\leq 1$ for every pair $i,j \in [p]$?\vspace{0.4em}}

The \ECP problem naturally arises in many fields such as redistricting theory~\cite{Altman1997,LevinF2019,Williams1995}, which is a subfield of computational social choice theory, VLSI circuit design~\cite{BhattTL1984}, parallel computing~\cite{ArbenzLMMS2007}, or image processing~\cite{LucertiniPS1993}, to name a few.

One of the most prominent problems in the graph partitioning direction is the \textsc{Bisection} problem, where our goal is to split the vertex set into two parts $A$ and $B$, each part of size at most $\lceil\frac{n}{2}\rceil$, such that the number of edges between $A$ and $B$ is at most some given $k\in\N$. \textsc{Bisection} is \NPh~\cite{GareyJ1979} even if we restrict the input to unit disc graphs~\cite{DiazM2017} and is heavily studied from the parameterised complexity perspective; see, e.g.,~\cite{vanBevernFSS2015,BuiP1992,CyganLPPS2019,FominGLS2014,Wiegers1990}.
The natural generalisation of the \textsc{Bisection} problem is called \textsc{Balanced Partitioning} where we partition the vertices into $p\in\N$ parts, each of size at most $\lceil\frac{n}{p}\rceil$. \textsc{Balanced Partitioning} is \NPh already on trees~\cite{FeldmannF2015} and a disjoint union of cliques~\cite{AndreevR2006}. The parameterised study of this problem is due to Ganian and Obdržálek~\cite{GanianO2013} and van Bevern et al.~\cite{vanBevernFSS2015}. 
In all the aforementioned problems, we are given only the upper-bound on the size of each part; hence, the parts are not necessarily equitable. Moreover, there is no connectivity requirement for the parts.
For a survey of graph partitioning problems, we refer the reader to the monograph of \mbox{Bulu{\c{c}} et al.}~\cite{BulucMSSS2016}.

On the equitability side, the most notable direction of research is the \textsc{Equitable $k$-Colouring} problem (\textsc{EC} for short). Here, we are given an undirected graph~$G$ and the goal is to decide whether there is a proper colouring of the vertices of $G$ using $k$ colours such that the sizes of each two colour classes differ by at most one. Note that the graph induced by each colour class is necessarily an independent set, and hence is disconnected. As the \textsc{$k$-Colouring} problem can be easily reduced to the \textsc{Equitable $k$-Colouring}, it follows that \textsc{EC} is \NPh. Polynomial-time algorithms are known for many simple graph classes, such as graphs of bounded tree-width~\cite{BodlaenderF2005,ChenL1994}, split graphs~\cite{ChenKL1996}, and many others~\cite{FurmanczykK2005}. The parameterised study was, to the best of our knowledge, initiated by Fellows et al.~\cite{FellowsFLRSST2011} and continued in multiple subsequent works~\cite{EncisoFGKRS2009,FialaGK2011,GomesGS2022}. For a detailed survey of the results on \textsc{EC}, we refer the reader to the monograph by Lih~\cite{Lih2013}.

The \ECP problem then naturally brings the concepts of equitability and connectivity of the vertex set together. It is known that \ECPshort is \NPc~\cite{Altman1997}. Moreover, the problem remains \NPc even if $G$ is a planar graph or for every fixed $p$ at least $2$~\cite{DyerF1984,GareyJ1979}. Enciso et al.~\cite{EncisoFGKRS2009} were the first who studied \ECPshort from the viewpoint of parameterised complexity. They showed that \ECPshort is fixed-parameter tractable with respect to the vertex cover number and the maximum leaf number of $G$.
On the negative side, they showed that it is \Wh to decide the problem for the combined parameter the path-width, the feedback-vertex set, and the number of parts $p$. Moreover, they gave an \XP algorithm for \ECPshort parameterised by tree-width. Later, Gima et al.~\cite{GimaHKKO2022} showed that the problem is fixed-parameter tractable when parameterised by the vertex-integrity of $G$. We give a brief introduction to the parameterised complexity and structural parameters in \Cref{sec:preliminaries}. A more general variant with parametric lower- and upper-bounds on the sizes of parts was studied by Ito et al.~\cite{ItoZN2006}, and Blažej et al.~\cite{BlazejGKPSS2023} very recently introduced the requirement on the maximum diameter of each part.

It is worth pointing out that \ECP is also significant from a theoretical point of view. Specifically, this problem is a very common starting point for many \Whness reductions; see, e.,g.,~\cite{vanBevernFSS2015,BlazejGKPSS2023,DeligkasEGHO2021,MeeksS2020}. Surprisingly, the graph in multiple of the before-mentioned reductions remains the same as in original instance, and therefore our study directly strengthens the results obtained in these works. Since the complexity picture with respect to structural parameters is rather incomplete, many natural questions arise. For example, what is the parameterised complexity of \ECPshort when parameterised by the tree-depth of~$G$? Or, is \ECPshort in \FPT when parameterised by the feedback-edge set? Last but not least, is the problem easier to decide on graphs that are from the graph-theoretical perspective dense, such as cliques?

\subsection{Our Contribution}
In our work, we continue the line of study of the \ECP problem initiated by Enciso et al.~\cite{EncisoFGKRS2009} almost 15 years ago. For an overview of our results, we refer the reader to \Cref{fig:results}; however, we believe that our contribution is much broader. We try to summarise this in the following four points.

\begin{figure}[tb!]
	\centering
	\begin{tikzpicture}
		[every node/.style={
			draw=none,
			fill=gray!10,
			minimum height=1.5em,
			align=center,
			text width=5.5em,
			font = {\scriptsize}
		}]
		
		\tikzstyle{FPT} = [fill=green!15]
		\tikzstyle{Wh} = [fill=orange!30]
		\tikzstyle{NPh} = [fill=red!50]
		\tikzstyle{result} = [draw=black,very thick]
		\tikzstyle{improved} = [dashed]
		
		\node[FPT] (vc) at (-4.8,0.1) {vertex cover \#};
		\node[FPT] (ml) at (0.4,0.1) {max leaf \#};
		\node[FPT,result] (dc) at (-8.2,0.1) {dist. to clique};
		
		\node[FPT,result,improved] (3pvc) at (-2,-1) {3-path vertex cover \#};
		\node[FPT,result] (tc) at (-5.7,-0.9) {twin cover \#};
		\node[FPT,result] (nd) at (-9.3,-1) {neighborhood diversity};
		
		\node[FPT,result,improved] (vi) at (-4.5,-2.2) {vertex\\ integrity \#};
		\node[Wh,result] (4pvc) at (-2,-2.2) {4-path vertex cover \#};
		\node[result,Wh] (fes) at (1.3,-2.2) {feedback-edge set \#};
		\node[fill=none] at (2.2,-2.0) {$\star$};
		\node[fill=none] at (-1.05,-2.0) {$\star$};
		
		\node[Wh,result] (td) at (-3,-3.3) {tree-depth};
		\node[Wh,result] (ddp) at (-0.1,-3.4) {dist. to disj. paths};
		\node[Wh,result] (dcg) at (-6.5,-3.3) {dist. to cluster};
		\node[fill=none] at (-2.1,-3.2) {$\star$};
		\node[fill=none] at (0.8,-3.2) {$\star$};
		
		\node[Wh] (pw) at (-2.7,-4.4) {path-width};
		\node[Wh] (fvs) at (1.3,-4.5) {feedback-vertex set \#};
		\node[result,NPh] (sd) at (-5.7,-4.4) {shrub-depth};
		\node[FPT,result] (mw) at (-8.8,-4.4) {modular-width};
		\node[fill=none] at (-1.8,-4.3) {$\star$};
		\node[fill=none] at (2.2,-4.3) {$\star$};
		
		\node[Wh,result] (tw) at (-0.7,-5.4) {tree-width};
		\node[fill=none] at (0.2,-5.3) {$\star$};
		
		\node[result,NPh] (cw) at (-6,-6.3) {clique-width};
		
		\draw[->] (vc) -- (3pvc.north);
		\draw[->] (vc) -- (tc.north);
		\draw[->] (vc) -- (nd.north);
		\draw[->] (dc) -- (dcg);
		\draw[->] (3pvc) -- (4pvc);
		\draw[->] (3pvc.south east) -- (ddp);
		\draw[->] (3pvc) -- (vi);
		\draw[->] (tc) -- (mw.north);
		\draw[->] (tc) -- (dcg);
		\draw[->] (4pvc) -- (td);
		\draw[->] (nd) -- (mw);
		\draw (mw) -- (cw);
		\draw[->] (vi) -- (td);
		\draw[->] (td) -- (pw);
		\draw[->] (td) -- (sd);
		\draw[->] (pw) -- (tw);
		\draw[->] (ml) -- (fes);
		\draw[->] (ml) -- (ddp);
		\draw[->] (dcg) -- (sd);
		\draw[->] (ddp) -- (pw.north east);
		\draw[->] (ddp) -- (fvs);
		\draw[->] (fes) -- (fvs);
		\draw[->] (fvs) -- (tw);
		\draw[->] (sd) -- (cw);
		\draw[->] (tw) -- (cw);
	\end{tikzpicture}
	\caption{An overview of our results. The parameters for which the problem is in \FPT are coloured green, the parameters for which \ECPshort is \Wh and in \XP have an orange background, and \paraNPh combinations are highlighted in red. Arrows indicate generalisations; e.g., modular width generalises both neighbourhood diversity and twin-cover number. The solid thick border represents completely new results, and the dashed border represents an improvement of previously known algorithm. All our \Whness results hold even when the problem is additionally parameterized by the number of parts~$p$; however, the results marked with~$\star$ becomes fixed-parameter tractable if the size of a larger part $\lceil n/p \rceil$ is an additional parameter.}
	\label{fig:results}
\end{figure}
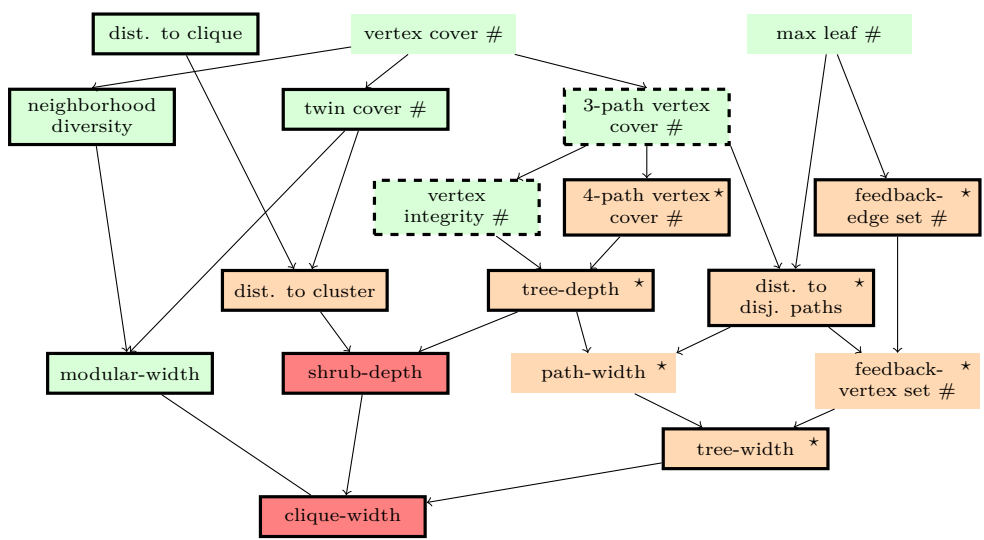

First, directly following pioneering work on structural parameterisation of \ECPshort, we provide a complete dichotomy between tractable and intractable cases for structural parameters that are bounded for sparse graphs. Namely, we provide \Whness proofs for \ECPshort with respect to the $4$-path vertex cover number and the feedback-edge set number, which encloses a gap between structural parameters that were known to be tractable -- the vertex-cover number and the max-leaf number -- and those that were known to be \Wh -- the path-width and the feedback-vertex set. It should also be mentioned that our constructions not only give much stronger intractability results but, at the same time, are much simpler compared to the original construction of Enciso et al.~\cite{EncisoFGKRS2009}.

Second, we also turn our attention to dense graphs, which have, so far, been completely overlooked in the relevant literature. On our way to fixed-parameter tractable algorithms for various structural parameters, we prove polynomial-time solvability of some specific graph classes. Again, we provide a clear boundary between tractable and intractable cases. However, it turns out that for dense graphs, the problem is much easier from a computational perspective.

Third, we clearly show where the limits of the parameterized complexity framework in the study of structural parameterisation of \ECP are. In particular, we show that the problem is \NPh already on graphs of shrub-depth equal to $3$, clique-width equal to $3$, and twin-width equal to $2$. Moreover, in some cases, our complexity results are tight. For example, we give a polynomial-time algorithm for graphs of clique-width~$2$.%

Last but not least, in order to provide all the algorithms, we use multiple different techniques. Naturally, some algorithms are based on standard techniques such as dynamic-programming over decomposition or kernelisation; however, many of them still require deep insights into the structure of the solution and the instances. However, some of them use very careful branching together with formulation of the problem using $N$-fold integer linear programming, which is, informally speaking, an integer linear program with specific shape of the constraints. We are convinced that the technique of $N$-fold integer linear programming in the design and analysis of fixed-parameter tractable algorithms deserves more attention from the parameterised complexity community, as it in many scenarios significantly beats the classical \FPT algorithms based on the famous Lenstra's algorithm; see, e.g., \cite{AltmanovaKK2019,Bredereck0KN2019,KnopKLMO2023,KnopKM2020,KnopSS2022}.

\subsection{Paper Organization}
We start by introducing necessary notation and presenting the parameters under study in \Cref{sec:preliminaries}. Next, we delve into our algorithmic results in \Cref{sec:algorithms}. In \Cref{sec:hardness}, we provide all hardness lower-bounds showing that none of our algorithms can be significantly strengthened. Finally, we conclude in \Cref{sec:conclusions} by discussing possible directions for future research.

\section{Preliminaries}\label{sec:preliminaries}

\subsection{Graph Theory}
For graph-theoretical notation, we follow the monograph by Diestel~\cite{Diestel2017}. A simple undirected \emph{graph} $G$ is a pair $(V,E)$, where $V$ is a non-empty set of \emph{vertices} and $E\subseteq\binom{V}{2}$ is a set of \emph{edges}. We set $n=|V|$ and $m=|E|$. A graph is \emph{connected} if for every pair of vertices $u,v\in V$ there exists a sequence of distinct vertices $v_1,\ldots,v_\ell$ such that $v_1 = u$, $v_\ell = v$, and $\{v_i,v_{i+1}\}\in E$ for every $i\in[\ell-1]$. All graphs assumed in this paper are connected.
Given a vertex $v\in V$, the \emph{neighbourhood} of $v$ is $N(v) = \{u\mid \{u,v\}\in E\}$. The size of the neighbourhood of a vertex $v$ is its \emph{degree} and we denote it as $\deg(v) = |N(v)|$.

\subsection{Parameterised Complexity}
The parameterised complexity framework~\cite{CyganFKLMPPS2015,DowneyF2013,Niedermeier2006} provides tools for a finer-grained complexity analysis of \NPh problems. Specifically, the input of the problem is accompanied by some number $k$, called a \emph{parameter}, that provides additional information about the input data. The best possible outcome for such a parameterised problem is an algorithm with running time $f(k)\cdot n^\Oh{1}$, where $f$ is any computable function. We call this algorithm \emph{fixed-parameter tractable} and the complexity class \FPT contains all problems admitting fixed-parameter tractable algorithms. Less desired but still positive is an algorithm with running time $n^{f(k)}$, where $f$ is again a computable function. A class containing all problems that admit an algorithm with this running time is called \XP.
Naturally, not all parameterised problems are in \FPT. Giving a parameterised reduction from a problem which is known to be \Wh[t] for any $t\geq 1$ to our problem of interest is assumed to be a strong evidence that the problem is not in \FPT. Similarly, one can exclude the existence of an \XP algorithm for some parameterised problem by showing that this problem is \NPh already for a constant value of the parameter $k$. Then, we say that such a problem is \paraNPh.

\subsection{Structural Parameters}
In this sub-section, we provide definitions of all structural parameters we study in this work.

\begin{definition}[$d$-path vertex cover]
	Let $G=(V,E)$ be an undirected graph and $d\in\N$ be an integer. A \emph{$d$-path vertex cover} is a set $C\subseteq V$ such that the graph $G\setminus C$ contains no path with $d$ vertices as a sub-graph. The \emph{$d$-path vertex cover number} $\pvcn[d](G)$ is the size of a minimum $d$-path vertex cover in $G$.
\end{definition}

Note that the $2$-path vertex cover number is, in fact, the standard \emph{vertex cover number} of a graph.

\begin{definition}[Feedback-edge set]
	Let $G=(V,E)$ be a graph. A set $F\subseteq E$ is called \emph{feedback-edge set} of the graph $G$, if $(V,E\setminus F)$ is an acyclic graph. The \emph{feedback-edge set number} $\fes(G)$ is the size of a minimum feedback-edge set in $G$.
\end{definition}

\begin{definition}[Vertex integrity]
	Let $G=(V,E)$ be an undirected graph. \emph{Vertex integrity}, denoted $\vi(G)$, is the minimum $k\in\N$ such that there is a set $X\subseteq V$ of size at most $k$ and every connected component $C$ of $G\setminus X$ contains at most $k$ vertices.
\end{definition}

\begin{definition}[Distance to $\mathcal{G}$]\label{def:distToG}
	Let $G$ be a graph, and $\mathcal{G}$ be a graph family. A set $M\subseteq V(G)$ is a \emph{modulator} to $\mathcal{G}$, if $G\setminus M\in\mathcal{G}$. The \emph{distance to $\mathcal{G}$}, denoted $\operatorname{dist}_{\mathcal{G}}(G)$, is the size of a minimum modulator to~$\mathcal{G}$.
\end{definition}

In our paper, we will focus on the \emph{distance to clique}, denoted by $\operatorname{dc}(G)$, the distance to disjoint union of cliques, which is usually referred to as the \emph{distance to cluster graph} or the \emph{cluster vertex deletion}~\cite{DouchaK2012} and denoted $\operatorname{dcg}(G)$, and the \emph{distance to disjoint paths}, denoted by $\operatorname{ddp}(G)$. %

\begin{definition}[Twin-cover~\cite{Ganian2015}]
	Let $G=(V,E)$ be a graph. A subset $X\subseteq V$ is called a \emph{twin-cover} of $G$ if for every edge $\{u,v\} \in E$ either $u\in X$ or $v\in X$, or $u$ and $v$ are twins.
	We say that $G$ has \emph{twin-cover number} $\operatorname{tc}(G) = k$ if $k$ is the minimum possible size of $X$ for~$G$.
\end{definition}

\begin{definition}[Neighbourhood-diversity~\cite{Lampis2012}]\label{def:nd}
	Let $G=(V,E)$ be a graph. We say that two vertices $u,v\in V$ have the same type iff $N(u)\setminus\{v\} = N(v)\setminus\{u\}$. The \emph{neighbourhood diversity} of $G$ is at most $d$, if there exists a partition of $V$ into at most $d$ sets such that all vertices in each set have the same type.
\end{definition}

Let $T_1,\ldots,T_d$ be a partition of $V$ such that for each $u,v\in T_i$, $i\in[d]$, it holds that $u$ and $v$ are of the same type according to \Cref{def:nd}. Observe that each type is either independent set or a clique. We define \emph{type graph} to be an undirected graph with vertices being the types $T_1,\ldots,T_d$ and two vertices corresponding to some types $T_i$ and $T_j$ are connected by an edge iff there exists an edge $\{u,v\}\in E(G)$ such that $u\in T_i$ and $v\in T_j$. In fact, it is easy to see that if there exists an edge $\{T_i,T_j\}$ in the type graph, then there is an edge $\{u,v\}\in E(G)$ for each pair $u\in T_i$ and $v\in T_j$.´

\begin{definition}[Modular-width~\cite{GajarskyLO2013}]
	Consider graphs that can be obtained from an algebraic expression that uses only the following operations:
	\begin{enumerate}
		\item create an isolated vertex,
		\item the disjoint union of two disjoint graphs $G_1$ and $G_2$ which is a graph $(V(G_1)\cup V(G_2), E(G_1) \cup E(G_2))$,
		\item the complete join of two disjoint graphs $G_1$ and $G_2$ which produces a graph $(V(G_1)\cup V(G_2), E(G_1) \cup E(G_2) \cup \{\{u,v\}\mid u\in V(G_1)\text{ and }v\in V(G_2)\})$.
		\item the substitution with respect to some pattern graph $P$ -- for a graph $P$ with vertices $p_1,\ldots,p_\ell$ and disjoint graphs $G_1,\ldots,G_\ell$, the substitution of the vertices of $P$ by the graphs $G_1,\ldots,G_\ell$ is the graph with vertex set $\bigcup_{i=1}^\ell V(G_i)$ and edge set $\bigcup_{i=1}^\ell E(G_i) \cup \left\{\{u,v\}\mid u\in V(G_i),\ v\in V(G_j),\ \text{and } \{p_i,p_j\}\in E(P)\right\}$. 
	\end{enumerate}
	The width of such an algebraic expression is the maximum number of operands used by any occurrence of the substitution operation. The \emph{modular-width} of a graph $G$, denoted $\mw(G)$, is the least integer $m$ such that $G$ can be obtained from such algebraic expression of width $m$.%
\end{definition}

\begin{definition}[Tree-depth]
	The \emph{tree-depth} $\td(G)$ of a graph $G=(V,E)$ is defined as
	\[
	\td(G) = \begin{cases}
		1 & \text{if $|V| = 1$,}\\
		1 + \min_{v\in V} \td(G - v) & \text{if $G$ is connected and $|V| \geq 2$,}\\
		\max_i \td(C_i) & \text{where $C_i$ is a connected component of $G$, otherwise.}
	\end{cases}
	\]
\end{definition}

\begin{definition}[Shrub-depth~\cite{GanianHNOMR2012}]
	A graph $G=(V,E)$ has a \emph{tree-model} of $m$ colours and depth $d \geq 1$ if there exists a rooted tree $T$ such that
	\begin{enumerate}
		\item $T$ is of depth exactly $d$,
		\item the set of leaves of $T$ is exactly $V$,
		\item each leaf of $T$ is assigned one of $m$ colours,
		\item existence of each edge $e\in E$ depends solely on the colours of its endpoints and the distance between them in $T$.
	\end{enumerate}
	The class of all graphs having a tree-model of $m$ colours and depth $d$ is denoted by $\mathcal{TM}_m(d)$. A class of graphs $\mathcal{G}$ has \emph{shrub-depth} $d$ if there exists $m$ such that $\mathcal{G}\subseteq\mathcal{TM}_m(d)$, while for all natural $m'$ we have $\mathcal{G}\not\in \mathcal{TM}_{m'}(d-1)$.
\end{definition}

\begin{definition}[Tree-width]
	A \emph{tree-decomposition} of a graph $G$ is a pair $\mathcal{T}=(T,\beta)$, where $T$ is a tree and $\beta\colon V(T)\to 2^V$ is a function associating each node of $T$ with a so-called \emph{bag}, such that the following conditions hold
	\begin{enumerate}
		\item $\forall v\in V(G)\colon\exists x\in V(T)\colon v\in \beta(x)$,
		\item $\forall \{u,v\}\in E(G)\colon \exists x\in V(T)\colon \{u,v\}\subseteq \beta(x)$, and
		\item $\forall v\in V(G)$ the nodes $x\in V(T)$ such that $v\in\beta(x)$ induce a connected sub-tree of $T$.
	\end{enumerate}
	The width of a tree-decomposition $\mathcal{T}$ is $\max_{x\in V(T)} |\beta(x) - 1|$. The \emph{tree-width} of a graph $G$, denoted $\tw(G)$, is a minimum width of a decomposition over all tree-decomposition of $G$.
\end{definition}

Let $x$ be a node of a tree-decomposition, by $G^x$ we done the graph $(V^x,E^x)$, where $V^x$ and $E^x$ are the set of all vertices and edges, respectively, introduced in the sub-tree of the tree-decomposition rooted in $x$.

For the purpose of algorithm design and analysis, it is often beneficial to work with a slightly modified tree decomposition with a special structure.

\begin{definition}[Nice tree-decomposition]
	A tree-decomposition $\mathcal{T}=(T,\beta)$ is called \emph{nice tree-decomposition} if and only if
	\begin{itemize}
		\item $T$ is rooted in one vertex $r$ such that $\beta(r) = \emptyset$,
		\item for every leaf $l\in V(T)$ it holds that $\beta(l) = \emptyset$, and
		\item every internal node $x$ is one of the following type:
		\begin{description}
			\item[Introduce vertex node] with exactly one child $y$ such that $\beta(x)\setminus\beta(y) = \{v\}$ for some vertex $v\in V(G)$,
			\item[Forget vertex node] with exactly one child $y$ such that $\beta(y)\setminus\beta(x) = \{v\}$ for some vertex $v\in V(G)$, or
			\item[Join node] with exactly two children $y$ and $z$ such that $\beta(x)=\beta(y)=\beta(z)$.
		\end{description}
	\end{itemize}
\end{definition}

It is known that any tree-decomposition can be turned into a nice one of the same width in $\Oh{\tw^2\cdot n}$ time~\cite{CyganFKLMPPS2015}.

\begin{definition}[Clique-width]
	The \emph{clique-width} of the graph $G$ is the minimum number of labels needed to construct $G$ using solely the following set of operations:
	\begin{enumerate}
		\item Creation of a new vertex $v$ labelled $i$.
		\item Disjoint union of two graphs $G$ and $H$.
		\item Joining by an edge every vertex labelled $i$ to every vertex labelled $j$.
		\item Replacement of label $i$ with label $j$.
	\end{enumerate}
\end{definition}

\subsection{N-fold Integer Programming}
In recent years, integer linear programming (ILP) has become a very useful tool in the design and analysis of fixed-parameter tractable algorithms~\cite{GavenciakKK2022}. One of the best known results in this line of research is probably Lenstra's algorithm, roughly showing that ILP with bounded number of variables is solvable in \FPT time~\cite{Lenstra1983}.

In this work, we use the so-called \emph{$N$-fold integer programming} formulation. Here, the problem is to minimise a linear objective over a set of linear constraints with a very restricted structure. In particular, the constraints are as follows. We use $x^{(i)}$ to denote a set of~$t_i$ variables (a so-called \emph{brick}).
\begin{align}
	D_1 x^{(1)} + D_2 x^{(2)} + \cdots + D_N x^{(N)} &= \textbf{b}_0    \label{eq:NFold:linking}  \\
	A_i x^{(i)}                        &= \textbf{b}_i    & \forall i \in [N]       \\
	\textbf{0} \le x^{(i)}                 &\le \textbf{u}_i  & \forall i \in [N]
\end{align}
Where we have $D_i \in \Z^{r \times t_i}$ and $A_i \in \Z^{s_i \times t_i}$; let us denote $s = \max_{i \in [N]} s_i$, $t = \max_{i \in [N]} t_i$, and let the dimension be~$d$, i.e., $d = \sum_{i \in [N]} t_i \le Nt$.
Constraints~\eqref{eq:NFold:linking} are the so-called \emph{linking constraints} and the rest are the \emph{local constraints}. In the analysis of our algorithms, we use the following result of Eisenbrand et al.~\cite{EisenbrandHKKLO19}.

\begin{proposition}[{\cite[Corollary~91]{EisenbrandHKKLO19}}]\label{prop:n_fold_algo}
	$N$-fold IP can be solved in $a^{r^2s+rs^2} \cdot d \cdot \log(d) \cdot L$ time, where~$L$ is the maximum feasible value of the objective and $a= r\cdot s \cdot \max_{i \in [N]} \left( \max ( \|D_i\|_\infty, \|A_i\|_\infty ) \right)$.
\end{proposition}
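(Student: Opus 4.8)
Because this is a cited black box, I would reconstruct its proof through the \emph{iterative augmentation} framework that underlies essentially all fast algorithms for block-structured integer programs. Write $A$ for the whole constraint matrix of the shape in \eqref{eq:NFold:linking}. The plan is: start from any feasible point $x$ and repeatedly replace it by $x+\lambda g$, where $g\in\Z^{d}$ satisfies $Ag=\mathbf 0$, $\lambda\in\Z_{>0}$, the point $x+\lambda g$ respects the bounds, and the objective strictly decreases. It suffices to take $g$ from the \emph{Graver basis} $\mathcal G(A)$ (the conformally minimal nonzero integer kernel vectors), since every feasible improving direction decomposes conformally into Graver elements; choosing in each step a \emph{Graver-best} direction (the best integer multiple of a single Graver element) forces a geometric decrease of the optimality gap, so the number of iterations is controlled by $L$. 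An initial feasible solution is obtained by solving an auxiliary $N$-fold IP of the same block shape, so this step is free up to constants in the exponent.

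The crux is a bound on $\|g\|_\infty$ for $g\in\mathcal G(A)$ that is \emph{independent of $N$}. I would prove it with the Steinitz lemma: splitting $g$ into bricks $g^{(1)},\dots,g^{(N)}$, the linking constraint says $\sum_i D_i g^{(i)}=\mathbf 0$, a sum of integer vectors of bounded norm in $\Z^{r}$; reordering the summands by Steinitz keeps all partial sums inside a box of radius $\Oh{r\Delta}$, where $\Delta=\max_i\max(\|D_i\|_\infty,\|A_i\|_\infty)$. Combined with the elementary bound that the Graver elements of a single local block $A_i\in\Z^{s_i\times t_i}$ have $\ell_\infty$-norm at most $\Oh{(t\Delta)^{s}}$ (a Cramer/Hadamard estimate), this yields $\|g\|_\infty\le a^{\Oh{rs}}$ for $a=rs\Delta$. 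This $N$-independent norm bound is the whole reason an efficient algorithm exists and is the source of the exponential factor.

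Given the norm bound, a Graver-best step is computed by \emph{dynamic programming} that exploits the block separability. Sweeping over the bricks $i=1,\dots,N$, the DP state is the running partial sum $\sum_{j\le i}D_j g^{(j)}\in\Z^{r}$; by the Steinitz argument this ranges over a box with $a^{\Oh{r^2 s}}$ points, and each transition solves a tiny local problem---optimise $c^{(i)}\cdot g^{(i)}$ over $A_i g^{(i)}=\mathbf 0$, a prescribed linking contribution, and $\|g^{(i)}\|_\infty\le a^{\Oh{rs}}$---at cost $a^{\Oh{rs^2}}$. Processing all $d$ variables with an $\Oh{\log d}$ overhead for the supporting data structures gives a per-iteration cost of $a^{r^2 s+rs^2}\cdot d\log d$, and multiplying by the $\le L$ augmentation steps reproduces the stated running time.

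The main obstacle, as usual for this family of results, is the first genuinely nontrivial ingredient: the $N$-independent norm bound and the matching guarantee that the DP state space (the box of partial linking sums) stays bounded regardless of how many bricks there are. Everything downstream---the transition cost, the $d\log d$ sweep, and the exact way the exponents assemble into $r^2 s+rs^2$---is careful bookkeeping once the Steinitz-based bound is in hand. A secondary subtlety is making the augmentation terminate in few steps: a naive improving step need not converge quickly, so one needs the Graver-best (or a bit-scaling/halving) analysis to pin the iteration count to $L$.
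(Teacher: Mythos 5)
The paper does not prove this proposition at all --- it is imported verbatim as a black box, citing Corollary~91 of Eisenbrand et al. --- so there is no in-paper argument to compare against. Your reconstruction (Graver-best augmentation, a Steinitz-lemma-based $\ell_\infty$-bound on Graver elements that is independent of $N$, and a dynamic program over the bounded box of partial linking sums, with the exponents $r^2s+rs^2$ assembling from the state-space and transition costs) is a faithful account of how the cited result is actually established, so it matches the intended source.
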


\section{Algorithmic Results}\label{sec:algorithms}

In this section, we provide our algorithmic results. The first algorithm is for \ECPshort parameterised by the vertex-integrity and combines careful branching with $N$-fold integer programming. Specifically, Gima et al.~\cite{GimaHKKO2022} showed that \ECPshort is in \FPT with respect to this parameter by giving an algorithm running in $k^{k^{k^\Oh{k}}}\cdot n^\Oh{1}$ time, where $k=\vi(G)$. We show that using an $N$-fold IP formulation, we can give a simpler algorithm with a doubly exponential improvement in the running time.

\begin{theorem}\label{thm:ECP:FPT:vi}
	The \ECP problem is fixed-parameter tractable parameterised by the vertex-integrity \vi and can be solved in $k^\Oh{k^4}\cdot n\log n$ time, where $k = \vi(G)$.
\end{theorem}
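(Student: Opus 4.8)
The plan is to combine a shallow branching over the small separator supplied by the vertex-integrity decomposition with one $N$-fold integer program per branch. First I would compute a set $X\subseteq V$ with $|X|\le k$ such that every connected component of $G\setminus X$ has at most $k$ vertices; this can be done in $f(k)\cdot n$ time by the standard \FPT procedure for \vi. The structural observation driving everything is that any connected part $V_i$ of a solution is of exactly one of two kinds: it is \emph{internal}, i.e.\ entirely contained in a single component of $G\setminus X$ (and hence has at most $k$ vertices), or it is \emph{special}, i.e.\ it contains at least one vertex of $X$. As the parts are pairwise disjoint and $|X|\le k$, there are at most $k$ special parts, while the (possibly numerous) internal parts are each confined to one component.

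Next I would branch, fixing for each branch: (i) the partition of $X$ into the cores $X_1,\dots,X_q$ ($q\le k$) of the special parts, giving at most $k^{\Oh{k}}$ options; (ii) for each special part whether its target size is $\lfloor n/p\rfloor$ or $\lceil n/p\rceil$, at most $2^{k}$ options; and (iii) a spanning tree $T_i$ on each $X_i$ recording how its core will be connected, again $k^{\Oh{k}}$ options in total. Step (iii) is justified by the fact that, since each component attaches to the rest of $G$ only through $X$, every maximal sub-piece of $V_i$ lying inside a component must itself be adjacent to $X_i$; consequently $V_i$ is connected if and only if the auxiliary graph on $X_i$ obtained from $G[X_i]$ by adding, for each such sub-piece, a clique on the core vertices it touches, is connected. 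Hence it suffices to guess one spanning tree of this auxiliary graph and to \emph{demand} that each of its (at most $|X_i|-1$) edges not already present in $G[X_i]$ be realised by at least one sub-piece adjacent to both endpoints.

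With all guesses fixed, I would build an $N$-fold IP whose bricks are the components of $G\setminus X$. For a component $C$, a \emph{configuration} assigns every vertex of $C$ either to one of the cores $1,\dots,q$ or to the label \emph{internal}; it is valid if every connected internal block has size $\lfloor n/p\rfloor$ or $\lceil n/p\rceil$ and every connected core-$i$ block is adjacent to $X_i$. A single component admits at most $(k+1)^{k}=k^{\Oh{k}}$ configurations, which index the $0/1$ variables of its brick, the single local constraint selecting exactly one of them. The linking constraints are: for each $i\in[q]$, the total number of core-$i$ vertices over all components equals the guessed target size minus $|X_i|$; for each realisation demand from step (iii), at least one chosen sub-piece covers the corresponding tree edge (turned into an equality with a slack variable placed in an auxiliary brick); and a constant number of further rows fixing how many internal parts are large and small so that the total number of parts is exactly $p$. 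This yields $r=\Oh{k}$ linking rows, $s=\Oh{1}$ local rows, all coefficients bounded by $k$, and dimension $d\le n\cdot k^{\Oh{k}}$.

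Finally, feeding these parameters into \Cref{prop:n_fold_algo} with the trivial objective (so that $L=\Oh{1}$) solves each program in $k^{\Oh{k^{2}}}\cdot n\log n$ time, and multiplying by the $k^{\Oh{k}}$ branches keeps the total within the claimed $k^{\Oh{k^{4}}}\cdot n\log n$ bound. I expect the main obstacle to be the encoding of the global connectivity of the special parts, since connectivity is inherently non-linear: the whole argument hinges on the reduction in step (iii), where the observation that in-component sub-pieces must attach to their core converts global connectivity into a finite family of coverage demands that a single guessed spanning tree can discharge. The remaining care goes into the bookkeeping that forces the number of parts to equal exactly $p$ while respecting the equitability bound.
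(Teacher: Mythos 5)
Your core architecture — one brick per component of $G\setminus X$, $k^{\Oh{k}}$ configurations per brick, a single local constraint selecting one configuration, and linking rows for part sizes plus a guessed set of modulator connections to be realised through components — is exactly the paper's. The genuine divergence is how the two arguments cope with parts that avoid $X$ entirely: the paper simply splits on $p>k$ versus $p\le k$ and delegates the former case to the algorithm of Gima et al., so that in its $N$-fold every part has a core in $X$; you instead fold such \emph{internal} parts into the configurations and count them with extra linking rows, which unifies the two cases. Your spanning-tree encoding of connectivity is also a sharpening of the paper's arbitrary guessed edge set $E(X)$: it drops the number of linking rows from $\Oh{k^2}$ to $\Oh{k}$ and would in fact yield $k^{\Oh{k^2}}\cdot n\log n$, better than the stated bound.

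There is, however, a concrete flaw precisely in the part you added. You declare a configuration valid ``if every connected internal block has size $\lfloor n/p\rfloor$ or $\lceil n/p\rceil$,'' where a block is a connected component of the internal-labelled vertices. But distinct internal parts of a solution may be adjacent inside a component (e.g.\ a path on $2\lfloor n/p\rfloor$ vertices split into two consecutive internal parts), in which case they merge into a single block of the wrong size and your encoding rejects a genuine solution. Moreover, the linking rows counting large and small internal parts need each configuration to report a well-defined number of internal parts of each size, which a bare vertex-labelling does not provide (a block of size $6$ with $\lfloor n/p\rfloor=2$, $\lceil n/p\rceil=3$ decomposes as $2+2+2$ or $3+3$). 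The fix is easy and stays within the same bounds: let a configuration assign each vertex of $C$ either a core index in $[q]$ or an internal part index in $[k]$, require each nonempty internal class to be connected and of size $\lfloor n/p\rfloor$ or $\lceil n/p\rceil$, and read off the large/small counts from the classes; this still gives at most $(q+k)^{k}=k^{\Oh{k}}$ configurations per brick. With that repair your argument goes through and subsumes the paper's case distinction.
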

\begin{proof}
	First, if it holds that $p > k$, we use the algorithm of Gima et al.~\cite{GimaHKKO2022} which, in this special case, runs in time $k^\Oh{k^2} \cdot n$. Therefore, the bottleneck of their approach is clearly the case when $p \leq k$. In what follows, we introduce our own procedure for this case, which is based on the N-fold integer programming. Note that the algorithm of Gima et al.~\cite{GimaHKKO2022} for the case $p \leq k$ is based on the algorithm of Lenstra~\cite{Lenstra1983}.
	
	First, we guess (by guessing we mean exhaustively trying all possibilities) a partition of the modulator vertices~$X$ in the solution. Let this solution partition be $X_1, \ldots, X_p$. Furthermore, we guess which (missing) connections between the vertices in the modulator will be realised through the components of $G-X$. Let $E(X)$ be the set of these guessed connections.
	
	Now, we check the validity of our guess using the (configuration) $N$-fold ILP. 
	Each component of $G-X$ (call them \emph{pieces}) has at most $k$ vertices; therefore, it can be split in at most $k$ chunks (not necessarily connected) that will be attached to some modulator vertices already assigned to the parts of the solution. Let $\pieces(G,X)$ be the set of all pieces of $G-X$.
	Now, we want to verify if there exists a selection of chunks for every piece so that when we collect these together the solution is indeed connected and contains the right number of vertices.
	Thus, there are altogether at most $k^k$ configurations of chunks in a piece.
	Let $\config(Z)$ be the set of all configurations of a piece~$Z$.
	Let $s^Z_{C,i}$ be the number of vertices in the chunk attached to the $i$-th part from a piece~$Z$ in the configuration~$C$.%
	Let $Z$ be a piece and $C \in \config(Z)$, we set $e^Z_{C}(u,v) = 1$ if the chunk assigned by~$C$ to the part containing both $u,v \in X$ connects $u$ and~$v$.
	
	Now, we have to ensure (local constraint) that each piece is in exactly one configuration
	\begin{equation}\label{eq:ECP:FPT:vi:NFold:local}
		\sum_{C \in \config(Z)} x^Z_C = 1 \qquad\qquad \forall Z \in \pieces(G,X) \,.
	\end{equation} %
	Observe that these constraints have no variables in common for two distinct elements of~$\pieces(G,X)$.
	The rest of the necessary computation uses global constraints.
	We ensure that the total contribution of chunks assigned to the parts is the correct number ($x_i$ is a binary slack such that $\sum_i x_i = n \bmod p$): %
	\begin{equation}\label{eq:ECP:FPT:vi:NFold:chunk_contrib}
		x_i + \sum_{Z \in \pieces(G,X)}\sum_{C \in \config(Z)} s^Z_{C,i} \cdot x^Z_C
		=
		\lceil n/p \rceil - |X_i|
		\qquad\qquad \forall i \in [p]
	\end{equation}%
	Next, we have to verify the connectivity of parts in $X$
	\begin{equation}\label{eq:ECP:FPT:vi:NFold:connectivity}
		\sum_{Z \in \pieces(G,X)}\sum_{C \in \config(Z)} e^Z_{C}(u,v) \cdot x^Z_C
		\ge 1
		\qquad\qquad \forall \{u,v\} \in E(X)
	\end{equation}
	
	It is not hard to verify, that the parameters of $N$-fold IP are as follows:
	\begin{itemize}
		\item the number $s$ of local constraints in a brick is exactly $1$ as there is a single local constraint~\eqref{eq:ECP:FPT:vi:NFold:local} for each piece,
		\item the number $r$ of global constraints is in $\Oh{k^2}$: there are $p \le k$ constraints \eqref{eq:ECP:FPT:vi:NFold:chunk_contrib} and $\binom{k}{2}$ constraints \eqref{eq:ECP:FPT:vi:NFold:connectivity},
		\item the number $t$ of variables in a brick is $|\config(Z)|$ which is~$k^{\Oh{k}}$, and
		\item $a \in \Oh{k^3}$, since all coefficients in the constraints are bounded by~$k$ in absolute value.
	\end{itemize}
	Thus, using \Cref{prop:n_fold_algo}, the \ECP problem can be solved in $k^\Oh{k^4} \cdot n \log n$ time.
\end{proof}

Using techniques from the proof of \Cref{thm:ECP:FPT:vi}, we may give a specialised algorithm for \ECP parameterised by the $3$-path vertex cover. The core idea is essentially the same, but the components that remain after removing the modulator are much simpler: they are either isolated vertices or isolated edges. This fact allows us to additionally speed the algorithm~up.

\begin{theorem}
	\label{thm:ECP:FPT:3pvc}
	The \ECP{} problem is fixed-parameter tractable parameterised by the $3$-path vertex cover number and can be solved in $k^\Oh{k^2}\cdot n\log n$ time, where $k = \pvcn[3](G)$.
\end{theorem}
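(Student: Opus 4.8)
The plan is to follow the blueprint of \Cref{thm:ECP:FPT:vi}, exploiting the crucial structural simplification afforded by the parameter: if $X$ is a minimum $3$-path vertex cover of size $k=\pvcn[3](G)$, then every connected component of $G-X$ is either a single vertex or a single edge, since any connected graph on three vertices already contains a $P_3$. As before, I would split on the relationship between $p$ and $k$. If $p>k$, then at least one part avoids the modulator $X$ entirely; such a part lies inside one component of $G-X$ and hence has size at most $2$, so equitability forces $\lceil n/p\rceil\le 3$ and every part has size $2$ or $3$ (apart from the trivial all-singletons case $p=n$). I would dispose of this regime combinatorially: parts of size at most $2$ reduce to finding a matching of a prescribed size, while every part of size $3$ must contain a modulator vertex, so there can be at most $k$ of them (and if the prescribed number $n-2p$ of size-$3$ parts exceeds $k$, the instance is a \NO-instance). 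Guessing how these few large parts meet $X$ and then solving the residual matching instance runs in polynomial, indeed \FPT, time.

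The interesting case is $p\le k$, which I would handle with a leaner $N$-fold program than the one in \Cref{thm:ECP:FPT:vi}. I would guess the partition $X_1,\dots,X_p$ of the modulator (at most $p^k=k^{\Oh{k}}$ options) and introduce one brick per piece $Z\in\pieces(G,X)$, with a variable $x^Z_C$ for every configuration $C\in\config(Z)$ recording how the at most two vertices of $Z$ are distributed among the parts. Because each piece has at most two vertices, $|\config(Z)|=\Oh{k^2}$, so the brick width shrinks to $t=\Oh{k^2}$ instead of $k^{\Oh{k}}$. The single local constraint per brick again states $\sum_{C\in\config(Z)}x^Z_C=1$, and the size constraints (one per part, exactly as in \Cref{thm:ECP:FPT:vi}) contribute only $p\le k$ global rows.

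The crux, and the single new idea needed to reach $k^{\Oh{k^2}}$ rather than the $k^{\Oh{k^4}}$ obtained by a verbatim port, is the treatment of connectivity. Imposing one constraint for each of the up to $\binom{k}{2}$ pairs of modulator vertices would give $r=\Oh{k^2}$ global rows and hence a running time of $a^{r^2}=k^{\Oh{k^4}}$. Instead, I would additionally guess, for each part, a spanning tree on its modulator vertices; only those tree edges that are not already edges of $G$ must be realised through pieces. Across all parts these \emph{virtual} connections form a forest on $X$ and thus number at most $k-1$. For each required connection $\{u,v\}$ I impose a single global row $\sum_{Z\in\pieces(G,X)}\sum_{C\in\config(Z)}e^Z_C(u,v)\,x^Z_C\ge 1$. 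Since pieces are isolated components of $G-X$, no connectivity between two pieces can bypass the modulator, so a part is connected precisely when its guessed tree is realised and every piece vertex it uses is adjacent to $X_i$; the latter is enforced simply by admitting only valid configurations. This brings the number of global constraints down to $r=\Oh{k}$.

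Collecting $s=1$, $r=\Oh{k}$, $t=\Oh{k^2}$, and $a=\Oh{k^3}$ into \Cref{prop:n_fold_algo} gives $a^{r^2s+rs^2}\cdot d\log d\cdot L=k^{\Oh{k^2}}$ per guess; multiplying by the $k^{\Oh{k}}$ guesses (partition of $X$ together with the spanning forest, whose count is also $k^{\Oh{k}}$) and the near-linear solution size yields the claimed $k^{\Oh{k^2}}\cdot n\log n$ bound. I expect the main obstacle to be the correctness of the spanning-forest reduction: one must argue carefully that realising a single guessed spanning tree per part, together with admissibility of configurations, is genuinely equivalent to connectivity of each part, and that the enumeration of spanning forests over $X$ stays within the $k^{\Oh{k}}$ guess budget. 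The combinatorial $p>k$ branch is routine, but still requires a careful check that the at most $k$ size-$3$ parts are correctly guessed before the residual matching step.
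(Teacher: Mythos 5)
Your proposal is correct in outline, and for the decisive step it takes a genuinely different route from the paper. In the $p\le k$ regime the paper also guesses the partition $X_1,\dots,X_p$ of the modulator, but then it handles connectivity \emph{before} the ILP: it guesses at most $k-1$ connector pieces (by type) and absorbs them into the modulator so that every chunk is already connected, which lets it drop the connectivity rows from the $N$-fold program entirely and keep only the $p\le k$ size constraints, giving $r=\Oh{k}$. You instead keep connectivity inside the program but guess, per part, a spanning tree on its modulator vertices and impose one global row only for each virtual tree edge; since these form a forest on $X$ there are at most $k-1$ such rows, and by Cayley's formula the enumeration of forests stays within $k^{\Oh{k}}$, so you also land at $r=\Oh{k}$ and $a^{r^2s+rs^2}=k^{\Oh{k^2}}$. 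Your equivalence argument is sound: because pieces are components of $G-X$, any $u$--$v$ connection inside a part that avoids other modulator vertices is realised by a single piece, so the auxiliary graph on $X_i$ is connected exactly when the part is, and a spanning tree of it can be guessed; just note that for an edge-piece assigned wholly to a part it suffices that \emph{one} endpoint sees $X_i$, so ``every piece vertex adjacent to $X_i$'' should be relaxed to chunk-level admissibility. The one place where your sketch is thinner than it should be is the $p>k$, $\lfloor n/p\rfloor=2$ branch: ``guessing how the large parts meet $X$'' must be done by component \emph{types} rather than by concrete vertices (otherwise one gets $n^{\Oh{k}}$ choices), and one then needs the paper's additional bookkeeping — the at most $2k$ isolated vertices of $G-X$, the guessed modulator/attachment edges, and the parity argument ensuring that each edge-component type is consumed an even number of times before the leftover edge-components become size-$2$ parts — before the residual problem really is a matching instance. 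That gap is fillable exactly as the paper does it, so the two proofs end at the same $k^{\Oh{k^2}}\cdot n\log n$ bound by different means: the paper buys simplicity of the ILP at the cost of a heavier pre-guessing phase, while your spanning-forest trick keeps the guessing lighter and shows more transparently why only $\Oh{k}$ linking constraints are ever needed.
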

\begin{proof}
	We prove the theorem by investigating separately the case when $p > k$ and $p \leq k$. In particular, we show that in the former case, the problem can be solved in \FPT time relatively trivially, while for the later case, we use an $N$-fold integer programming machinery similar to \Cref{thm:ECP:FPT:vi}. For the rest of this proof, let $\ell = (n \bmod p)$ stands for the number of large parts and $s = p - \ell$ be the number of small parts.
	
	First, assume that $p > k$. 
	If $\lfloor n/p \rfloor \geq 3$, then we are dealing with clear no instance as there exists at least one part not intersecting $M$; however, the size of each component of $G\setminus M$ is at most $2$. Therefore, it holds that $\lfloor n/p \rfloor \leq 2$.
	If $\lfloor n/p \rfloor = 0$, the problem is trivial, and if $\lfloor n/p \rfloor = 1$, we can solve the problem in polynomial time by using an algorithm for maximum matching. 
	The only remaining case occurs when $\lfloor n/p \rfloor = 2$. First, if $\ell > k$, then the instance is clearly \emph{no}-instance, as each large part has to contain at least one vertex of the modulator. Therefore, $\ell \leq k$. Additionally, let $I_0 \subseteq V(G)\setminus M$ be the set of vertices such that they are isolated in $G\setminus M$. Each such vertex has to be in part with at least one modulator vertex. Therefore, if $|I_0| > 2k$, we directly return \emph{no}. Hence, we can assume that $|I_0| \leq 2k$. As our first step, we guess a set $E_G\subseteq (E(G) \cap (\binom{M\cup I_0}2\setminus\binom{M}2))$ of edges between modulator and vertices of $I_0$ that secures connectivity for vertices in $I_0$. There are at most $k\cdot 2k$ possible edges and therefore $2^\Oh{k^2}$ possible options we need to check. Now, we guess the set of edges $E_B\subseteq E(G)\cap\binom{M}2$ inside the modulator that are important for connectivity. Again, there are $k$ modulator vertices and each is adjacent to at most $k-1$ edges. This gives us $2^\Oh{k^2}$ possibilities we try. For each such guess, let $V_1,\ldots,V_{k'}$, where $k' \leq k$, be the partition of vertices in $M\cup I_0$ defined such that two vertices $u,v\in M\cup I_0$ are in the same part if and only if $\{u,v\} \in E_G \cup E_B$. For every $V_i$, $i \in [k']$, we check whether $V_i$ is connected, of size at most $3$, and does not consist of a single vertex from $I_0$. If any of the conditions is break, we reject the guess and continue with another option. Otherwise, we guess which parts are of size $3$ and which of them are of size $2$. There are $2^\Oh{k}$ options and, without loss of generality, let $V_1,\ldots,V_\ell$ be the parts for which we guessed that they are large. First, we check that no part is larger than what was guessed. If there is such a part, we reject the guess. Otherwise, for every part $V_i$, $i\in[k']$, that is missing a vertex, we guess at most $2$ (depending on the guessed size of $V_i$) component types in $G\setminus (M\cup I_0)$ whose vertices are in the solution in $V_i$. As there are at most $k$ parts and at most $2^\Oh{k}$ component types, this gives us $2^\Oh{k^2}$ options. In every option, each cluster type $t$ has to occur even times as otherwise we end up with an odd number of vertices used in parts intersecting the modulator and therefore, there remains one isolated vertex. Hence, we assume that all component types are guessed even times. Now, for every component $C \in G\setminus M$ whose vertices are not used in parts intersecting~$M$, we create a new part $V_j$ containing all (two) vertices of this component. Finally, we check that the each part of the guessed partition is connected and has correct size. This can be done in polynomial time. Overall, there are $2^\Oh{k^2}\cdot 2^\Oh{k^2}\cdot 2^\Oh{k} \cdot 2^\Oh{k^2} = 2^\Oh{k^2}$ options and the running time of our algorithm is $2^\Oh{k^2}\cdot n^\Oh{1}$.
	
	Next, let $p \leq k$. 
	We again guess a partition $X_1,\ldots,X_p$ of modulator vertices.
	This time, each piece has at most $2$ vertices and therefore can be split in at most $2$ chunks.
	In order to simplify the ILP we further guess which types of pieces are added to each $X_i$ so that after this all the partitions (chunks) are connected.
	Note that we use at most $k$ pieces for this as this process can be viewed as adding edges to the modular so that the chunks are connected; we need at most $k-1$ new edges for this.
	We add the guessed pieces to $X_i$ and to the modulator; thus, the modulator can grow to in size at most $3k$.
	In total, this takes at most $2^{\Oh{k^2}}$ time, as we first guess $k-1$ piece types (from $2^k$ with repetition) and then assign these $k-1$ pieces to $p$ chunks.
	Then, we introduce nearly the same N-fold ILP as in \Cref{thm:ECP:FPT:vi}, whose correctness follows by the same arguments.
	This time, we omit the constraints \eqref{eq:ECP:FPT:vi:NFold:connectivity}, since our chunks are already connected.
	Now, the parameters of the resulting N-fold ILP are:
	\begin{itemize}
		\item the number of local constraints in each brick is $s = 1$ as there is a single local constraint~\eqref{eq:ECP:FPT:vi:NFold:local} for each type,
		\item the number of global constraints $r \in \Oh{k}$: there are $p \le k$ constraints \eqref{eq:ECP:FPT:vi:NFold:chunk_contrib},
		\item the number of variables in a brick $t$ is at most~2, and
		\item $a \in \Oh{k}$, since all coefficients in the constraints are bounded by~$2$ in absolute value (a contribution of a piece to a chunk is either 0, 1, or 2).
	\end{itemize}
	Therefore, the above algorithm runs in $k^{\Oh{k^2}} \cdot n \log n$ time by \Cref{prop:n_fold_algo}.
\end{proof}

The next result is an \XP algorithm with respect to the tree-width of $G$. It should be noted that a similar result was reported already by Enciso et al.~\cite{EncisoFGKRS2009}; however, their proof was never published\footnote{In particular, in their conference version, Enciso et al.~\cite{EncisoFGKRS2009} promised to include the proof in an extended version, which, however, has never been published. There is also a version containing the appendix of the conference paper available from \url{https://www.researchgate.net/publication/220992885_What_Makes_Equitable_Connected_Partition_Easy}; however, even in this version, the proof is not provided.} and the only clue for the algorithm the authors give in~\cite{EncisoFGKRS2009} is that this algorithm ``can be proved using standard techniques for problems on graphs of bounded treewidth''. Therefore, to fill this gap in the literature, we give our own algorithm.

\begin{theorem}
	\label{thm:ECP:XP:tw}
	The \ECP problem is in \XP when parameterized by the tree-width \tw of $G$.
\end{theorem}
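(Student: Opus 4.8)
The plan is to design a dynamic program over a nice tree decomposition $(T,\beta)$ of $G$ (which exists and is computable in the claimed form by the nice-tree-decomposition machinery recalled above), processing the bags bottom-up. Write $w=\tw(G)$, let $\ell = n \bmod p$ be the number of \emph{large} parts (of size $\lceil n/p\rceil$) and $s=p-\ell$ the number of \emph{small} parts (of size $\lfloor n/p\rfloor$). For a node $x$ with bag $X=\beta(x)$, a feasible partial solution induces, on the already-processed vertex set $V^x$, a family of partial parts; each partial part is the intersection of a final part with $V^x$ and may still be disconnected. The key structural observation (forced by property~3 of tree decompositions) is that every connected component of a partial part that contains no vertex of $X$ can never again acquire a neighbour, so it must already be an entire finished part. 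Accordingly, I record at $x$ a state consisting of: (i) a partition of $X$ into \emph{part-groups} (which bag vertices share a final part); (ii) a refinement of each part-group into \emph{blocks}, where two bag vertices lie in the same block iff they are already connected inside their part within $G^x$; (iii) for each part-group, its current total number of vertices in $V^x$, an integer in $\{1,\dots,\lceil n/p\rceil\}$; and (iv) two counters $f_L,f_S$ giving the numbers of already finalised large and small parts. We only ever retain states with $f_L\le\ell$, $f_S\le s$, and $f_L+f_S+(\#\text{part-groups})\le p$.

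The transitions follow the four node types. A leaf has $\beta=\emptyset$ and the single state $f_L=f_S=0$. At an \textbf{introduce vertex} node adding $v$, I branch over which part $v$ joins: either a brand-new part-group (if the part budget allows) of size $1$ forming its own block, or an existing part-group $P$, in which case $v$ together with all blocks of $P$ that contain a neighbour of $v$ are merged into one block and $P$'s size is incremented; edges from $v$ to bag vertices in \emph{other} part-groups are simply ignored, since parts need not be independent. At a \textbf{forget vertex} node removing $v$, I delete $v$ from its block: if the block still contains another bag vertex, nothing else changes; if $v$ was the block's last bag vertex, then that block becomes dead, and I reject unless it was the \emph{only} block of its part-group — in which case the part is finished, so I verify its recorded size equals $\lfloor n/p\rfloor$ or $\lceil n/p\rceil$, increment $f_S$ or $f_L$ accordingly, and drop the part-group. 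At a \textbf{join} node with children $y,z$ sharing bag $X$, I combine compatible states (same part-group partition of $X$) by taking, for each part-group, the transitive closure of the union of the two block partitions as the new block structure, and setting the new size to $\mathrm{size}_y+\mathrm{size}_z-|X\cap P|$ to correct for double-counting the shared bag vertices; the counters add as $f_L^y+f_L^z$ and $f_S^y+f_S^z$. The answer is \Yes iff the root (with $\beta=\emptyset$) is reachable with $f_L=\ell$ and $f_S=s$.

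For the running time, since all part-groups and blocks are subsets of $X$ with $|X|\le w+1$, components (i)--(ii) range over at most $2^{\Oh{w\log w}}$ choices, the at most $w+1$ recorded sizes contribute a factor $n^{\Oh{w}}$, and the counters contribute $n^{\Oh{1}}$; hence each bag carries $n^{\Oh{w}}$ states, and each transition (including the join, which pairs states of two children) costs $n^{\Oh{w}}$ time. Over $\Oh{n}$ nodes this yields total time $n^{\Oh{\tw}}$, placing \ECPshort in \XP parameterised by \tw.

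The delicate part of the argument is correctness of the connectivity bookkeeping rather than the size accounting. The forget rule is the heart of the invariant: I must argue that rejecting exactly when a block dies while its part-group still contains another block is both sound (two bag-disjoint fragments of one part can never reconnect) and complete (no valid global solution is discarded). Equally error-prone is the join node, where I must show that taking the transitive closure of the two children's block partitions reproduces precisely the component structure of the glued partial solution, that it never merges two distinct part-groups, and that the inclusion--exclusion correction $-|X\cap P|$ counts each vertex of $V^x$ exactly once. Establishing these two points — together with the routine induction that reachable states are in bijection with equivalence classes of feasible partial solutions — is where the real work lies; the remainder is standard dynamic programming over a tree decomposition.
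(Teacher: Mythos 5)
Your proposal is correct and lives in the same general framework as the paper's proof---a leaf-to-root dynamic program over a nice tree decomposition whose states record which bag vertices share a part and how many already-forgotten vertices each part has accumulated, giving $n^{\Oh{\tw}}$ states and total running time---but the connectivity bookkeeping is genuinely different. The paper insists that every \emph{opened part} of a partial solution is already connected in $G^x$; fragments of one final part that are still disconnected are therefore carried as several distinct opened parts and merged lazily, at introduce nodes (where the new vertex may fuse several opened parts) and at join nodes via an explicitly defined set of \emph{splits} enumerating all ways a part can arise as an alternating chain of child parts. You instead commit to the final part assignment immediately (your part-groups) and refine each part-group into \emph{blocks} recording which bag vertices are already connected inside $G^x$; connectivity failure is then detected at forget nodes (a block dying while sibling blocks remain), and the join node reduces to taking the lattice join of two partitions of the bag together with an inclusion--exclusion size correction. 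The two designs are interchangeable in power and both yield \XP; yours makes the join node considerably cleaner at the price of a slightly richer state. One further point in your favour: your counters $f_L,f_S$ for the number of finished large and small parts force the final partition to have exactly $p$ classes, whereas the paper's acceptance condition only checks that every closed part has size $\lfloor n/p\rfloor$ or $\lceil n/p\rceil$, which for $p\nmid n$ does not by itself pin down the number of parts---so your version is the more careful one. The two places you flag as delicate (soundness and completeness of the forget rule, and correctness of the partition-join and of the $-|X\cap P|$ correction) are exactly where the detailed induction belongs, and the arguments there are standard.
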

\begin{proof}
	We proceed by a leaf-to-root dynamic programming along a nice tree decomposition. We assume that a decomposition of the optimal width is given as part of the input, as otherwise, we can find one in $2^\Oh{\tw^2}\cdot n^\Oh{1}$ time using the algorithm of Korhonen and Lokshtanov~\cite{KorhonenL2023}. 
	
	The crucial observation we need for the algorithm is that at every moment of the computation, there are at most $\Oh{\tw}$ \emph{opened parts}. This holds because each part needs to be connected and each bag of the tree-decomposition forms a separator in $G$; therefore no edge can ``circumvent'' currently processed bag.
	
	The algorithm then proceeds as follows. In each node $x$ of the tree-decomposition, we try all possible partitions of vertices into opened parts and the number of past vertices for each opened part. Once a new vertex $v$ is introduced, we have three possibilities: create a new part consisting of only~$v$, put~$v$ into some existing opened part, or merge (via $v$) multiple already existing opened parts into a new one. When a vertex~$v$ is forgotten, we need to check whether $v$ is the last vertex of its part and, if yes, whether the part vertex $v$ is member of is of the correct size. In join nodes, we just merge two records from child nodes with the same partition of bag vertices.
	
	More precisely, in each node $x$ of the tree-decomposition, we store a single dynamic programming table $\operatorname{DP}_x[\mathcal{P},\sigma]$, where
	\begin{itemize}
		\item $\mathcal{P}\colon\beta(x)\to[\tw + 1]$ is a partition of bag vertices into opened parts,
		\item $\sigma\colon[\tw + 1]\to[\lceil n/p \rceil]$ is a vector containing for each opened part the number of past vertices from this part.
	\end{itemize}
	The value of every cell is $\mathtt{true}$ iff there exists a \emph{partial solution} $\pi^x_{\mathcal{P},\sigma} = (V_1,\ldots,V_k)$ such that
	\begin{enumerate}
		\item $\bigcup_i^k = V^x$,
		\item $V_i\cap V_j = \emptyset$ for each pair of distinct $i,j\in[k]$,
		\item each $V_i$ is connected,
		\item for each part $V_i$ with $V_i\cap\beta(x) = \emptyset$ (called closed part), the size of $V_i$ is either $\lfloor n/p \rfloor$ or $\lceil n/p \rceil$, and
		\item for each part $V_i$ with $V_i\cap\beta(x) \not= \emptyset$ (opened part) there exists $j\in[|\beta(x)|]$ such that $V_i\cap\beta(x) = \mathcal{P}^{-1}(j)$ and $|V_i| = |\mathcal{P}^{-1}(j)| + \sigma(j)$.
	\end{enumerate}
	Otherwise, the stored value is $\mathtt{false}$. Once the dynamic table is correctly filled, we ask whether the dynamic table for the root of the tree-decomposition stores \texttt{true} in its single cell.
	
	\begin{claim}\label{cl:TW:DP:tableSize}
		The size of the dynamic programming table $\DP_x$ for a node $x$ is $n^\Oh{\tw}$.
	\end{claim}
	\begin{claimproof}
		For a node $x$, there are $|\beta(x)|^\Oh{\tw} = \tw^\Oh{tw}$ possible partitions of bag vertices. For each partition, there are $\lceil n/p \rceil^\Oh{\tw} = n^\Oh{\tw}$ different vectors $\sigma$. This gives us $\tw^\Oh{\tw}\cdot n^\Oh{\tw} = n^\Oh{\tw}$ cells in total.
	\end{claimproof}
	
	Now, we describe how the computation works separately for each node type of the nice tree-decomposition.
	
	\subparagraph{Leaf Node.} Leaf nodes are by definition empty. Therefore, for every leaf node $x$, the only possible partition $\mathcal{P}$ is empty and also $\sigma$ is always a vector of zero length. That is, we set
	\[
	\operatorname{DP}_x[\mathcal{P},\sigma] = \begin{cases}
		\mathtt{true} & \text{if }\forall i \in [\tw + 1]\colon \sigma(i) = 0 \text{ and}\\
		\mathtt{false} & \text{otherwise.}
	\end{cases}
	\]
	since it clearly holds that each part is connected and each closed part is of the correct size.
	
	\begin{claim}\label{lem:leafRunTime}
		The computation in every leaf node can be done $n^\Oh{\tw}$ time.
	\end{claim}
	\begin{claimproof}
		For leaf nodes, it holds by definition that $\beta(x) = \emptyset$. Hence, $\mathcal{P}$ is always an empty set. Consequently, we just try all possible values of $\sigma$ and compute each cell in $\Oh{\tw}$ time. That is, the computation in leaf nodes takes $n^\Oh{\tw} \cdot \tw$ time.
	\end{claimproof}
	
	\subparagraph{Introduce Node.} Let $x$ be an introduce node introducing a vertex $v$ and let $y$ be its child. For a partition $\mathcal{P}$ we denote by $P_v$ the part in which $v$ is assigned in $\mathcal{P}$ and by $i$ the index of the record in $\sigma$ associated with $P_v$.
	\[
	\operatorname{DP}_x[\mathcal{P},\sigma] = \begin{cases}
		\DP_y[ \left.\mathcal{P}\right\vert_{\beta(x)\setminus\{v\}}, \sigma ] 
		& \text{if } \mathcal{P}^{-1}(\mathcal{P}(v)) = \{v\}\\&~~~~~~~~~\text{ and } \sigma(\mathcal{P}(v)) = 0,\\
		\mathtt{false}
		& \text{if } \mathcal{P}^{-1}(\mathcal{P}(v)) = \{v\}\\&~~~~~~~~~\text{ and } \sigma(\mathcal{P}(v)) > 0,\\
		\mathtt{false}
		& \text{if } \mathcal{P}^{-1}(\mathcal{P}(v)) \cap N(v) = \emptyset,\\
		\bigvee\limits_{\substack{
				\mathcal{P}'\colon\beta(x)\to[\tw+1],\ \sigma'\colon[\tw+1]\to[\lceil n/p \rceil]\\
				\forall u \not\in \mathcal{P}^{-1}(\mathcal{P}(v))\colon \mathcal{P}'(u) = \mathcal{P}(u)\\
				\forall w \in \mathcal{P}^{-1}(\mathcal{P}(v))\colon \mathcal{P}'(w) \in \{i\mid\mathcal{P}^{-1}(i) = \emptyset\}\cup\{\mathcal{P}(v)\}\\
				\forall i \in \{j\mid \mathcal{P}(j) = \emptyset\} \cup \{\mathcal{P}(v)\}\exists u\in\mathcal{P}'^{-1}(i)\colon u\in N(v)\\
				\forall i\in\{j\mid \mathcal{P}^{-1}(j) \not=\emptyset\}\setminus\{\mathcal{P}(v)\}\colon \sigma'(i) = \sigma(i)\\
				\forall i \in \{j\mid\mathcal{P}^{-1}(j) = \mathcal{P}'^{-1}(j) = \emptyset\}\colon \sigma'(i) = 0\\
				\sum_{i\in\{j\mid\mathcal{P}^{-1}(j) = \emptyset\}\cup\{\mathcal{P}(v)\}} \sigma'(i) = \sigma(\mathcal{P}(v))\\
				
		}} \DP_y\left[ \mathcal{P}', \sigma' \right] & \text{otherwise.}
	\end{cases}
	\]
	
	\begin{claim}\label{lem:introduceRunTime}
		The computation in every introduce node can be done in $n^\Oh{\tw}$ time.
	\end{claim}
	\begin{claimproof}
		The most time-consuming operation of the computation is clearly the last case of the recurrence. Here, we examine $\tw^\Oh{\tw}$ functions $\mathcal{P}'$ and, for each $\mathcal{P}'$, $n^\Oh{\tw}$ different functions $\sigma'$. Overall, the computation in a single cell takes $n^\Oh{\tw}$ time. By \Cref{cl:TW:DP:tableSize}, there are $n^\Oh{\tw}$ different cells and therefore, the table in introduce nodes can be filled in $n^\Oh{\tw}\cdot n^\Oh{\tw} = n^\Oh{\tw}$ time.
	\end{claimproof}
	
	\subparagraph{Forget Node.} When forgetting a vertex, we try to put it in all the opened parts. Separately, we take care of situations, where $v$ is the last vertex of part.
	\begin{align*}
		\DP_x[ \mathcal{P}, \sigma ] = 
		&\bigvee_{\substack{i\in [\tw+1]\\\mathcal{P}^{-1}(i) \not= \emptyset\\\sigma(i)\geq 1}} \DP_y\left[ \mathcal{P}\Join (v \mapsto i), \sigma \Join \left(i \mapsto (\sigma(i) - 1)\right) \right]\,\lor\\
		&\bigvee_{\substack{i\in[\tw+1]\\\mathcal{P}^{-1}(i) = \emptyset\\\sigma(i) = 0}} \bigvee_{s\in\left\{\lfloor\frac{n}{p}\rfloor,\lceil\frac{n}{p}\rceil\right\}} \DP_y\left[ \mathcal{P} \Join (v \mapsto i), \sigma \Join \left(i \mapsto (s - 1)\right) \right],
	\end{align*}
	where, for a function $f$, we use $f \Join (i\mapsto j)$ to denote function $f'$ which returns the same images as $f$ for all values except for the value $i$, for which the function returns the image $j$. More specifically, if the value $i$ is not in the domain of $f$, it is added and $j$ is $i$'s image in $f'$. If $i$ is in the domain of $f$, the original image $f(i)$ is replaced with $j$.
	
	\begin{claim}\label{lem:forgetRunTime}
		The computation in every forget node can be done $n^\Oh{\tw}$ time.
	\end{claim}
	\begin{claimproof}
		By \Cref{cl:TW:DP:tableSize}, there are $n^\Oh{\tw}$ different cells. The definition of the computation of the single cell, the value of each cell can be determined in $\Oh{\tw}$ time; we just need to try all non-empty parts of $\mathcal{P}$. Overall, the computation in every join node takes $n^\Oh{tw}\cdot \Oh{\tw} = n^\Oh{\tw}$ time.
	\end{claimproof}
	
	\subparagraph*{Join Node.} In join nodes, we may need to merge parts that are disconnected in both $G^y$ and $G^z$; however, become connected using the past vertices of the other subtree; see \Cref{fig:ECP:XP:tw:joinNode} for an illustration. Before we formally define the recurrence for join nodes, let us introduce an auxiliary notation. Let $\mathcal{P}$ be a partition of bag vertices and $\sigma$ be a vector containing the number of past vertices for each part. By $\mathbb{S}(\mathcal{P},\sigma)$, we denote the set of all \emph{splits}: quadruplets $(\mathcal{P}_y,\sigma_y,\mathcal{P}_z,\sigma_z)$ where for each non-empty $\mathcal{P}^{-1}(i)$, there exists $i_1,\ldots,i_\ell$ and $i'_1,\ldots,i'_{\ell'}$ such that $\mathcal{P}^{-1}(i) = \bigcup_{j=1}^{\ell} \mathcal{P}_y^{-1}(i_j) = \bigcup_{j=1}^{\ell'} \mathcal{P}_z^{-1}(i_j)$, $\sigma(i) = \sigma_y(i_1) + \cdots + \sigma_y(i_\ell) + \sigma_z(i_1) + \cdots + \sigma_z(i_{\ell'})$, $\forall j,j'\in[\ell]$ there exists a sequence $j_1,j_2,\ldots,j_k$, where $j_1 = j$, $j_k = j'$, $\forall \alpha\in\{1,3,\ldots,k\}\colon \mathcal{P}^{-1}_y(j_\alpha) \cap \mathcal{P}^{-1}_z(j_{\alpha+1}) \not=\emptyset$, $\forall \beta\in\{2,4,\ldots,k-1\}\colon \mathcal{P}^{-1}_z(j_\beta)\cap\mathcal{P}^{-1}_y(j_{\beta+1}) \not= \emptyset$, and similarly for all pairs of $j,j'\in[\ell']$ and $j\in[\ell]$ and $j'\in[\ell']$. Intuitively, splits describe all possible situations how each part of $\mathcal{P}$ can be created by merging smaller parts in sub-trees while keeping the part connected. The computation is formally defined as follows.
	
	\[
	\DP_x[ \mathcal{P}, \sigma ] = \bigvee_{(\mathcal{P}_y,\sigma_y,\mathcal{P}_z,\sigma_z)\in \mathbb{S}(\mathcal{P},\sigma)} \DP_y[ \mathcal{P}_y, \sigma_y ] \land \DP_z[ \mathcal{P}_z, \sigma_z ].
	\]
	
	\begin{figure}[bt!]
		\centering
		\begin{tikzpicture}
			\draw (0,0) ellipse (2cm and 1cm);
			\node[draw,circle] (v4) at (1.5,0) {$v_4$};
			\node[draw,circle] (v3) at (0.5,0) {$v_3$};
			\node[draw,circle] (v2) at (-0.5,0) {$v_2$};
			\node[draw,circle] (v1) at (-1.5,0) {$v_1$};
			\node at (2.5,0.5) {$\beta(x)$};
			
			\draw (2,0) to[out=-50,in=-80,distance=5cm] (-2,0);
			\draw (-2,0) to[out=-130,in=-100,distance=5cm] (2,0);
			\node at (-3,-3) {$G^y$};
			\node at (3,-3) {$G^z$};
			
			\draw[red,snake it] (-2.5,-2) -- (v1);
			\draw[blue,snake it] (v2) to[out=-140,in=-90,distance=4cm] (v3);
			\draw[orange, snake it] (v2) to[out=-30,in=-70,distance=4.75cm] (v1);
			\draw[green, snake it] (v3) to[out=-70,in=-60,distance=3cm] (v4);
		\end{tikzpicture}
		\caption{An illustration of merging different parts into a new one in join nodes. Suppose that, in $\pi_y$, the vertex $v_1$ is in a red part, vertices $v_2$ and $v_3$ are together in a blue part, and $v_4$ is in singleton part without any past vertices. Moreover, in $\pi_z$, vertices $v_1$ and $v_2$ are in an orange part and $v_3$ and $v_4$ are in a green part. Observe that in $G^y$ and $G^z$, respectively, they cannot form one part as their are completely disconnected in the respective subtrees. However, in $G^x$, we can merge them as the connectivity is now secured via past vertices in the other subtree.}
		\label{fig:ECP:XP:tw:joinNode}
	\end{figure}
	
	\begin{claim}\label{lem:joinRunTime}
		The computation in every join node can be done $n^\Oh{\tw}$ time.
	\end{claim}
	\begin{claimproof}
		By \Cref{cl:TW:DP:tableSize}, there are $n^\Oh{\tw}$ different cells. In the worst case, there are $(\tw^\Oh{\tw}\cdot n^\Oh{\tw})^2 = n^\Oh{\tw}$ differents splits of $\mathcal{P}$ and $\sigma$ and children table look-up for one split is constant operation. Overall, the running time in join nodes is $n^\Oh{\tw}\cdot n^\Oh{\tw} = n^\Oh{\tw}$.
	\end{claimproof}
	
	\medskip
	
	Now, we show that our algorithm is correct. We use two auxiliary lemmas. First, we show that whenever our dynamic programming table stores \texttt{true}, there is a corresponding partial solution.
	
	\begin{lemma}\label{thm:ECP:XP:tw:trueVal}
		For every node $x$, if $\DP_x[ \mathcal{P}, \sigma ] = \mathtt{true}$, then there exists a partial solution~$\pi^x_{\mathcal{P},\sigma}$ compatible with $(\mathcal{P},\sigma)$. 
	\end{lemma}
	\begin{proof}
		We prove the lemma by a bottom-up induction on the tree decomposition of $G$. 
		First, let $x$ be a leaf node. Since $\DP_x[\mathcal{P},\sigma]$ is $\mathtt{true}$, it holds that $\mathcal{P}^{-1}(i) = \emptyset$ and $\sigma(i) = 0$ for every $i\in[\tw+1]$. Clearly, an empty partition $\pi^x_{\mathcal{P},\sigma}$ is a partial solution compatible with $(\mathcal{P},\sigma)$, as $V^x = \emptyset$. Therefore, for leaves the lemma holds. Let us now suppose that the lemma holds for all children of node $x$.
		
		Let $x$ be an introduce node with a single child $y$ introducing vertex $v$. Since $\DP_x[\mathcal{P},\sigma]$ is $\mathtt{true}$, there exist $(\mathcal{P}',\sigma')$ such that $\DP_y[\mathcal{P}',\sigma'] = \mathtt{true}$ which caused that $\DP_x[\mathcal{P},\sigma]$ is $\mathtt{true}$. By the induction hypothesis, there exists a partial solution $\pi^y_{\mathcal{P}',\sigma'}$ compatible with $(\mathcal{P}',\sigma')$. First, suppose that $\mathcal{P} = \mathcal{P}' \Join (v \mapsto i)$ for some $i\in[\tw+1]$ such that $\mathcal{P}'^{-1}(i) = \emptyset$. Then it holds that $\sigma' = \sigma$. We create $\pi^x_{\mathcal{P},\sigma}$ by adding new part containing only the vertex $v$ into $\pi^y_{\mathcal{P}',\sigma'}$. All the copied part all connected and of correct sizes and the new part has no past vertices, which is consistent with $\sigma$. Moreover, part with one vertex is always connected, so $\pi^x_{\mathcal{P},\sigma}$ is correct. In the remaining case, there exist $j_1,\ldots,j_k$ such that $\mathcal{P}^{-1}(\mathcal{P}(v)) = \mathcal{P}'^{-1}(j_1) \cup \cdots \cup \mathcal{P}'^{-1}(j_k)$ and $\mathcal{P}^{-1}(j_1) = \cdots = \mathcal{P}^{-1}(j_k) = \emptyset$, with a possible expectation of $j_\ell = i$. The partition of all remaining bag vertices is the same in both $\mathcal{P}$ and $\mathcal{P}'$. Without loss of generality, let $\mathcal{P}'^{-1}(j_\ell)$ are in part $V_{j_\ell}$ in $\pi^y_{\mathcal{P}',\sigma'}$. For every $j\not\in\{j_1,\ldots,j_k\}$, we add to $\pi^x_{\mathcal{P},\sigma}$ all parts $V_j$ from $\pi^y_{\mathcal{P}',\sigma'}$. Finally, we add one new part $V = \bigcup_{i=1}^k V_{j_i}$. It holds that $|V| = \sigma'(j_1) + \cdots + \sigma'(j_k) = \sigma(i)$, that is, the size of $V$ is correct. For the connetivity, let $u$ and $w$ be two past vertices. If $u,w\in V_{j}$, $j\in[k]$, then they are connected as $V_j$ is connected. Therefore, let $u\in V_j$ and $w\in V_{j'}$ for some distinct $j,j'\in\{j_1,\ldots,j_k\}$. Let $u'\in\beta(x)\cap V_j \cap N(v)$ and $w'\in\beta(x) \cap V_{j'} \cap N(v)$ (such $u'$ and $w'$ exists by the definition of the computation). Clearly, there exists $u,u'$-path and $w',w$-path, so by connecting these two paths via $v$, we obtain that $V$ is indeed connected.
		
		Next, let $x$ be a forget node with a single child $y$ and let $v$ be the forgotten vertex. Suppose that $\DP_x[\mathcal{P},\sigma]$ is $\mathtt{true}$. Then, by the definition of the computation, there exists $(\mathcal{P'},\sigma')$ such that $\DP_y[\mathcal{P}',\sigma'] = \mathtt{true}$ and which caused that $\DP_x[\mathcal{P},\sigma]$ is set to $\mathtt{true}$. By the induction hypothesis, there exists a partition $\pi^y_{\mathcal{P'},\sigma'}$ which is a partial solution compatible with $(\mathcal{P}',\sigma')$. We claim that by setting $\pi^x_{\mathcal{P},\sigma} = \pi^y_{\mathcal{P}',\sigma'}$, we obtain a partial solution compatible with $(\mathcal{P},\sigma)$. Since, $V^x = V^y$, $\pi^x_{\mathcal{P},\sigma}$ clearly partitions the whole $V^x$. Let $V_i$ be a part which $v$ is part of in $\pi^y_{\mathcal{P},\sigma'}$ and, without loss of generality, let $\mathcal{P}'(v) = i$. Clearly, all closed parts remain valid and so remain the opened parts except for $V_i$. First, suppose that $\beta(x)\cap V_i = \emptyset$, that is, $V_i$ is now a closed part. It follows that $\beta(y)\cap V_i = \{v\}$ and also $|V_i| \in \{\lfloor n/p \rfloor,\lceil n/p \rceil\}$ by the definition of the computation. Hence, $\pi^x_{\mathcal{P},\sigma}$ is a partial solution compatible with $(\mathcal{P},\sigma)$. It remains to show the correctness for the case when $\beta(x)\cap V_i \not= \emptyset$. In this case, by the definition of the computation, it holds that $\mathcal{P'} = \mathcal{P} \Join (v \mapsto i)$ and $\sigma' = \sigma \Join (i \mapsto (\sigma(i) - 1))$. Then $V_i$ is still opened part. Connectedness of $V_i$ is preserved and the number of past vertices is $\sigma'(i) + 1$. Therefore, even in this case a partial solution exists, finishing the correctness for forget nodes.
		
		The last remaining case is when $x$ is a join node with exactly two children $y$ and $z$. Since $\DP_x[\mathcal{P},\sigma]$ is $\mathtt{true}$, there exists a split $(\mathcal{P}_y,\sigma_y,\mathcal{P}_z,\sigma_z)$ such that both $\DP_y[\mathcal{P}_y,\sigma_y]$ and $\DP_z[\mathcal{P}_z,\sigma_z]$ are $\mathtt{true}$. By the induction hypothesis, there exist partial solutions $\pi^y_{\mathcal{P}_y,\sigma_y}$ and $\pi^z_{\mathcal{P}_z,\sigma_z}$ compatible with $(\mathcal{P}_y,\sigma_y)$ and $(\mathcal{P}_z,\sigma_z)$, respectively. Recall that since $(\mathcal{P}_y,\sigma_y,\mathcal{P}_z,\sigma_z)$ is a split, it holds that $\forall i\in[\tw+1]$ there exists $i_1,\ldots,i_\ell$ and $i'_1,\ldots,i'_{\ell'}$ such that $\mathcal{P}^{-1}(i) = \bigcup_{j=1}^\ell \mathcal{P}_y^{-1}(i_j) = \bigcup_{j=1}^{\ell'} \mathcal{P}_z^{-1}(i_j)$, there is a non-empty intersection between, and $\sigma(i) = \sigma_y(i_1) + \cdots + \sigma_y(i_\ell) + \sigma_z(i_1) + \cdots + \sigma_z(i_{\ell'})$. We create a partition $\pi^x_{\mathcal{P},\sigma}$ as follows. We copy all closed parts of $\pi^y_{\mathcal{P}_y,\sigma_y}$ and $\pi^y_{\mathcal{P}_y,\sigma_y}$. As all closed parts were of correct size and connected, they remains so also in the partition $\pi^x_{\mathcal{P},\sigma}$. Now, we define the opened parts. Let $i\in [\tw+1]$ be an integer such that $\mathcal{P}^{-1}(i) \not= \emptyset$ and let $i_1,\ldots,i_\ell$ and $i'_1,\ldots,i'_{\ell'}$ are the indices of the corresponding parts in $\mathcal{P}_y$ and $\mathcal{P}_z$. Without loss of generality, we assume that for every $j\in[\ell]$, the vertices of $\mathcal{P}_y^{-1}(i_j)$ are members of $V'_{i_j}$ in $\pi^y_{\mathcal{P}_y,\sigma_y}$, and for every $j\in[\ell']$, the vertices of $\mathcal{P}_z^{-1}(i_j)$ are members of $V''_{i_j}$ in $\pi^z_{\mathcal{P}_z,\sigma_z}$. We add to $\pi^x_{\mathcal{P},\sigma}$ a new part $V$ created as union of all parts corresponding to parts that are merged in $\mathcal{P}(i)$. Formally, $V = \bigcup_{j=1}^{\ell} V'_{i_j} \cup \bigcup_{j=1}^{\ell'} V''_{i_j}$. It holds that $|V| = \sigma_y(i_1) + \cdots + \sigma_y(i_\ell) + \sigma_z(i'_1) + \cdots + \sigma_z(i'_{\ell'}) = \sigma(i)$. Hence, the size of this part is clearly correct. What remains to show is that the part $V$ is connected. Let $u$ and $w$ be two past vertices. First, suppose that $u,w\in V^y$. If additionally both $u$ and $w$ belongs to the same $V'\in\pi^y_{\mathcal{P}_y,\sigma_y}$, then there clearly exists an $u,w$-path inside $V$ by the induction hypothesis. Therefore, let $u\in V_i$ and $v\in V_j$ for distinct $V_i,V_j\in\pi^y_{\mathcal{P}_y,\sigma_y}$. Let $u'\in V_i\cap\beta(y)$ and $v'\in V_j\cap\beta(y)$. By the definition of split, there exists a sequence $j_1,\ldots,j_k$, with $j_1 = \mathcal{P}_y(u')$, $j_k = \mathcal{P}_y(v')$, and such that $P_y^{-1}(j_\alpha)\cap P_z^{-1}(j_{\alpha+1}) \not=\emptyset$ and $P_z^{-1}(j_\beta)\cap P_y^{-1}(j_{\beta+1})\not=\emptyset$. As each opened part is connected, there exists a sequence of subpaths using the vertices in non-empty intersections of consecutive parts $j_1,j_2,\ldots,j_k$. The same argumentation holds also for $u,w\in V^z$ and $u\in V^y$ and $w\in V^z$. This finishes the proof. 
	\end{proof}
	
	Next, we show the opposite implication, that is, if there is a partial solution in a sub-tree rooted in some node $x$, then the table for the corresponding signature always stores \texttt{true}.
	
	\begin{lemma}\label{thm:ECP:XP:tw:falseVal}
		For every node $x$ and every pair $(\mathcal{P},\sigma)$, if there does exist a partial solution~$\pi^x_{\mathcal{P},\sigma}$ compatible with $(\mathcal{P},\sigma)$, then $\DP_x[ \mathcal{P}, \sigma ] = \mathtt{true}$.
	\end{lemma}
	\begin{proof}
		We again prove the lemma by a bottom-up induction on the tree decomposition of~$G$. 
		We start by showing that the lemma holds for leaves. 
		Let $\pi^x_{\mathcal{P},\sigma}$ be a partial solution compatible with $(\mathcal{P},\sigma)$. Since $V^x=\emptyset$, also $\pi^x_{\mathcal{P},\sigma}$ is an empty partition. Therefore, $\mathcal{P}^{-1}(i) = \emptyset$ and $\sigma(i)=0$ for every $i\in[\tw+1]$. However, in this case, we set $\DP_x[\mathcal{P},\sigma] = \mathtt{true}$ by the definition of the computation. Hence, for leaves, the lemma clearly holds. Let us now suppose that the lemma holds for all children of node $x$.
		
		Let $x$ be an introduce node with exactly one child $y$, let $\{v\} = \beta(x)\setminus\beta(y)$, and let $\pi^x_{\mathcal{P},\sigma}$ be a partial solution in $x$. First, suppose that $v$ is in a singleton part in $\pi^x_{\mathcal{P},\sigma}$. If we remove the singleton part the vertex $v$ is part of, we obtain a solution partition $\pi^y_{\mathcal{P}',\sigma}$, where $\mathcal{P} = \mathcal{P}' \Join (v \mapsto i)$ for some $i$ such that $\mathcal{P}'^{-1}(i) = \emptyset$. By the induction hypothesis, $\DP_y[\mathcal{P'},\sigma]$ is $\mathtt{true}$ and therefore, also $\DP_x[\mathcal{P},\sigma]$ is $\mathtt{true}$ by the definition of the computation. Now, let $v$ be in a part $V_i$ which contains more vertices. Clearly, there is at least one other bag vertex, as otherwise, $V_i$ would not be connected. We remove $v$ from $V_i$ and add each connected component that arise from removal of $v$ from $V_i$ and replace with them the part $V_i$ in $\pi^x_{\mathcal{P},\sigma}$ to obtain a partial solution $\pi^y_{\mathcal{P}',\sigma'}$. By induction hypothesis, $\DP_y[\mathcal{P}',\sigma']$ is $\mathtt{true}$ and consequently, also $\DP_x[\mathcal{P},\sigma]$ is necessarily set to $\mathtt{true}$.
		
		Next, let $x$ be a forget node with a single child $y$, let $v$ be the forgotten vertex, and let~$\pi^x_{\mathcal{P},\sigma}$ be a partial solution. Moreover, let $V_i$ be a part the vertex $v$ is member of. First, suppose that $V_i$ is a closed part. It follows that in $y$, there exists a $\mathcal{P}' = \mathcal{P} \Join (v \mapsto i)$ and $\sigma' = \sigma \Join (i \mapsto (|V_i| - 1))$ such that $\pi^x_{\mathcal{P},\sigma}$ is compatible with $(\mathcal{P}',\sigma')$. By induction hypothesis, $\DP_y[\mathcal{P}',\sigma']$ is set to $\mathtt{true}$. However, by the definition of the computation, also $\DP_x[\mathcal{P},\sigma]$ is set to $\mathtt{true}$. Next, let $V_i$ be an opened part and, without loss of generality, let $V_i\cap\beta(x) = \mathcal{P}^{-1}(i)$, where $\mathcal{P}^{-1}(i) \not= \emptyset$. Consequently, $\pi^x_{\mathcal{P},\sigma}$ is also a partial solution for a signature $(\mathcal{P}' = \mathcal{P} \Join (v \mapsto i), \sigma' = \sigma \Join (i \mapsto (\sigma(i) - 1)))$. By the induction hypothesis, $\DP_y[ \mathcal{P}', \sigma' ]$ is $\mathtt{true}$, and therefore, by the definition of computation, also $\DP_x[\mathcal{P},\sigma]$ is $\mathtt{true}$. This finishes the correctness for forget nodes.
		
		Finally, let $x$ be a join node with exactly two children $y$ and $z$ such that $\beta(x) = \beta(y) = \beta(z)$, and let~$\pi^x_{\mathcal{P},\sigma}$ be a partial solution. 
		We create $\pi^y_{\mathcal{P}_y,\sigma_y}$ by removing all past vertices appearing only in $V^z$, and similarly we create $\pi^z_{\mathcal{P}_z,\sigma_z}$. As no closed part in $\pi^x_{\mathcal{P},\sigma}$ may consists of vertices from both $V^y$ and $V^z$, then all closed parts in $\pi^y_{\mathcal{P}_y,\sigma_y}$ and $\pi^z_{\mathcal{P}_z,\sigma_z}$ are of correct size and are connected. Moreover, for each component opened part that is now disconnected in $\pi^y_{\mathcal{P}_y,\sigma_y}$ or $\pi^z_{\mathcal{P}_z,\sigma_z}$, we create its own opened part. By the induction hypothesis, both $\DP_y[\mathcal{P}_y,\sigma_y]$ and $\DP_z[\mathcal{P}_z,\sigma_z]$ are $\mathtt{true}$ and, by the definition of the computation, also $\DP_x[\mathcal{P},\sigma]$ is necessarily set to $\mathtt{true}$. This finishes the proof of the lemma.
	\end{proof}
	
	By \Cref{thm:ECP:XP:tw:trueVal,thm:ECP:XP:tw:falseVal}, we obtain that the dynamic programming table for each node and each signature stores $\mathtt{true}$ if and only if there is a corresponding partial solution. Therefore, the computation, as was described, is correct. By \Cref{cl:TW:DP:tableSize}, the size of each dynamic programming table is $n^\Oh{\tw}$, and each table can be filled in the same time (see \Cref{lem:leafRunTime,lem:introduceRunTime,lem:forgetRunTime,lem:joinRunTime}). Therefore, the algorithm runs in $\Oh{n\cdot \tw}\cdot n^\Oh{\tw}\cdot n^\Oh{\tw}$ time, which is indeed in \XP.
\end{proof}

Observe that if the size of every part is bounded by a parameter $\varsigma$, the size of each dynamic programming table is $\tw^\Oh{\tw}\cdot\varsigma^\Oh{\tw}$ and we need the same time to compute each cell. Therefore, the algorithm also shows the following tractability result.

\begin{corollary}
	The \ECP problem is fixed-parameter tractable parameterised by the tree-width $\tw$ and the size of a large part $\varsigma = \lceil n/p \rceil$ combined.%
\end{corollary}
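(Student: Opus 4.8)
The plan is to reuse the dynamic-programming algorithm from the proof of \Cref{thm:ECP:XP:tw} essentially verbatim, noting that the sole source of the $n^\Oh{\tw}$ factor in its running time is the range of the vector $\sigma$. By definition $\sigma$ maps $[\tw+1]$ into $[\lceil n/p\rceil]$, and in the \XP analysis we merely bounded $\lceil n/p\rceil \le n$. When $\varsigma = \lceil n/p\rceil$ is taken as an additional parameter, there are instead only $\varsigma^{\tw+1} = \varsigma^\Oh{\tw}$ admissible vectors $\sigma$, so no cell index depends on $n$ anymore.

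First I would recompute the table size: the number of partitions $\mathcal{P}\colon\beta(x)\to[\tw+1]$ is still $\tw^\Oh{\tw}$, so the analogue of \Cref{cl:TW:DP:tableSize} gives a table of size $\tw^\Oh{\tw}\cdot\varsigma^\Oh{\tw}$, which is a function of the combined parameter alone. Next I would re-examine the per-cell running time in each node type exactly as in \Cref{lem:leafRunTime,lem:introduceRunTime,lem:forgetRunTime,lem:joinRunTime}, replacing every occurrence of $n$ by $\varsigma$: the inner enumeration over $\sigma'$ in introduce nodes contributes $\varsigma^\Oh{\tw}$, and the number of splits in join nodes becomes $(\tw^\Oh{\tw}\cdot\varsigma^\Oh{\tw})^2$, still bounded by a function of the parameters. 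Hence every table can be filled in $(\tw\cdot\varsigma)^\Oh{\tw}$ time.

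Finally, since the number of nodes of the nice tree decomposition is $\Oh{\tw\cdot n}$ and a decomposition of optimal width can be obtained in $2^\Oh{\tw^2}\cdot n^\Oh{1}$ time as noted in the proof of \Cref{thm:ECP:XP:tw}, the overall running time is $f(\tw,\varsigma)\cdot n^\Oh{1}$ for a computable $f$, which establishes fixed-parameter tractability. Correctness is inherited directly from \Cref{thm:ECP:XP:tw:trueVal,thm:ECP:XP:tw:falseVal}, as restricting the range of $\sigma$ only discards signatures that can never correspond to a partial solution. There is essentially no genuine obstacle here beyond the bookkeeping of confirming that no step secretly depends on $n$; the only point meriting a second look is the split enumeration in join nodes, which stays within $f(\tw,\varsigma)$ precisely because both the $\mathcal{P}$- and $\sigma$-components of every split range over the same parameter-bounded sets.
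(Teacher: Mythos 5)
Your proposal is correct and follows exactly the paper's own (much terser) argument: the paper simply observes that bounding the part size by $\varsigma$ shrinks the table to $\tw^\Oh{\tw}\cdot\varsigma^\Oh{\tw}$ cells with the same per-cell cost, yielding \FPT. Your more detailed walk through the node types and the split enumeration is a faithful elaboration of that same observation, with no divergence in approach.
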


In other words, the \ECP problem becomes tractable if the tree-width is bounded and the number of parts is large.

So far, we investigated the complexity of the problem mostly with respect to structural parameters that are bounded for sparse graphs. Now, we turn our attention to parameters that our bounded for dense graphs. Note that such parameters are indeed interesting for the problem, as the problem becomes polynomial-time solvable on cliques. We were not able to find this result in the literature, and, therefore, we present it in its entirety.
	
\begin{observation}
	\label{thm:P:clique}
	The \ECP problem can be solved in linear time if the graph $G$ is a clique.
\end{observation}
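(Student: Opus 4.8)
The plan is to observe that on a clique the connectivity requirement becomes vacuous, so that \ECPshort collapses to a purely arithmetic feasibility question about balanced set partitions.

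First, since $G$ is a clique, every non-empty vertex subset $V_i\subseteq V$ induces a connected subgraph $G[V_i]$, as any two of its vertices are adjacent. Hence the only genuine constraints are the equitability condition $||V_i|-|V_j||\le 1$ together with the requirement that $V_1,\ldots,V_p$ form a partition of $V$ into non-empty parts.

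Next, I would reduce the question to simple counting. Writing $q=\lfloor n/p\rfloor$ and $r=n\bmod p$, I claim that a valid partition exists if and only if $p\le n$. Indeed, if $p\le n$ then $q\ge 1$, and taking exactly $r$ parts of size $q+1=\lceil n/p\rceil$ together with $p-r$ parts of size $q=\lfloor n/p\rfloor\ge 1$ yields $p$ non-empty parts whose sizes differ by at most one and whose total size is $r(q+1)+(p-r)q=pq+r=n$. Conversely, if $p>n$ then $p$ pairwise disjoint non-empty parts cannot be carved out of only $n$ vertices, so the instance is a \NoI.

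Finally, the algorithm merely compares $p$ with $n$ and, in the affirmative case, distributes the vertices into the prescribed group sizes in a single linear scan; the running time is therefore $\Oh{n}$. The only point requiring care is the treatment of the boundary cases: confirming that the $\lfloor n/p\rfloor$-sized parts are genuinely non-empty (which is exactly the condition $q\ge 1$) and disposing of the degenerate regime $p>n$ as a \NoI. Since no structural insight beyond this case analysis is needed, there is essentially no real obstacle in the argument.
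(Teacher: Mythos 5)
Your argument is correct and matches the paper's own proof: both observe that connectivity is vacuous on a clique and then simply assign $n \bmod p$ parts of size $\lceil n/p\rceil$ and the remaining parts of size $\lfloor n/p\rfloor$ in a single linear pass. Your extra handling of the degenerate case $p>n$ is a harmless addition the paper leaves implicit.
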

\begin{proof}
	First, we determine the number of parts of size $\lceil n/p \rceil$ as $\ell = (n\bmod p)$, and the number of smaller parts of size $\lfloor n/p \rfloor$ as $s = p - \ell$. Now, we arbitrarily assign vertices to $p$ parts such that the first $\ell$ parts contain $\lceil n/p \rceil$ vertices and the remaining $s$ parts contain exactly $\lfloor n/p \rfloor$ vertices. This, in fact, creates an equitable partition. Moreover, every partition is connected, since each pair of vertices is connected by an edge in $G$.
\end{proof}
	
Following the usual approach of distance from triviality~\cite{AgrawalR2022,GuoHN2004}, we study the problem of our interest with respect to the distance to clique. We obtain the following tractability result.

\begin{theorem}
	\label{thm:ECP:FPT:distClique}
	The \ECP problem is fixed-parameter tractable when parameterised by the distance to clique~$k$.
\end{theorem}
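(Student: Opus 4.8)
Let $k=\operatorname{dc}(G)$ and let $K\subseteq V$ be a modulator such that $G-K$ is a clique; denote $Q=V\setminus K$ with $|Q|=q$. The key structural observation is that the vertices of the clique $Q$ are essentially interchangeable except for how they attach to $K$: two vertices $u,v\in Q$ with the same neighbourhood in $K$ (i.e.\ $N(u)\cap K = N(v)\cap K$) play symmetric roles. Hence I would first partition $Q$ according to the at most $2^{k}$ possible \emph{attachment types} $S\subseteq K$, writing $q_S$ for the number of clique vertices whose neighbourhood in $K$ equals $S$. This reduces the instance to the multiset data $(K, \{q_S\}_{S\subseteq K})$ of size bounded by a function of $k$ plus the counts $q_S\le n$.

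The plan is to guess the coarse structure of the solution and verify the rest by an integer program whose dimension depends only on $k$. First I would guess, by exhaustive enumeration over at most $k^{k}$ possibilities, the partition $\{K_1,\dots,K_r\}$ of the modulator vertices among the parts that contain at least one modulator vertex (here $r\le k$). Every part is either one of these $r$ \emph{modulator parts} or a \emph{clique-only part} whose vertices lie entirely inside $Q$; since $G[Q]$ is a clique, any clique-only part is automatically connected, so the only thing to verify for those is their sizes. Because all part sizes are $\lfloor n/p\rfloor$ or $\lceil n/p\rceil$, the number of clique-only parts of each of the two admissible sizes is forced once the total number of clique vertices used by modulator parts is fixed.

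Next I would set up a bounded-dimension ILP (solvable by Lenstra's algorithm, or alternatively phrased as a small $N$-fold IP via \Cref{prop:n_fold_algo}) with integer variables $y_{i,S}$ recording how many clique vertices of attachment type $S$ are placed into modulator part $i$, for $i\in[r]$ and $S\subseteq K$. This gives $r\cdot 2^{k}\le k\cdot 2^{k}$ variables, a function of $k$ only. The constraints are: supply constraints $\sum_{i} y_{i,S}\le q_S$ for each type $S$; size constraints forcing each modulator part to reach exactly $\lfloor n/p\rfloor$ or $\lceil n/p\rceil$ (the chosen target size being one more guess per part, at most $2^{k}$ total); a balance constraint ensuring the leftover clique vertices $q_S-\sum_i y_{i,S}$ can be grouped into clique-only parts of the two admissible sizes; and \emph{connectivity} constraints demanding that each modulator part $K_i$, together with the attached clique vertices, induces a connected graph. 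The connectivity constraint is the subtle point: a modulator part is connected as soon as its induced graph on $K_i$ plus \emph{any} attached clique vertices is connected, and a single attached clique vertex of type $S$ links together all modulator vertices in $S\cap K_i$ as well as bridging through $Q$; I would encode ``part $i$ is connected'' as a guessed spanning structure on $K_i$ whose missing adjacencies are each witnessed by the presence of at least one attached clique vertex of a suitable type, i.e.\ a constraint of the form $\sum_{S\ni u,v} y_{i,S}\ge 1$ for each modulator edge we commit to route through $Q$, exactly as in the connectivity constraints \eqref{eq:ECP:FPT:vi:NFold:connectivity} of \Cref{thm:ECP:FPT:vi}.

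The main obstacle, as in \Cref{thm:ECP:FPT:vi}, is handling connectivity cleanly inside a program of bounded dimension: I must first guess which missing connections among the modulator vertices of each part are to be realised through $Q$ (at most $2^{\binom{k}{2}}$ guesses overall) and only then let the ILP decide the type-by-type counts, so that connectivity becomes the linear lower-bound constraints above rather than a global combinatorial condition. Once all guesses are fixed, feasibility of the resulting bounded-variable integer program is decidable in \FPT time in $k$, and the total running time is the product of the $2^{\operatorname{poly}(k)}$ guesses with the ILP-solving time, all of the form $f(k)\cdot n^{\Oh{1}}$, establishing fixed-parameter tractability.
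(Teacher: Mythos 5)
Your overall architecture matches the paper's: classify the clique vertices into at most $2^k$ types by their neighbourhood in the modulator, guess the partition of the modulator ($k^k$ branches), guess the target size of each modulator part, guess how connectivity is realised, and then observe that within a type the clique vertices are interchangeable so only counts matter. (The paper finishes with a greedy filling argument plus \Cref{thm:P:clique} rather than an ILP, but a bounded-dimension ILP would serve equally well here.) However, there is a genuine flaw in your connectivity encoding. The constraint $\sum_{S\ni u,v} y_{i,S}\ge 1$ demands a \emph{single} clique vertex adjacent to both $u$ and $v$ for every modulator pair you commit to route through $Q$. This is strictly stronger than what connectivity requires: because $Q$ is a clique, \emph{all} clique vertices placed into part $i$ are mutually adjacent, so $u$ and $v$ are already connected if $u$ has some private clique neighbour $w_1$ in the part and $v$ has some different private clique neighbour $w_2$ in the part, even when no attached vertex sees both. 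Concretely, take $K_i=\{u,v\}$ with $\{u,v\}\notin E$, a clique vertex $w_1$ of type $\{u\}$ and a clique vertex $w_2$ of type $\{v\}$: the part $\{u,v,w_1,w_2\}$ is connected via the path $u\text{--}w_1\text{--}w_2\text{--}v$, yet your constraint is infeasible since $q_S=0$ for every $S\supseteq\{u,v\}$. Your algorithm would therefore reject yes-instances. The pairwise form is borrowed from \eqref{eq:ECP:FPT:vi:NFold:connectivity}, but there the variables range over whole \emph{configurations} of a piece, which can certify an internal $u$--$v$ path; your per-type counters cannot.

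The fix is cheap and brings you in line with what the paper actually does: treat the clique portion of part $i$ as a single auxiliary super-node (it is internally connected for free), guess for each $u\in K_i$ whether $u$ attaches to its part via modulator edges or via some clique vertex, check connectivity of $K_i$ plus the super-node combinatorially outside the ILP, and impose only the per-vertex constraints $\sum_{S\ni u} y_{i,S}\ge 1$ for each $u$ guessed to attach through $Q$. With that correction the remaining ingredients of your program (supply, size, and leftover-balance constraints, with $r\cdot 2^k$ variables solved by Lenstra) are sound, and the claimed \FPT bound follows.
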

\begin{proof}
	We partition $V$ into the clique $K$ and the modulator $M$ where $|M|=k$.
	We label each vertex in $K$ by its set of neighbours in $M$.
	These labels partition $K$ into no more than~$2^k$ groups, each group with its own label.
	Note that each group consists of mutual siblings.
	We design a branching procedure where each decision is either forced or explored completely.
	Hence, whenever any part grows to more than $\lceil n/p \rceil$ vertices, we discard that branch.
	
	To create an equitable connected partition, we need to tackle the parts that overlap $M$.
	First, we branch into every possibility of partitioning the modulator $M$.
	This gives us Bell number of branches (at most $(\frac{0.792k}{\ln(k+1)})^k$ \cite{BerendT2000}), which we upper bound by $k^k$.
	Next, we branch into at most $2^k$ decisions on whether each of the modulator parts has size $\lceil n/p \rceil$ or $\lfloor n/p \rfloor$.
	In each of these branches we need to ensure that every modulator part is connected -- not only to other vertices within $M$ but also to the rest of the part in $K$ that will be assigned to the part later.
	We branch for each vertex $u \in M$ whether it is connected to its part via modulator edges or through some vertex of $K$.
	As we noted, each group of $K$ consists of siblings, so membership of particular vertices of $K$ to the parts may be arbitrarily permuted.
	Moreover, it is clear that for a fixed group, if a part in $M$ also contains a vertex of the group, then it does not need to contain more than one such vertex to ensure connectedness.
	Hence, this step creates at most $(2^k+1)^k$ branches.
	We discard branches that have too few vertices in groups of $K$ to satisfy the demands of our connection decisions.
	Finally, if the modulator parts are still disconnected, then discard the branch.
	
	We now need to assign the remaining vertices of $K$.
	First to fill up the modulator parts to the predetermined size and then to create parts that lie entirely within $K$.
	We follow with an argument that shows that the filling up can be done greedily:
	Assume that $S$ is the partition we seek.
	Then for any part $P \in S$ that contains vertices of $K$ we can find a spanning tree $T$ of $P$ and root it at an arbitrary vertex of $P \cap K$.
	Every vertex of $P \cap M$ has in $T$ a parent -- which can be chosen in the procedure that ensures connectivity.
	In a branch with such choices for all parts of $S$ we know that if the part intersects $K$ then it has already some vertex of $K$ assigned to it.
	Hence, by assigning vertices of $K$ to the modulator parts greedily we will not skip the solution.
	The rest -- creating parts from the vertices of the remaining clique -- is done using \Cref{thm:P:clique}.
	
	As each part that goes over all vertices can be done in linear time, the final time complexity of the algorithm is $k^k \cdot 2^k \cdot (2^k+1)^k \cdot \Oh{n+m} = 2^\Oh{k^2}\cdot (n+m)$, which is indeed \FPT.
\end{proof}

Using similar arguments as in \Cref{thm:P:clique}, we can prove polynomial-time solvability by providing an algorithm for a more general class of graphs than cliques. Namely, we provide a tractable algorithm for co-graphs.

\begin{theorem}
	\label{thm:P:cograph}
	The \ECP problem can be solved in polynomial time if the graph $G$ is a co-graph.
\end{theorem}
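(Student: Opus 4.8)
The plan is to exploit the recursive structure of a co-graph through its \emph{cotree}. First I would compute the cotree of $G$ in linear time; since $G$ is connected its root is a \emph{join} node, and every internal node is either a join ($A * B$, all edges present between the two sides) or a \emph{union} ($A \sqcup B$, no edges between the sides). Using the associativity of both operations I binarise the cotree so that every internal node has exactly two children. As in \Cref{thm:P:clique}, set $q = \lfloor n/p\rfloor$ and let $p_L = n \bmod p$ and $p_S = p - p_L$ be the required numbers of parts of size $q{+}1$ and $q$ respectively (if $p > n$ the instance is trivially infeasible, so assume $q \ge 1$).

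The key structural observation is that every part $V_i$, being a connected induced subgraph, has a unique \emph{anchor}: the root of the minimal cotree subtree containing $V_i$. This anchor cannot be a union node (its two non-adjacent sides would disconnect $V_i$), so it is a join node $t = A * B$ at which $V_i$ meets \emph{both} $A$ and $B$. Crucially, seen from any subtree strictly below the anchor, the footprint of $V_i$ may be an arbitrary, possibly disconnected, vertex subset, since its connectivity is already guaranteed by the join edges at the anchor. Hence below an anchor only the \emph{number} of vertices a part borrows matters, not which ones.

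This motivates a leaf-to-root dynamic program storing, for each node $t$ on vertex set $U_t$, a boolean table $F_t[e,a,b]$ that is \texttt{true} iff $U_t$ can be partitioned into $a$ completed small parts and $b$ completed large parts, each connected in $G$ and anchored inside the subtree of $t$, together with $e$ \emph{exported} vertices reserved for a part anchored at a proper ancestor of $t$. For a leaf $\{v\}$ we may export $v$ (state $[1,0,0]$) or, if $q = 1$, complete it as a small part. At a union node $t = A \sqcup B$ no part can be anchored and exported vertices may be drawn freely from both sides, so I simply convolve the children, setting $F_t[e_A{+}e_B,\,a_A{+}a_B,\,b_A{+}b_B]$ whenever both child states hold. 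At a join node $t = A * B$ I additionally form $g_S$ small and $g_L$ large parts anchored at $t$; each such part must take at least one vertex from each side, so if $u_A$ vertices come from $A$'s exported pool and $u_B = g_Sq + g_L(q{+}1) - u_A$ from $B$'s, a valid distribution exists exactly when some $u_A$ satisfies $g_S{+}g_L \le u_A \le (g_Sq + g_L(q{+}1)) - (g_S{+}g_L)$ together with $u_A \le e_A$ and $u_B \le e_B$ — a single interval-nonemptiness test, valid precisely because exported vertices within a side are interchangeable. The re-exported count $e_A + e_B - g_Sq - g_L(q{+}1)$ is independent of the split, and the completed counts grow by $g_S,g_L$. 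The instance is a \YesI iff $F_r[0,p_S,p_L]$ is \texttt{true} at the root $r$.

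Correctness follows by a bottom-up induction matching partial partitions to reachable states (each state yields such a partition and conversely), entirely analogous to the argument accompanying \Cref{thm:ECP:XP:tw}; binarisation is sound because join and union are associative and any subset touching both sides of a binary join is automatically connected. Each table has $\Oh{n\cdot p_S\cdot p_L}$ cells, and each combination step ranges over pairs of child states and $\Oh{p_S\cdot p_L}$ choices of $(g_S,g_L)$, so every node is handled in polynomial time and, over the $\Oh{n}$ cotree nodes, the whole algorithm runs in polynomial time. I expect the main obstacle to be the anchoring analysis itself: the realisation that below an anchor a part's footprint is unconstrained is exactly what collapses the join-node combination from an awkward multi-way ``touch at least two children'' packing into the above interval check, and this is what keeps the dynamic program polynomial.
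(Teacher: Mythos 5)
Your proposal is correct and follows essentially the same route as the paper: a bottom-up dynamic program over a binarised cotree that tracks the number of vertices exported upward to parts anchored at higher join nodes, convolves at union nodes, and closes new parts at join nodes by drawing at least one vertex from each side. The only cosmetic differences are that you replace the paper's precomputed table $K(k,\ell)$ (feasibility of partitioning a complete bipartite graph) with an equivalent closed-form interval test, and that you additionally track the exact numbers of small and large completed parts, which is if anything slightly more careful bookkeeping than the paper's.
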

\begin{proof}
	For $\lceil \frac np \rceil \le 2$ one can solve any instance using maximum matching, so assume $\frac np > 2$.
	First, construct a binary co-tree of $G$ where nodes represent that either we took disjoint union of two sub-graphs or we get a complete-join of a disjoint union.
	We perform a standard bottom-up dynamic programming algorithm.
	The goal is to compute for each co-tree node~$N$ a table of size $n+1$ that contains on index $i \in \{0,\dots,n\}$ value $N_i = \texttt{True}$ if and only if the sub-graph under $N$ can be partitioned into parts of correct sizes ($\lfloor\frac np\rfloor, \lceil\frac np\rceil$) and a single part of size exactly $i$, otherwise $N_i$ is $\texttt{False}$.
	
	Before running the dynamic programming we precompute for every pair of values $k,\ell \in \{0,\dots,n\}$ a value $K(k,\ell) \in \{\texttt{True},\texttt{False}\}$ that expresses whether a complete bipartite graph $K_{k,\ell}$ has an \ECPshort solution.
	This can be done via a simple dynamic programming on increasing values of $k$ and $\ell$ in time $\mathcal O(n^3)$.
	
	The dynamic programming algorithm over co-tree goes as follows.
	As a base step, we have a single vertex; there we set $N_1 = \texttt{True}$ and the rest to $\texttt{False}$.
	An inductive step, take node $N$ which has children $A$ and $B$ that have their tables already computed.
	If $N$ is a disjoint union then set $N_i = \bigvee_{j \in \{0,\dots,i\}} (A_j \land B_{i-j})$.
	If $N$ is a complete-join node,
	set $N_i = \bigvee_{j,k,\ell \in \{0,\dots,i\}} \big(A_{j+k} \land B_{i-j+\ell} \land K(k,\ell)\big)$, where the value for infeasible indices is $\texttt{False}$.
	
	If we have $N_0 = \texttt{True}$ for the root node $N$ of the co-tree, then there exists a \ECPshort solution, because the algorithm decreases the number of non-partitioned vertices only via covering them by $K(k,\ell)$, which was precomputed to have a proper partitioning.
	In the other direction, assume we have a \texttt{Yes} \ECPshort instance that has a solution $S$.
	Take an arbitrary part $P$ of $S$.
	Identify the edge within $P$ that is introduced in the node $N_P$ that is closest to the root of the co-tree.
	As $N_P$ introduced an edge it is a complete-join node and $P$ spans both sub-graphs that are joined by it.
	Let $\mathcal N_N$ be the set of all parts that are assigned to $N$ using the described procedure.
	Let $\mathcal N^\downarrow_N$ be union of all sets $\mathcal N_A$ of all descendants $A$ of $N$ (including $N$).
	By definition, for any node $N$ and its children $A$ and $B$ we have $\mathcal N^\downarrow_N = \mathcal N^\downarrow_A \cup \mathcal N^\downarrow_B \cup \mathcal N_N$.
	We claim that every node $N$ has a $\texttt{True}$ entry that corresponds to $\mathcal N^\downarrow_N$.
	We set $L_1$ to $\texttt{True}$ for any co-tree leaf node $L$.
	Within the solution $S$ the set $\mathcal N_L$ is empty because $n/p > 2$.
	For induction, assume $\mathcal N^\downarrow_A$ and $\mathcal N^\downarrow_B$ have entries $A_{j'}$ and $B_{i'}$ set to $\texttt{True}$ where $j'$ and $i'$ represent the number of vertices that are not covered in $\mathcal N^\downarrow_A$ and $\mathcal N^\downarrow_B$, respectively.
	Let $k$ be the number of vertices of $A$ in (the union of sets within) $\mathcal N_N$ and let $\ell$ be the number of vertices of $B$ in $\mathcal N_N$.
	The number of vertices in $N^\downarrow_N$ that are not covered is $i=i'+j'-k-\ell$.
	Our algorithm sets $N_i$ to $\texttt{True}$ because for $j=j'-k$ and $i=i'+j'-k-\ell$ we have $A_{j+k} \land B_{i-j+\ell} \land K(k,\ell) = A_{j'} \land B_{i'} \land K(k,\ell) = \texttt{True}$.
	
	The algorithm goes through all nodes of the co-tree in the bottom-up order and performs for each node $N$ computation of $N_i$ for all $i \in \{0,\dots,n\}$ that takes no more than $\mathcal O(n^3)$.
	So the total time complexity, including the precomputation, is $\mathcal O(n^4)$.
\end{proof}

Next structural parameter we study is the neighbourhood diversity, which is generalisation of the famous vertex cover number that, in contrast, allows for large cliques to be present in~$G$. Later on, we will also provide a fixed-parameter tractable algorithm for a more general parameter called modular-width; however, the algorithm for neighbourhood diversity will serve as a building block for the later algorithm, and therefore we find it useful to present the algorithm in its entirety.

\begin{theorem}\label{thm:ECP:FPT:nd}
	The \ECP problem is fixed-parameter tractable parameterised by the neighbourhood diversity $\nd(G)$.
\end{theorem}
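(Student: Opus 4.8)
The plan is to formulate \ECPshort as an $N$-fold integer program, exploiting the fact that a graph of neighbourhood diversity $d$ is described by a small type graph on $d$ types, where each type $T_j$ is either an independent set or a clique, and where the adjacency between any two vertices depends only on their types. The key structural consequence is that within a single type all vertices are interchangeable (mutual siblings), so a solution is fully determined, up to symmetry, by how many vertices of each type land in each part and by the connectivity pattern of each part through the type graph. Since the number of parts $p$ may be large, I would not guess the global partition directly; instead the parts themselves become the ``bricks'' of the $N$-fold program, which is exactly the regime where $N$-fold IP beats Lenstra.

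\emph{First}, I would observe that a part has only one of two admissible sizes, $\lfloor n/p\rfloor$ or $\lceil n/p\rceil$, and that the ``shape'' of a part is captured by a \emph{configuration}: a vector recording how many vertices of each of the $d$ types the part receives, together with a certificate that these vertices induce a connected subgraph. Because connectivity of a part depends only on which types it touches and on the edges of the type graph (an edge $\{T_i,T_j\}$ means \emph{every} cross pair is adjacent, and a clique type is internally complete), connectivity can be decided purely from the \emph{support} of the type-count vector and whether each used type contributes $\geq 1$ vertex -- this is a property of a subset of $[d]$, so there are at most $2^d$ relevant connectivity patterns and hence $\Oh{(\lceil n/p\rceil+1)^d}$ valid configurations in total, which is polynomial for fixed $d$. \emph{Second}, I would set up one brick per part (so $N=p$), with a binary indicator variable $x^{(i)}_C$ selecting the single configuration $C$ used by part~$i$; the local constraint $\sum_C x^{(i)}_C = 1$ forces exactly one configuration per part, and the local structure is identical across bricks. \emph{Third}, the linking constraints enforce global consistency: for each type $T_j$, the total number of vertices of that type summed over all parts and their chosen configurations must equal $|T_j|$, and (via slack variables, as in \Cref{thm:ECP:FPT:vi}) the number of parts of each size must match $\ell = n\bmod p$ large parts and $p-\ell$ small parts. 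This gives $r = \Oh{d}$ linking constraints, $s=1$ local constraint, bricks of size $t = \Oh{(\lceil n/p\rceil)^d}$, and coefficients bounded by $n$; feeding these into \Cref{prop:n_fold_algo} yields the claimed \FPT running time.

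\emph{A subtlety} I expect to be the main obstacle is that the number of configurations $t$, and the coefficients $a$, depend on $n$ through $\lceil n/p\rceil$, whereas \Cref{prop:n_fold_algo} must deliver an \FPT bound in the parameter $d$ alone. I would resolve this by reducing the brick dimension so that it depends only on $d$: rather than indexing configurations by exact per-type counts (which range up to $n$), I index a configuration only by its \emph{support pattern} (which of the $d$ types it uses, an element of $2^{[d]}$) together with its target size class (large or small), and let the \emph{number of vertices of each type assigned to a part} be an explicit integer variable $0 \le y^{(i)}_j \le |T_j|$ inside the brick, constrained locally by $\sum_j y^{(i)}_j \in \{\lfloor n/p\rfloor,\lceil n/p\rceil\}$ consistently with the chosen pattern and by the requirement that every type in the support receives at least one vertex. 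This keeps $t = \Oh{2^d}$ and $r,s = \Oh{d}$ independent of $n$, while the coefficients of the $y$-variables in the linking constraints are all $1$, so $a$ stays bounded in terms of $d$; the only $n$-dependence then sits harmlessly in the logarithmic and linear factors of \Cref{prop:n_fold_algo}. The remaining routine work is to verify that the support-pattern encoding faithfully captures connectivity -- which follows because under neighbourhood diversity a part is connected precisely when its used types form a connected subgraph of the type graph after contracting each clique type, a condition depending only on the support -- and to confirm that any feasible $N$-fold solution can be realised as a genuine partition by arbitrarily distributing the interchangeable vertices of each type according to the $y$-values.
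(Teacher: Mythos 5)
Your structural insight is the right one---vertices of a type are interchangeable, and connectivity of a part is (essentially) determined by which types it touches---but the $N$-fold implementation has a genuine gap, sitting exactly at the ``subtlety'' you flag. With one brick per part, you must couple the per-type count variables $y^{(i)}_j$ to the chosen support pattern in \emph{both} directions: not only ``$y^{(i)}_j\ge 1$ for every type in the support'' (harmless, unit coefficients), but also ``$y^{(i)}_j=0$ for every type \emph{outside} the support''---otherwise a part may absorb vertices of types disconnected from its declared pattern, and feasibility of the program no longer certifies a connected partition. The only linear encodings of that exclusion are big-$M$ constraints such as $y^{(i)}_j\le n_j\sum_{C\ni j}x^{(i)}_C$ (or, equivalently, tying the part size to the indicator via a term $\sigma\cdot x^{(i)}_C$), and these push $\max\|A_i\|_\infty$ up to $\Theta(n)$. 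Since \Cref{prop:n_fold_algo} runs in time $a^{r^2s+rs^2}\cdot d\log d\cdot L$ with $a\ge \max\|A_i\|_\infty$, this lands you in \XP, not \FPT; the alternative horn, indexing configurations by exact per-type counts, inflates the total dimension to $n^{\Theta(\nd)}$ and is again only \XP. So neither option you consider actually resolves the dilemma. (A second, minor point: a part whose support is a single independent-set type and which has at least two vertices is disconnected, so ``connected support of the type graph'' alone is not quite the right certificate.)

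The paper escapes this by not tracking parts individually at all: for each connected subgraph $H$ of the type graph it introduces $x_H$ (how many parts realise $H$) and $x_H^t$ (the \emph{total} number of type-$t$ vertices over all parts realising $H$), with constraints $\sigma x_H\le\sum_{t\in H}x_H^t\le(\sigma+1)x_H$ and $x_H\le x_H^t$; a greedy redistribution argument then splits these aggregate totals into genuine parts of admissible sizes. This yields only $\Oh{\nd\cdot 2^{\nd}}$ variables in total, so Lenstra's algorithm---whose running time does not degrade with coefficient magnitude---applies directly. If you insist on $N$-fold, you would need the same aggregation (bricks indexed by patterns rather than parts), but the coefficient $\sigma$ then reappears inside the local constraints, so for this particular parameter Lenstra, not $N$-fold, is the right tool.
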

\begin{proof}
	We first observe that each connected sub-graph of~$G$ ``induces'' a connected graph of the type-graph of~$G$.
	More precisely, each connected sub-graph of~$G$ is composed of vertices that belong to types of~$G$ that induce a connected sub-graph of the type-graph.
	Therefore, the solution is composed of various realisations of connected sub-graphs of the type-graph of~$G$.
	Note that there are at most $2^{\nd(G)}$ connected sub-graphs of the type-graph of~$G$.
	We will resolve this task using an ILP using integral variables $x_H^t$ for a type $t$ and a connected sub-graph~$H$ of the type-graph of~$G$.
	Furthermore, we have additional variables $x_H$.
	That is, the total number of variables is (upper-bounded by) $\nd(G) \cdot 2^{\nd(G)} + 2^{\nd(G)}$.
	The meaning of a variable $x_H^t$ is ``how many vertices of type~$t$ we use in a realisation of~$H$''.
	The meaning of a variable $x_H$ is ``how many realisations of~$H$ there are in the solution we find''.
	We write $t \in H$ for a type that belongs to~$H$ (a connected sub-graph of the type-graph).
	Let $\sigma$ denote the lower-bound on the size of parts of a solution, that is, $\sigma = \lfloor n / k \rfloor$.
	In order for this to hold we add the following set of constraints (here, $\xi_G = 0$ if $n = k \cdot \sigma$ and $\xi_G = 1$, otherwise):
	\begin{align}
		\sigma x_H \le \sum_{t \in H} x_H^t &\le (\sigma + \xi_G) x_H && \forall H \label{eq:ECP:FPT:nd:realised} \\
		x_H &\le x_H^t 				 && \forall H \, \forall t \in H \label{eq:ECP:FPT:nd:realised:lowerbound} \\
		\sum_{H} x_H^t &= n_t 	&& \forall t \in T(G) \label{eq:ECP:FPT:nd:types}	\\
		0 \le x_H^t \le n_t &,\, x_H^t \in \mathbb{Z} 	&& \forall H \, \forall t \in H \label{eq:ECP:FPT:nd:bounds:Ht} \\
		0 \le x_H &,\, x_H^t \in \mathbb{Z} 	&& \forall H \label{eq:ECP:FPT:nd:bounds:H}
	\end{align}
	That is, \eqref{eq:ECP:FPT:nd:types} ensures that we place each vertex to some sub-graph $H$.
	The set of conditions \eqref{eq:ECP:FPT:nd:realised} ensures that the total number of vertices assigned to the pattern~$H$ is divisible into parts of sizes $\sigma$ or $\sigma+1$.
	The set of conditions \eqref{eq:ECP:FPT:nd:realised:lowerbound} ensures that each type that participates in a realisation of~$H$ contains at least~$x_H$ vertices, that is, we can assume that each realisation contains at least one vertex of each of its types.
	It is not difficult to verify that any solution to the \ECP problem fulfils \eqref{eq:ECP:FPT:nd:realised}--\eqref{eq:ECP:FPT:nd:bounds:H}.
	
	In the opposite direction, suppose that we have an integral solution $x$ satisfying \eqref{eq:ECP:FPT:nd:realised}--\eqref{eq:ECP:FPT:nd:bounds:H}.
	Let $\cal{H}$ be the collection of graphs~$H$ with multiplicities corresponding to~$x$, that is, a graph~$H$ belongs to~$\mathcal{H}$ exactly $x_H$-times.
	First, we observe that $|\mathcal{H}| = k$.
	To see this note that
	\[
	|G|
	=
	\sum_{t \in H} n_t
	=
	\sum_{t \in H}\sum_{H \in \mathcal{H}} x_H^t
	=
	\sum_{H \in \mathcal{H}}\sum_{t \in H} x_H^t
	\ge
	\sum_{H \in \mathcal{H}} \sigma \cdot 1
	\ge
	\sigma \sum_H x_H
	=
	\sigma |\mathcal{H}|
	\,.
	\]
	Similarly, we have $|G| \le (\sigma + \xi_G) |\mathcal{H}|$ and the claim follows.
	Now, we find a realisation for every $H \in \mathcal{H}$.
	We know that there are $x_H^t \ge \sigma x_H$ vertices allocated to~$H$.
	We assign them to the copies of $H$ in $\mathcal{H}$ as follows.
	First, from each type $t \in H$ we assign one vertex to each copy of~$H$ (note that this is possible due to~\eqref{eq:ECP:FPT:nd:realised:lowerbound}).
	We assign the rest of the vertices greedily, so that there are $\sigma$ vertices assigned to each copy of~$H$; then, we assign the leftover vertices (note that there are at most~$x_H$ of them in total) to the different copies of $H$.
	In this way, we have assigned all vertices and gave a realisation of~$\mathcal{H}$.
	
	As was stated before, the integer linear program has only parameter-many variables. Hence, we can use the algorithm of Lenstra~\cite{Lenstra1983} to solve it in \FPT time.
\end{proof}

With the algorithm from the proof of \Cref{thm:ECP:FPT:nd} in hand, we are ready to derive the result also for the modular-width.

\begin{theorem}
	\label{thm:ECP:FPT:mw}
	The \ECP problem is fixed-parameter tractable parameterised by the modular-width $\mw(G)$.
\end{theorem}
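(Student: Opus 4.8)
The plan is to process the modular decomposition of $G$ from the leaves to the root, reusing at every internal node the integer-programming machinery behind \Cref{thm:ECP:FPT:nd}. First I would compute the decomposition (in polynomial time) and view it as an algebraic expression whose substitution nodes have a pattern graph $P$ on at most $\mw(G)$ vertices, with the defining property that any two child modules are either completely adjacent or completely non-adjacent. The structural observation I would establish is that connectivity of \emph{multi-module} parts comes for free: if a part uses vertices from a set $I$ of at least two modules, then it is connected if and only if the modules of $I$ induce a connected subgraph of $P$, irrespective of how the part meets each individual module, since a single join between two modules makes all of their chosen vertices mutually reachable. Hence for a part spanning at least two modules only the per-module vertex counts and the total size matter — exactly the regime captured by the type-graph ILP of \Cref{thm:ECP:FPT:nd}, with the modules in the role of types and $P$ in the role of the type graph.

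Guided by this, for each module $M$ (equivalently, each node of the decomposition) I would compute a table recording which \emph{profiles} are realizable inside $G_M$, where a profile records how many vertices are placed in finished, connected parts internal to $G_M$, how many such parts there are, and how many vertices are left \emph{exported}, i.e. not yet in a finished part. The base case is a single vertex, which can only be exported or (when $\lceil n/p \rceil = 1$) form its own part; and since $\lceil n/p\rceil \le 2$ is solvable directly by maximum matching, I may assume larger parts. At a substitution node with children $M_1,\dots,M_\ell$ and pattern $P$, I would assemble an ILP in the spirit of \Cref{thm:ECP:FPT:nd}: for every connected sub-pattern $H$ of $P$ on at least two modules I introduce variables counting how many finished parts realize $H$ and how many vertices each participating module contributes, each such part being required to have size $\lfloor n/p\rfloor$ or $\lceil n/p\rceil$. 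Exported vertices of a child either feed these multi-module parts or are passed further up, while the internal parts of a child are governed by that child's previously computed table. At the root I accept precisely when some profile has zero exported vertices and exactly $p$ parts; note that fixing the total number of parts to $p$ already forces the number of large parts to equal $n \bmod p$, so it suffices to track part counts without distinguishing small from large.

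The step I expect to be the main obstacle is keeping this combination fixed-parameter tractable. A child's table is a priori an arbitrary subset of the (exported count, part count) lattice, and an arbitrary set of this kind cannot be written as boundedly many linear constraints, whereas Lenstra's algorithm — the engine of \Cref{thm:ECP:FPT:nd} — needs the number of variables to depend only on $\mw(G)$. I would try to defeat this by exploiting \emph{monotonicity}: any finished internal part may be dissolved and its vertices re-exported, so the set of attainable part counts of a module is downward closed and the exported count is then determined by the part count and $|V(G_M)|$. This should let me feed each child into the node's ILP through only a bounded number of integer variables (its attainable part count, with the exported count derived from it), subject to boundedly many constraints that are faithful thanks to downward closedness, so that the node is solved by the algorithm of \Cref{thm:ECP:FPT:nd} in $f(\mw(G))\cdot n^{O(1)}$ time; summing over the $O(n)$ nodes of the decomposition then yields the claimed running time. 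The remaining work is bookkeeping that I expect to follow from the structural lemma and the root condition: verifying that every exported vertex ultimately attaches to a used neighbouring module (so no finished part is disconnected) and that the global part count is exactly $p$.
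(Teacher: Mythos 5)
Your high-level plan coincides with the paper's: both process the modular decomposition bottom-up, both rest on the observation that a part meeting at least two child modules is connected exactly when those modules induce a connected subgraph of the pattern graph, and both solve each node with a Lenstra-type ILP in the spirit of \Cref{thm:ECP:FPT:nd}, the child modules playing the role of types. The difference lies in how a processed module is summarised for its parent: the paper reruns the ILP of \Cref{thm:ECP:FPT:nd} with the covering constraint relaxed to the inequality \eqref{eq:ECP:FPT:mw:types}, maximises the number of vertices covered by internal parts, and then replaces the module by a clique of that size together with an independent set of the leftover vertices, whereas you keep a table of profiles and try to compress it via downward closedness.

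The step you single out as the main obstacle is indeed the crux, and your proposed resolution does not close it. The claim that ``the exported count is then determined by the part count and $|V(G_M)|$'' is false, because finished internal parts come in two sizes, $\lfloor n/p\rfloor$ and $\lceil n/p\rceil$. Take $M$ to be two disjoint copies of $K_{1,5}$ with $\lfloor n/p\rfloor=2$ and $\lceil n/p\rceil=3$: every connected internal part of size at least two must contain one of the two star centres, so the part count is at most $2$, yet with exactly $2$ internal parts the module may export $6$, $7$ or $8$ vertices. The parent genuinely needs this second coordinate, since exported vertices must be absorbed by neighbouring modules of limited capacity while the number of large parts is globally pinned to $n\bmod p$. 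Hence the realisable profiles of a module form a downward-closed subset of a \emph{two-dimensional} lattice (number of internal parts versus number of large ones among them); such a set is described by an antichain of up to $\Theta(n)$ maximal elements, and downward closedness alone does not let you encode membership with a number of variables and constraints bounded in $\mw(G)$ --- branching over the maximal elements would only yield an \XP running time. To repair the argument you would either have to prove that these profile sets always admit a description of size bounded by the parameter, or adopt a surrogate such as the paper's clique-plus-independent-set replacement (whose answer-preservation itself rests on a non-trivial exchange claim that the paper only asserts). As written, your proof is incomplete at exactly this point.
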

\begin{proof}
	Clearly, the leaf-nodes of the modular decomposition of~$G$ have bounded neighbourhood diversity.
	For the graph of a leaf-node, we employ the following ILP:
	\begin{align}
		\sigma x_H \le \sum_{t \in H} x_H^t &\le (\sigma + \xi_G) x_H && \forall H \label{eq:ECP:FPT:mw:realised} \\
		x_H &\le x_H^t 				 && \forall H \, \forall t \in H \label{eq:ECP:FPT:mw:realised:lowerbound} \\
		\sum_{H} x_H^t &\le n_t 	&& \forall t \in T(G) \label{eq:ECP:FPT:mw:types}	\\
		0 \le x_H^t \le n_t &,\, x_H^t \in \mathbb{Z} 	&& \forall H \, \forall t \in H \label{eq:ECP:FPT:mw:bounds:Ht} \\
		0 \le x_H &,\, x_H^t \in \mathbb{Z} 	&& \forall H \label{eq:ECP:FPT:mw:bounds:H}
	\end{align}
	Note the difference between \eqref{eq:ECP:FPT:mw:types} and \eqref{eq:ECP:FPT:nd:types}; that is, this time we do not insist on assigning all vertices and can have some leftover vertices.
	We add an objective function
	\[
	\max \sum_{H}\sum_{t \in H} x_H^t \,,
	\]
	that is, we want to cover as many vertices as possible already in the corresponding leaf-node.
	Next, we observe that based on the solution of the above ILP, we can replace the leaf-node with a graph of neighbourhood diversity~2.
	In order to do so, we claim that if we replace the graph represented by the leaf-node by a disjoint union of a clique of size $\sum_{H}\sum_{t \in H} x_H^t$ and an independent set of size $\sum_{H} (n_t - \sum_{t \in H} x_H^t)$, then we do not change the answer to the \ECP problem.
	That is, the answer to the original graph was \Yes{} if and only if the answer is \Yes{} after we alter the leaf-node.
	The algorithm for modular-width then follows by a repeated application of the above ILP.
\end{proof}

As the last result of this section, we give an \XP algorithm for another structural parameter called distance to cluster graph. The algorithm, in its core, is based on the same ideas as our \FPT algorithm for distance to clique. Nevertheless, the number of types of vertices is no longer bounded only by a function of a parameter, and to partition the vertices that are not in the neighbourhood of the modulator vertices, we need to employ dynamic programming.

\begin{theorem}
	\label{thm:ECP:XP:dcg}
	The \ECP problem is \XP parameterised by the distance to cluster graph $\operatorname{dcg}(G)$.
\end{theorem}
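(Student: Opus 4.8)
The plan is to follow the blueprint of the distance-to-clique algorithm (\Cref{thm:ECP:FPT:distClique}) as closely as possible, but to handle two new sources of difficulty: the modulator no longer has a single clique neighbour but many cliques (the clusters), and after we fix how the modulator interacts with the solution, the leftover vertices are spread across \emph{several} cliques rather than one. Let $M$ be a modulator of size $k = \operatorname{dcg}(G)$ so that $G \setminus M$ is a disjoint union of cliques $Q_1,\ldots,Q_c$. I would first guess, in $k^{\Oh{k}} \cdot 2^{\Oh{k^2}}$ time, a partition of $M$ into the at most $k$ parts it meets, a size choice ($\lfloor n/p\rfloor$ or $\lceil n/p\rceil$) for each such part, and for each modulator vertex a decision about how its part is connected (via a modulator edge or through a vertex of some cluster). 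As in the clique case, within a single cluster $Q_j$ the vertices adjacent to a fixed subset of $M$ are mutually siblings, so it suffices to record, for each modulator part and each cluster, \emph{how many} vertices of that cluster are assigned to that part rather than which ones; and to guarantee connectivity a part needs at most one anchor vertex per cluster it borrows from. This reduces the combinatorial core to a counting problem over the clusters.

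After the guessing phase, the remaining task is: decide whether the residual vertices of the clusters can be distributed so that (i) every modulator part is filled up to its predetermined size while staying connected, and (ii) the vertices of each cluster that are \emph{not} used by any modulator part can be split into equitable connected parts lying entirely inside that cluster. Because each cluster is itself a clique, point (ii) is governed exactly by \Cref{thm:P:clique}: a surplus of $r$ vertices inside a single cluster can be carved into internal parts iff $r$ is expressible as a nonnegative combination of $\lfloor n/p\rfloor$ and $\lceil n/p\rceil$. The subtlety is the \emph{interaction} between (i) and (ii): how many vertices each cluster contributes to the modulator parts determines the residual count in that cluster, and these residuals must independently be tileable by full parts. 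This is where I expect the bounded number of types to break down, because $c$ can be as large as $n$ and the per-cluster contributions are not bounded by a function of $k$ alone.

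To resolve this coupling I would set up a dynamic program over the clusters $Q_1,\ldots,Q_c$. A DP state would record, for each of the at most $k$ ``demanding'' modulator parts, how many more vertices it still needs (each demand is at most $\lceil n/p\rceil \le n$), giving a table of size $n^{\Oh{k}}$. Processing cluster $Q_j$, I would enumerate how many of its vertices go to each modulator part (respecting the group/label structure and the at-most-one-anchor rule), check that the leftover count in $Q_j$ is tileable by full parts via the precomputed feasibility from \Cref{thm:P:clique}, and update the demand vector accordingly; transitions cost $n^{\Oh{k}}$ each. At the end I would accept iff some reachable state has all demands met and the global part count is correct. The total running time is $k^{\Oh{k}} \cdot 2^{\Oh{k^2}} \cdot c \cdot n^{\Oh{k}} = n^{\Oh{k}}$, which is \XP as claimed. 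The main obstacle, and the step most likely to need careful argument, is verifying that the local per-cluster accounting (anchors for connectivity plus full-part tiling of the surplus) composes correctly into a genuinely connected, equitable global partition; in particular I must argue, analogously to the greedy filling-up argument in \Cref{thm:ECP:FPT:distClique}, that fixing only the \emph{counts} of borrowed vertices per cluster never discards a feasible solution, since the borrowed vertices can be chosen as siblings and hence interchanged freely.
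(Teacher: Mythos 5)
Your plan matches the paper's proof essentially step for step: branch over the partition of $M$ and over how each modulator part reaches into the clusters for connectivity, then run a dynamic program over the cliques of $G\setminus M$ whose state records how many vertices each modulator part still needs, using \Cref{thm:P:clique} to tile each cluster's surplus into full parts and a matching-style check to certify anchors. The one detail worth tightening is that connecting two modulator vertices $u,w$ of the same part through a cluster may require \emph{two} cluster vertices (a path $u,v_1,v_2,w$ with $v_1,v_2$ in the same clique), not one, which is why the paper explicitly branches over up to two vertices of $G\setminus M$ per modulator vertex (at most $n^{2k}$ branches) and justifies sufficiency by a shortest-path shortcutting argument inside a single clique.
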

\begin{proof}
	We partition $V$ into the cluster graph vertices $R$ and the modulator $M$ where $|M|=k$.
	Recall that the cluster graph $R$ is a disjoint union of complete graphs.
	We design a branching procedure with a dynamic programming subprocedure.
	Whenever any part grows to more than $\lceil n/p \rceil$ vertices, we discard the branch.
	
	This proof bears similarity to the proof of \Cref{thm:ECP:FPT:distClique} with two main differences.
	First, we cannot ensure that the number of types (based just on neighbourhood to $M$) is bounded by the parameter so the final complexity becomes \XP.
	Second, to partition the remaining vertices of $R$ we employ a dynamic programming approach.
	
	We begin by going over every possibility of partitioning the modulator $M$, which creates no more than $k^k$ branches.
	We defer correctness of the following step to the end of this proof.
	For each $u \in M$ we add to its part up to two vertices $v_1,v_2 \in R$ such that $\{u,v_1\} \in E$ and $\{v_1,v_2\} \in E$, resulting in no more than $n^{2k}$ branches (just for counting, adding a single vertex is as if $v_1=v_2$ and adding none as $u=v_1=v_2$).
	We now check whether all modulator parts are connected and discard the branch if not.
	
	Having partially decided modulator parts we now perform a dynamic programming that reveals whether it is possible to partition the cluster graph correctly.
	Let $K^1,\dots,K^r$ be the cliques of $R$.
	We have a table $\DP$ with boolean entries $\DP[i,j_1,j_2,\dots,j_k]$ where $i$ ($0 \le i \le r$) is the index of the last processed clique $K^i$ ($i=0$ if none is processed) and $j_\ell$ ($1 \le j_\ell \le n$) gives how many vertices are assigned to the $\ell$-th modulator partition.
	Of course there may be fewer than $k$ modulator partitions, we assume $k$ for simpler indexing.
	As $r \le n$ the size of $\DP$ is at most $(n+1)^{k+1}$.
	The boolean entry $\DP[i,j_1,j_2,\dots,j_k]$ represents whether it is possible to partition $K_1,\dots,K_i$ in addition to the partial modulator partition (result of the branching) such that the $i'$-th modulator partition has size $j_{i'}$.
	To compute the $\DP$ table we start with $i=0$.
	We assign \texttt{True} to the only entry that has $j_{i'}$ equal to the current size of the $i'$-th modulator part for all $1 \le i' \le k$; the rest is set to \texttt{False}.
	Further entries are computed by gradually increasing $i$.
	To compute $\DP[i,j_1,j_2,\dots,j_k]$ we split each $j_{i'}$ into a sum $j_{i'} = a_{i'} + b_{i'}$ of two non-negative numbers.
	There are exactly $j_{i'}+1 \le n+1$ ways to do this split.
	The total number of options over all $i'$ is no more than $(n+1)^k$.
	To set $\DP[i,j_1,j_2,\dots,j_k]:=\texttt{True}$ at least one split has to satisfy both of the following conditions.
	\begin{enumerate}[1)]
		\item $\DP[i-1,a_1,a_2,\dots,a_k]$ is $\texttt{True}$ and
		\item\label{case2} the vertices of $K^i$ that are yet not in any part can be partitioned into parts contained entirely in $K^i$ and such that for all $1 \le i' \le k$ we can connect $b_{i'}$ vertices to the $i'$-th modulator part.
	\end{enumerate}
	We need to ensure that each modulator that is extended is connected to the added vertices.
	As $K^i$ is a clique we need at least one edge between the modulator parts and its new vertices in $K^i$.
	This can be checked using any polynomial-time maximum matching algorithm in the following way:
	We do not check modulator parts that already have vertices in $K^i$ (from the branching procedure).
	For the rest, build a bipartite graph $B$ representing the modulator parts and their neighbours in $K^i$.
	If the maximum matching algorithm finds a solution that matches all modulator parts then the matched vertices of $K^i$ ensure the connectivity.
	
	After evaluating the entire $\DP$ table, we check all entries $\DP[r,j_1,\dots,j_k]$ where $j_{i'}$ has correct size ($\lceil n/p \rceil$ or $\lfloor n/p \rfloor$) for all $1 \le i' \le k$.
	Note that the conditions ensure that parts that are contained entirely within the cliques have correct size.
	Moreover, we keep sizes of modular parts until the very end and only the entries with correct sizes are checked.
	Therefore, we have \YesI if and only if any of the checked entries for $i=r$ is \texttt{True}.
	
	Having an intuition about the algorithm, we now return back to the argument about adding vertices to modulator parts during the branching procedure.
	Assume that $S$ is the partition we seek.
	We prove that the described branching suffices by a contradiction.
	Pick a part $P$ in $S$ and assume that $P$ was discarded during our branching procedure.
	There must be two vertices $u,v \in P \cap M$ such that the shortest path $X$ between them within $P$ does not contain other vertices of $P \cap M$ and contains more than two vertices of $P \cap R$.
	Path vertices $X \cap R$ are contained in a single clique $R$ and could be shortcut to make a shorter path between $u$ and $v$, a contradiction.
	
	The time complexity of the branching process is $k^k \cdot n^{2k} \cdot \Oh{n+m}$ and of the dynamic programming is $(n+1)^{2(k+1)}\cdot (n+m)^{\Oh{1}}$.
	We are getting an \XP algorithm that runs in $f(k) \cdot n^{\Oh{k}}$ for a computable function $f$.
\end{proof}

\section{Hardness Results}\label{sec:hardness}

In this section, we complement our algorithmic upper-bounds from the previous section with matching hardness lower-bounds. The results from this section clearly show that no \XP algorithm introduced in this paper can be improved to a fixed-parameter tractable one, or pushed to a more general parameter.

First, we observe that \ECPshort is \Wh with respect to the feedback-edge set number $\fes$ of $G$.
This negatively resolves the question from the introduction of our paper.
In fact, the actual statement shows an even stronger intractability result.

\begin{theorem}
	\label{ECP:Wh:fes}
	The \ECP problem is \Wh with respect to the path-width $\pw(G)$, the feedback-edge set number $\fes(G)$, and the number of parts~$p$ combined.
\end{theorem}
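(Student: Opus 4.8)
The plan is to give a parameterised reduction from \UBP (\UBPshort), which is \Wh parameterised by the number of bins~$k$. In an instance of \UBPshort we are given item sizes $a_1,\dots,a_m$ encoded in unary, a number of bins~$k$, and a capacity~$B$ with $\sum_i a_i = kB$, and we ask whether the items can be distributed into the $k$ bins without exceeding the capacity (so that every bin is in fact filled exactly). The correspondence I aim for is \emph{bins} $\leftrightarrow$ \emph{parts} and \emph{capacity} $\leftrightarrow$ \emph{part size}, so that the target number of parts is $p = k + \Oh{1}$ and the common part size $\lceil n/p\rceil$ encodes~$B$ together with a controlled amount of padding. The unary encoding of the sizes is precisely what lets me blow the item sizes up into a polynomially large graph while keeping the three target parameters bounded by a function of~$k$.

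First I would build the graph~$G$ around a \emph{spine} -- a single path -- and attach every other gadget to the spine as a shallow pendant structure, which keeps the path-width bounded. Each item $a_i$ is represented by a gadget whose vertex count is proportional to $a_i$, engineered to be \emph{indivisible}, i.e.\ with sizes and padding chosen so that no equitable connected partition can split a single item gadget across two parts. The $k$ bins are realised by $k$ \emph{collector} parts, and the assignment of items to bins is implemented through a single \emph{selection gadget} whose size is bounded in~$k$ and which carries \emph{all} cycle-creating edges, while the remainder of~$G$ stays a forest. Finally I would add \emph{filler} vertices so that $n$ is a multiple of~$p$ and every part is forced to have exactly the same size, thereby turning the capacity inequality of \UBPshort into the exact-size requirement of \ECPshort.

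For correctness, the forward direction takes a valid packing, places each item gadget into the collector part of its bin, routes the required connections through the selection gadget, and distributes the filler so that every part reaches the prescribed size; connectivity of each part follows from the spine together with the selection gadget, and equitability from the choice of sizes. For the reverse direction I start from an equitable connected partition and argue, using connectivity together with the indivisibility of the gadgets, that every item gadget lies entirely within a single part and that each part is, up to filler, a union of whole item gadgets attached to one collector; reading off for each collector part the set of item gadgets it contains yields an assignment of items to bins, and the common part size forces each bin to receive items of total size exactly~$B$. Hence $G$ admits the desired partition if and only if the \UBPshort instance is a \YesI.

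It then remains to bound the parameters: $p = k + \Oh{1}$ by construction; $\pw(G) = \Oh{f(k)}$ because everything hangs off a single spine and only the bounded-size selection gadget widens the bags; and $\fes(G) = \Oh{f(k)}$ because all cycles are confined to the selection gadget while the rest of~$G$ is a forest. The main obstacle -- and the very reason strengthening feedback-\emph{vertex} set to feedback-\emph{edge} set is substantial -- is the design of this selection gadget: the naive way of letting every item choose any bin makes each item adjacent to all $k$ collectors and costs $\Omega(mk)$ feedback edges, which is \emph{not} bounded in~$k$. I therefore expect the delicate part to be engineering a routing mechanism that permits free item-to-bin assignment using only $\Oh{f(k)}$ cycle edges, and then proving that connectivity through it captures exactly the feasible assignments (no spurious merges, and genuine indivisibility of item gadgets) while keeping $\pw(G)$ bounded at the same time, so that a single construction witnesses all three parameters simultaneously.
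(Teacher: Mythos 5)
Your proposal has a genuine gap exactly where you flag it: the selection gadget that lets each of the $m$ items be routed to any of the $k$ bins using only $f(k)$ cycle-creating edges is never constructed, and it is not a detail left "to be engineered" --- it is the entire content of the theorem. Worse, there is a structural reason to doubt that the plan of reducing from \UBPshort can be completed at all. In a graph with $\fes(G)\le f(k)$, all but $f(k)$ edges form a forest; once the feedback edges are deleted, each item gadget hangs off the rest of the graph as a pendant subtree attached at a single point, so connectivity forces the whole gadget into the part of its unique attachment point. The bin an item lands in is therefore determined by the part of its parent in the forest, and you have only $f(k)$ edges in total with which to decouple these $m$ choices from one another: items sharing an attachment point are forced into the same bin, and recursing on the attachment points just moves the problem up the tree. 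This is precisely why the paper uses \UBPshort only for the $4$-path-vertex-cover / distance-to-$\mathcal{G}$ hardness, where every item hub may be adjacent to all $k$ bin vertices and $\fes$ is consequently $\Omega(mk)$ --- unbounded in the parameter.

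The paper's own proof of this statement takes a different and much shorter route: it re-examines the existing reduction of Enciso et al.\ from \MCC (already known to bound path-width, feedback-\emph{vertex} set, and $p$) and observes that it bounds $\fes$ as well, with $\fes\le 4k(k-1)$, because the pendant leaves of the $2k(k-1)$ anchors are bridges and every cycle can be destroyed by removing a constant number of caterpillar edges incident to the anchors. Crucially, that construction encodes choice not as a per-gadget routing decision but as a single cyclic \emph{offset} of a rigid pattern of parts along each vertex-selection cycle --- only $k$ choices in total (one vertex per colour), each realised by sliding all parts of one gadget simultaneously, which is compatible with a bounded number of cycles. If you want to salvage your approach, you would need to replace your selection gadget with an offset-style mechanism of this kind; as written, the reduction does not go through.
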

\begin{proof}
	Enciso et al.~\cite{EncisoFGKRS2009} show in their reduction that the \ECPshort problem is \Wh parameterised by the path-width, the feedback-vertex set, and the number of parts combined.
	We observe, that the same reduction also shows \Whness for the path-width, the feedback-edge set, and the number of parts combined.
	In short, vertices they remove to achieve bounded feedback-vertex set have only constant number of edges connecting them with the rest of the construction.
	
	To be more precise, the proof of Enciso et al.~\cite[Theorem 1]{EncisoFGKRS2009} performs a reduction from the \MCC problem.
	Let us briefly give you an idea of their construction.
	The first key idea is to construct a set of vertices, called \emph{(roots of) anchors}, that have many pendant leaves.
	When the number of leaves is large enough, the solution is forced to contain exactly one part per anchor.
	The construction then connects the anchors using caterpillars (paths with pendant leaves) in a particular way.
	Observe that the caterpillars restrict sizes of the parts -- with each vertex of the path the part must also include all its pendant vertices.
	
	For each of the $k$ colours, they build a vertex selection gadget -- a cycle with $2(k-1)$ anchors joined with caterpillars.
	There are two consecutive anchors representing connection to other colours.
	The joining caterpillars have several vertices with big-degree, and in between them vertices of smaller degree.
	As the big degrees are way bigger than anything else the part can include, all the parts in the vertex selection gadget are forced to be offset by the same amount of big-degree vertices along the direction of the cycle -- this represents choice of a vertex of the given colour.
	They are all forced into the same area where they select small-degree vertices -- these represent what vertices of other colours the selected vertex is compatible with (the adjacent ones).
	
	For every colour, we have a vertex selection gadget that has parts representing every other colour.
	So now, for each pair of colours $i,j$ we build a caterpillar that joins the first vertex that represents colour $j$ in the vertex selection gadget for colour $i$ with the first vertex that represents colours $i$ in the vertex selection gadget for colour $j$; and similarly connects the second vertices.
	This creates a cycle with four anchors.
	Similarly to the vertex selection gadgets that ``synchronised'' using big-degree vertices, these smaller cycles now synchronise using small-degree vertices, allowing only offsets that represent existing edges.
	These offsets ensure that edges between the selected vertices exist for each pair of colours -- representing a multicoloured clique.
	
	Observe that there are $2k(k-1)$ anchors.
	Their removal yields a graph that has only caterpillars and independent vertices.
	To finish, we see that the number of parts $p$ is equal to the number of anchors,
	path-width $\pw \le 2k(k-1)+1$, and $\fes \le 4k(k-1)$ as to remove all cycles, it suffices to remove the edges of caterpillars in vertex selection gadgets that are adjacent to the anchors.
\end{proof}

Note that the result from \Cref{ECP:Wh:fes} can be, following the same arguments as used by~\cite{EncisoFGKRS2009}, strengthened to show the same result for planar graphs.
	
\begin{corollary}
	The \ECP problem is \Wh when parameterised by the path-width, the feedback-edge set, and the number of parts combined, even if $G$ is a planar graph.
\end{corollary}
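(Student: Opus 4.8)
The plan is to reuse the planar variant of the Enciso et al.~\cite{EncisoFGKRS2009} reduction and then re-run, on that planar instance, the feedback-edge set analysis already carried out in the proof of \Cref{ECP:Wh:fes}. First I would recall that Enciso et al.\ established that their reduction from \MCC can be realised so that the produced graph is planar: the high-level wiring—one vertex selection gadget per colour, joined pairwise by the edge verification caterpillars—is drawn with a number of edge crossings bounded in terms of $k$, and each crossing is replaced locally by a small planar gadget that does not alter the forced behaviour of the equitable connected partition. Since the correctness of this planarisation (a multicoloured clique exists if and only if the resulting planar instance is a \YesI) is exactly what their argument provides, I would import it directly rather than reprove it.

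Next I would verify that all three parameters remain bounded by a function of $k$ on the planar instance. For the number of parts $p$ and the path-width $\pw$, the crossing gadgets are of constant size and bounded in number, so each contributes only $\Oh{1}$ and both parameters stay within $\Oh{\mathrm{poly}(k)}$, just as in \Cref{ECP:Wh:fes}. The only genuinely new point concerns the feedback-edge set: in the non-planar construction one destroys every cycle by deleting the caterpillar edges incident to the $2k(k-1)$ anchors, yielding $\fes\le 4k(k-1)$. In the planar version the very same deletion still breaks every ``global'' cycle of the gadget skeleton, and the remaining cycles are confined to individual crossing gadgets. Since the skeleton has $\Oh{k^2}$ gadget connections, a planar drawing produces at most $\Oh{k^4}$ crossings, and adding a constant number of edges per crossing to the deletion set keeps $\fes$ within $\Oh{\mathrm{poly}(k)}$, hence bounded by a function of the parameter.

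The main obstacle is the interaction between planarisation and the edge parameter: the crossing gadgets are engineered to preserve connectivity and part sizes, so one must confirm that they do not covertly introduce long cycles whose destruction would require deleting many edges. The observation that makes the argument go through is that each crossing gadget is of constant size, so its internal feedback-edge set is constant, while every cycle spanning more than one gadget is already severed by the anchor-incident deletions inherited from \Cref{ECP:Wh:fes}. Consequently the total feedback-edge set remains polynomial in $k$, which is all a parameterised reduction requires, and the claimed \Whness for planar graphs follows.
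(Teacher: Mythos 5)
Your proposal matches the paper's treatment: the paper likewise obtains the planar case by invoking the planarisation argument of Enciso et al.\ and noting that the parameter bounds from \Cref{ECP:Wh:fes} survive it, which is exactly the structure of your argument. The only difference is that you spell out the feedback-edge-set accounting for the planarised instance (constantly many extra deletions per crossing, $\Oh{k^4}$ crossings), which the paper leaves implicit; this is a reasonable elaboration, though it rests on your unverified assumption that the cited planarisation indeed proceeds by constant-size crossing gadgets with a $k$-bounded number of crossings.
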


Next, we show that the \pvcn parameter from \Cref{thm:ECP:FPT:3pvc} cannot be relaxed any more while the problem is kept tractable. Our reduction is even more general and comes from the \UBP problem, which is defined as follows.
	
\defProblemQuestion{\UBP}
	{A number of bins~$k$, a capacity of a single bin~$b$, and a multi-set of integers $A=\{a_1,\ldots,a_n\}$ such that $\sum_{a\in A} a = bk$.}
	{Is there a surjective mapping $\alpha\colon A \to [k]$ such that for every~$i\in[k]$ we have $\sum_{a\in\alpha^{-1}(i)} a = b$?\vspace{0.2em}}
	
The \UBP problem is well-known to be \Wh when parameterised by the number of bins~$k$ and not solvable in $f(k)\cdot n^{o(k/\log k)}$ time for any computable function~$f$, even if all numbers are given in unary~\cite{JansenKMS2013}.

\begin{theorem}
	\label{thm:ECP:Wh:4pvc}
	For every graph family $\mathcal{G}$ such that it contains at least one connected graph~$G$ with $s$ vertices for every $s\in\mathbb{N}$, the \ECP problem is \Wh parameterised by the distance to $\mathcal{G}$ referred to as $\operatorname{dist}_\mathcal{G}(G)$ and the number of parts~$p$ combined and, unless ETH fails, there is no algorithm running in $f(\ell)\cdot n^{o(\ell/\log{\ell})}$ time for any computable function~$f$, where $\ell = p + \operatorname{dist}_{\mathcal{G}}(G)$.
\end{theorem}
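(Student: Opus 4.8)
The plan is to reduce from \UBP, which is \Wh parameterised by the number of bins~$k$ and, under ETH, admits no $f(k)\cdot n^{o(k/\log k)}$ algorithm even with a unary encoding. Given an instance with bin capacity~$b$, items $A=\{a_1,\dots,a_n\}$, and $\sum_i a_i = bk$, I would first note that we may assume $a_i\le b$ (a larger item makes the instance trivially infeasible). I then build $G$ as follows. I create $k$ pairwise non-adjacent \emph{centre} vertices $c_1,\dots,c_k$, which will serve as the modulator $M$. For each item $a_i$ I take an arbitrary connected graph $H_i\in\mathcal{G}$ on exactly $a_i$ vertices -- such a graph exists by the hypothesis on $\mathcal{G}$ -- fix an arbitrary \emph{portal} vertex $r_i\in V(H_i)$, and make $r_i$ adjacent to every centre. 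I set $p=k$. Then $G$ is connected, $|V(G)| = bk + k = k(b+1)$, and since $p=k$, every part of an equitable partition must have size exactly $b+1$.

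For the parameter bounds, $M=\{c_1,\dots,c_k\}$ has size $k$, so $p=k$, and removing $M$ leaves precisely the disjoint union $\bigsqcup_i H_i$ of the connected gadgets. This union lies in $\mathcal{G}$ for the families driving the intended corollaries (disjoint unions of paths, and $P_4$-free graphs, are closed under disjoint union), so $\operatorname{dist}_{\mathcal{G}}(G)\le k$ and the combined parameter is $\ell=p+\operatorname{dist}_{\mathcal{G}}(G)=\Oh{k}$. Verifying $G\setminus M\in\mathcal{G}$ is one point that needs care, but it follows from this union structure together with the availability of connected members of every size.

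The core of the argument is the backward direction, which I expect to be the main obstacle. The forward direction would be immediate: a packing $\alpha$ yields the partition whose $j$-th part is $\{c_j\}$ together with all gadgets $H_i$ with $\alpha(i)=j$, which is connected (every portal sees $c_j$) and has size $1+\sum_{\alpha(i)=j} a_i = b+1$. For the converse I must show that every equitable connected partition $\pi$ of $G$ encodes a packing, and the key structural claim is that each gadget $H_i$ lies entirely inside a single part. The argument I would give is: any vertex of $H_i\setminus\{r_i\}$ can leave $H_i$ only through $r_i$, so a part meeting $H_i$ while avoiding $r_i$ would be contained in $H_i$ and hence of size at most $a_i\le b<b+1$, which is impossible; and once $r_i$ lies in a part $P$, any vertex of $H_i$ left outside $P$ forms a component trapped inside $H_i$ that would have to be its own part, again too small. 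Hence $H_i\subseteq P$. A symmetric counting argument then forces exactly one centre per part: a centreless part would be confined to a single gadget (distinct gadgets are mutually connected only through centres) and thus too small, so with $k$ centres and $k$ parts each part holds exactly one centre. Consequently each part consists of one centre plus whole gadgets of total size exactly~$b$ -- a bin filled to capacity -- yielding the desired packing.

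Finally, the reduction is polynomial (in unary \UBP the numbers are polynomially bounded), and since $\ell=\Theta(k)$ we have $\ell/\log\ell=\Theta(k/\log k)$. Thus a fixed-parameter algorithm for \ECPshort would give one for \UBP, establishing \Whness, while an $f(\ell)\cdot n^{o(\ell/\log\ell)}$ algorithm would contradict the ETH lower bound for \UBP. The remaining steps are routine: the size bookkeeping ensuring all parts have size $b+1$, and checking that the portal-to-centre attachments keep $G$ connected while leaving $G\setminus M$ inside~$\mathcal{G}$.
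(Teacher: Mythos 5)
Your proposal is correct and follows essentially the same route as the paper: a reduction from \UBP with $k$ single-vertex bin gadgets fully joined to a designated hub/portal vertex of a connected $a_i$-vertex gadget from $\mathcal{G}$ for each item, the same two structural claims in the backward direction (gadget integrity and exactly one bin vertex per part), and the same ETH transfer from the Jansen et al.\ lower bound. The only cosmetic differences are naming (centres/portals vs.\ bin-gadgets/hubs) and that the paper first presents the star-gadget special case before generalising to arbitrary~$\mathcal{G}$.
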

\begin{proof}
	Let $\mathcal{I}=(A,k,b)$ be an instance of the \UBP problem. We construct an equivalent instance $\mathcal{J}=(G,p)$ of the \ECP problem as follows (see \Cref{fig:UBP_reduction} for an overview of the construction). For the sake of exposition, we assume that $\mathcal{G}$ is a family containing all disjoint unions of stars; later we show how to tweak the construction to work with any~$\mathcal{G}$ satisfying the conditions from the theorem statement.
	
	For every number $a_i \in A$, we create a single \emph{item-gadget} $S_i$ which is a star with $a_i$ vertices. Every $S_i$ will be connected with the rest of the graph~$G$ only via the star centre $c_i$; we call this special vertex a \emph{hub}. Next, we create $k$ \emph{bin-gadgets} $B_1,\ldots,B_k$. Each of these gadgets consists of a single vertex. Slightly abusing the notation, we call this vertex also $B_i$, $i\in[k]$. As the last step of the construction, we add an edge connecting every bin-gadget with every item-gadget and set $p=k$.
	
	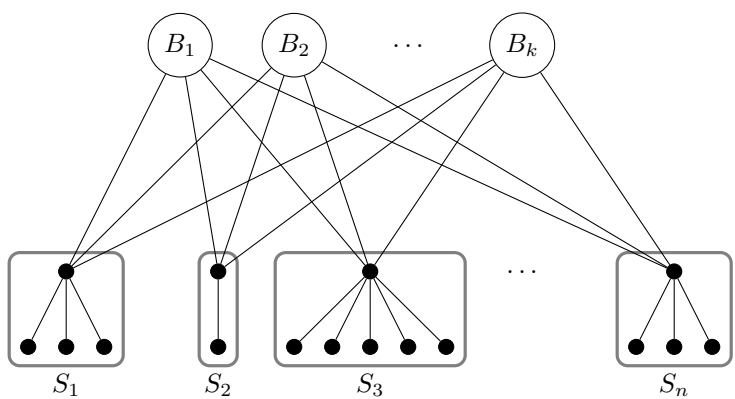
\begin{figure}[tb!]
		\centering
		\begin{tikzpicture}
			\node[draw,circle] (B1)  at (-1.5,0) {$B_1$};
			\node[draw,circle] (B2)  at (   0,0) {$B_2$};
			\node              (Bx)  at ( 1.5,0) {$\hdots$};
			\node[draw,circle] (Bk)  at (   3,0) {$B_k$};
			
			\draw[rounded corners,very thick,gray] (5.75, -2.75) rectangle (4.25, -4.25);
			\node at (5,-4.5) {$S_n$};
			\node[draw,fill,circle,inner sep=2pt] (c1) at (   5,-3) {};
			\node[draw,fill,circle,inner sep=2pt] (c11) at (5.5,-4) {};
			\node[draw,fill,circle,inner sep=2pt] (c12) at (  5,-4) {};
			\node[draw,fill,circle,inner sep=2pt] (c13) at (4.5,-4) {};
			\draw (c1) edge (c11) edge (c12) edge (c13);
			
			\node[] (c2) at ( 3,-3) {$\hdots$};

			\draw[rounded corners,very thick,gray] (-0.25, -2.75) rectangle (2.25, -4.25);
			\node at (1,-4.5) {$S_3$};
			\node[draw,fill,circle,inner sep=2pt] (c3) at ( 1,-3) {};
			\foreach[count=\i] \x in {0,0.5,1,1.5,2}{
				\node[draw,fill,circle,inner sep=2pt] (c3\i) at (\x,-4) {};
				\draw (c3) edge (c3\i);
			}
			
			\draw[rounded corners,very thick,gray] (-0.75, -2.75) rectangle (-1.25, -4.25);
			\node at (-1,-4.5) {$S_2$};
			\node[draw,fill,circle,inner sep=2pt] (c4) at (-1,-3) {};
			\node[draw,fill,circle,inner sep=2pt] (c41) at (-1,-4) {};
			\draw (c4) edge (c41);

			\draw[rounded corners,very thick,gray] (-3.75, -2.75) rectangle (-2.25, -4.25);
			\node at (-3,-4.5) {$S_1$};
			\node[draw,fill,circle,inner sep=2pt] (c5) at (  -3,-3) {};
			\node[draw,fill,circle,inner sep=2pt] (c51) at (-2.5,-4) {};
			\node[draw,fill,circle,inner sep=2pt] (c52) at (  -3,-4) {};
			\node[draw,fill,circle,inner sep=2pt] (c53) at (-3.5,-4) {};
			\draw (c5) edge (c51) edge (c52) edge (c53);
			
			\foreach \i in {1,2,k}{
				\foreach \j in {1,3,4,5} {
					\draw (B\i) -- (c\j);
				}
			}
		\end{tikzpicture}
		\caption{An illustration of the construction used to prove \Cref{thm:ECP:Wh:4pvc}.}
		\label{fig:UBP_reduction}
	\end{figure}
	
	For the correctness, let $\mathcal{I}$ be a \YesI and $\alpha$ be a solution mapping. We construct equitable connected partition $\pi=(\pi_1,\ldots,\pi_p)$ for $\mathcal{J}$ as follows. For every $i\in[k]$, we set $\pi_i = \{B_i\} \cup \{V(S_j)\mid a_j\in A \land \alpha(a_j) = i\}$. Since $\alpha$ is a solution for $\mathcal{I}$, the sum of all items assigned to bin $i$ is exactly $b$ and we have that the size of every part $\pi_i$ is exactly $b+1$. Therefore, the partition $\pi$ is clearly equitable. What remains to show is that every $\pi_i$, $i\in[k]$, induces a connected sub-graph of~$G$. It is indeed the case since every partition $\pi_i$ contains at least one bin-gadget which is connected with the centre of every item-gadget whose vertices are members of $\pi_i$. Moreover, by the construction, all vertices of the same item-gadget are always members of the same part.
	
	In the opposite direction, let $\mathcal{J}$ be a \YesI and $\pi=(\pi_1,\ldots,\pi_p)$ be a solution partition. We first prove several properties that every $\pi_i$, $i\in[p]$, has to satisfy.
	
	\begin{claim}\label{lem:bins}
		Every set $\pi_i\in\pi$ contains exactly one bin-gadget $B_j$, $j\in[p]$.
	\end{claim}
		\begin{claimproof}%
			Let there be a set $\pi_i\in\pi$ such that $\pi_i$ contains no bin-gadget. We already argued that the size of every part $\pi_i$ is exactly $b+1$. However, the number of vertices of every item-gadget is at most $b$ and, thus, $\pi_i$ must contains vertices of at least two item-gadgets to reach the bound $b+1$. Since we assumed that $\pi_i$ does not contain any bin-gadget, it follows that $\pi_i$ cannot be connected which contradicts that $\pi$ is a solution. If $\pi_i$ contains two bin-gadgets, then by the pigeonhole principle there is at least one set $\pi_{i'}\in\pi$ without bin-gadget, which is not possible. Therefore, every $\pi_i$ contains exactly one bin-gadget.
		\end{claimproof}
	
	\begin{claim}\label{lem:items}
		Let $S_j$, $j\in[n]$, be an arbitrary item-gadget. If $\pi$ is a solution, then all vertices of $S_j$ are part of the same set $\pi_i \in \pi$.
	\end{claim}
		\begin{claimproof}%
			For the sake of contradiction, let $S_j$ be an item-gadget such that there are two different sets $\pi_i,\pi_{i'}\in\pi$ containing at least one vertex of $S_j$. Without loss of generality, let $\pi_i$ contains the centre $c_j$ of $S_j$ and let $v$ be a vertex of $S_j$ such that $v\in S_j\cap\pi_{i'}$. Then it is easy to see that in graph $G\setminus \pi_i$ the vertex $v$ is an isolated vertex and, therefore, $\pi_{i'}$ cannot induce a connected sub-graph of $G$. This contradicts that $\pi$ is a solution and finishes the proof.
		\end{claimproof}
	
	Without loss of generality, we assume that $B_i\in\pi_i$ for every $i\in[p]$. By \Cref{lem:items}, we have that vertices of single item-gadget are all members of the same set $\pi_i\in\pi$. Therefore, we slightly abuse the notation and use $\pi(S_j)$ to get the identification of the set the vertices of $S_j$ are assigned to. We create a solution for $\mathcal{I}$ as follows. For every item $a_j\in A$, we set ${\alpha(a_j) = \pi(S_j)}$. Due to \Cref{lem:bins}, it follows that for every $i\in[k]$ we have $\sum_{a\colon \alpha(a)=i} a = b$ and, thus, $\alpha$ is indeed a solution for $\mathcal{I}$.
	
	To complete the proof, we recall that we set $p=k$. Additionally, if we remove all bin gadgets from the graph, we obtain a disjoint union of stars. Hence, $\operatorname{dist}_{\mathcal{G}}(G)$ is also $k$. Moreover, the construction can be clearly done in polynomial time.
	To generalise the construction for any $\mathcal{G}$, we just need to replace the item-gadgets. Specifically, for every $a_i$, the corresponding item-gadget is a connected $a_i$-vertex graph $G\in\mathcal{G}$ and the hub $c_i$ is an arbitrary but fixed vertex $v\in G$. The rest of the arguments remain the same.
	
	Finally, assume that there is an algorithm $\mathbb{A}$ solving \ECPshort in time $f(\ell)\cdot n^{o(\ell/\log \ell)}$, where~$f$ is any computable function and $\ell = p + \operatorname{dist}_\mathcal{G}(G)$. Then, given an instance $\mathcal{I}$ of the \textsc{Unary Bin Packing}, we can reduce $\mathcal{I}$ to an equivalent instance $\mathcal{J}$ of \ECPshort, use $\mathbb{A}$ to decide $\mathcal{J}$, and return the same result for $\mathcal{I}$. Since the reduction can be done in polynomial time and $\ell = 2k$, this leads to an algorithm deciding \textsc{Unary Bin Packing} in $f(k)\cdot n^{o(k)}$, which contradicts ETH~\cite{JansenKMS2013}. Therefore, such an algorithm $\mathbb{A}$ cannot exists, unless ETH fails.
\end{proof}

It is not hard to see that every graph $G$ which is a disjoint union of stars has constant $4$-path vertex cover number -- $0$, to be precise. Therefore, by \Cref{thm:ECP:Wh:4pvc} we obtain the desired hardness result.

\begin{corollary}
	The \ECP problem is \Wh parameterised by the $4$-path vertex cover number \pvcn[4] and the number of parts~$p$ combined.
\end{corollary}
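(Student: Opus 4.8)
The plan is to derive this statement directly from \Cref{thm:ECP:Wh:4pvc} by instantiating the generic graph family~$\mathcal{G}$ appropriately. Concretely, I would take $\mathcal{G}$ to be the family of all disjoint unions of stars. This family satisfies the hypothesis of \Cref{thm:ECP:Wh:4pvc}, since for every $s\in\N$ the star $K_{1,s-1}$ is a connected graph on exactly $s$ vertices belonging to~$\mathcal{G}$. Hence \ECPshort is \Wh parameterised by $\operatorname{dist}_{\mathcal{G}}(G)$ and the number of parts~$p$ combined, via the reduction from \UBP described in the theorem.

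The key structural observation is that any disjoint union of stars is $P_4$-free: the longest path contained in a single star has only three vertices (a leaf, the centre, and another leaf), and distinct stars share no edges. Consequently, every disjoint union of stars has $4$-path vertex cover number~$0$. I would use this to compare the two parameters on the constructed instance. In the reduction, removing the modulator~$M$ (the $k$~bin-gadgets) from~$G$ leaves precisely a disjoint union of the star-shaped item-gadgets, which is $P_4$-free. Therefore $M$ is itself a $4$-path vertex cover of~$G$, which yields $\pvcn[4](G)\le |M| = \operatorname{dist}_{\mathcal{G}}(G) = k$.

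With this inequality in hand, the hardness transfers immediately. The reduction of \Cref{thm:ECP:Wh:4pvc} maps an instance of \UBP with parameter~$k$ to an \ECPshort instance with $p=k$ and $\operatorname{dist}_{\mathcal{G}}(G)=k$; since $\pvcn[4](G)\le\operatorname{dist}_{\mathcal{G}}(G)$, the combined parameter $\pvcn[4](G)+p$ is also bounded by a function of~$k$. Thus the very same parameterised reduction witnesses \Whness of \ECPshort with respect to $\pvcn[4]$ and~$p$ combined.

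I do not expect a genuine obstacle here, as the claim is a corollary rather than an independent result. The only point requiring a moment of care is the direction of the parameter comparison: because $\pvcn[4]$ is \emph{upper-bounded} by $\operatorname{dist}_{\mathcal{G}}$ on the produced instances, a reduction controlling the latter automatically controls the former, so hardness for the distance to disjoint unions of stars descends to hardness for the $4$-path vertex cover number.
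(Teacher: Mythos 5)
Your proposal is correct and matches the paper's own argument: the paper likewise observes that a disjoint union of stars is $P_4$-free (hence has $4$-path vertex cover number $0$), so the bin-gadget modulator of the \Cref{thm:ECP:Wh:4pvc} reduction is itself a $4$-path vertex cover of size $k$, and the hardness for $\operatorname{dist}_{\mathcal{G}}(G)+p$ transfers directly to $\pvcn[4]+p$. Your write-up simply spells out the parameter comparison in more detail than the paper does.
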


Moreover, if the \pvcn[4] is bounded, then so is the tree-depth \td of the graph, because the graph without paths of length more than $k$ has tree-depth at most $k$~\cite[Proposition 6.1]{NesetrilOM2012}.
Removing $\pvcn[4]$ vertices from $G$ could decrease $\td$ of $G$ by at most $\pvcn[4]$ as every removed vertex can remove only one layer in the tree depth decomposition.
	Therefore, $\td \le k + \pvcn[k]$ and we directly obtain.

\begin{corollary}\label{thm:ECP:Wh:td}
	The \ECP problem is \Wh parameterised by the tree-depth $\td(G)$ and the number of parts~$p$ combined.
\end{corollary}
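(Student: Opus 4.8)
The plan is to obtain this corollary essentially for free from the \Whness for the $4$-path vertex cover number combined with~$p$ (the corollary immediately preceding this one), by showing that on every instance produced by the reduction of \Cref{thm:ECP:Wh:4pvc} the tree-depth is bounded by a function of the $4$-path vertex cover number. Concretely, I would first establish the structural inequality $\td(G) \le \pvcn[4](G) + 3$ for every graph~$G$, and then observe that the very same family of hard instances consequently also has bounded $\td + p$; the reduction then witnesses \Whness for the new combined parameter with no additional work.

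For the structural inequality, let $C$ be a minimum $4$-path vertex cover of~$G$, so that $|C| = \pvcn[4](G)$ and $G \setminus C$ contains no path on four vertices; equivalently, every path in $G\setminus C$ has at most three vertices. By \cite[Proposition~6.1]{NesetrilOM2012}, a graph whose longest path has at most $d$ vertices has tree-depth at most~$d$, so $\td(G\setminus C) \le 3$. It remains to control the reverse direction, namely how restoring the deleted cover vertices affects the tree-depth. Here I would invoke the standard fact that adding one vertex back to a graph increases its tree-depth by at most one, obtained by placing the restored vertex as a new root above the tree-depth elimination forest of the remaining graph. Iterating this over the $|C|$ vertices of~$C$ yields $\td(G) \le \td(G\setminus C) + |C| \le 3 + \pvcn[4](G)$, as required.

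Finally I would combine the two ingredients. Specialised to disjoint unions of stars, the reduction of \Cref{thm:ECP:Wh:4pvc} produces instances with $p = k$ and, taking the $k$ bin-gadgets into the cover, $\pvcn[4](G) \le k$, where $k$ is the number of bins of the source \UBP instance; by the inequality just derived, these instances satisfy $\td(G) + p \le (3 + k) + k = 2k + 3 = \Oh{k}$. Thus $\td(G) + p$ is bounded by a computable function of the \UBP parameter throughout the reduction, so the same mapping is a parameterised reduction also for the parameter $\td + p$, establishing the claimed \Whness.

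The only non-routine ingredient is the tree-depth-versus-cover inequality, and within it the sole delicate point is the direction that restores the cover vertices; the bound $\td(G) \le \td(G\setminus C) + |C|$ is folklore, but I would state it explicitly, as the entire corollary hinges on it.
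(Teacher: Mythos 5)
Your proposal is correct and follows essentially the same route as the paper: the paper also derives the corollary from the $4$-path vertex cover hardness via the inequality $\td(G) \le \pvcn[4](G) + c$, combining \cite[Proposition~6.1]{NesetrilOM2012} with the fact that deleting a vertex decreases tree-depth by at most one. Your write-up is in fact slightly more careful than the paper's (which states the vertex-removal step somewhat loosely and contains a typo in the final inequality), but the key lemma and overall structure are identical.
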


Now, a previous blind spot of our understanding of the \ECP problem's complexity with respect to the structural parameters that are bounded mostly for sparse graphs is the distance to disjoint paths. We again obtain hardness as a direct corollary of \Cref{thm:ECP:Wh:4pvc}.

\begin{corollary}
	\label{thm:Wh:ddp}
	The \ECP problem is \Wh parameterised by the distance to disjoint paths $\operatorname{ddp}(G)$ and the number of parts~$p$ combined.
\end{corollary}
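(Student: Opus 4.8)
The plan is to obtain this corollary as an immediate instantiation of \Cref{thm:ECP:Wh:4pvc}. That theorem establishes \Whness (together with the stated ETH lower bound) for \emph{every} graph family $\mathcal{G}$ satisfying a single mild richness condition: for each $s \in \mathbb{N}$, the family $\mathcal{G}$ must contain at least one connected graph on $s$ vertices. The entire task therefore reduces to exhibiting a suitable family and checking this one condition, so no new reduction needs to be carried out.

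First I would take $\mathcal{G}$ to be the family of all graphs that are disjoint unions of paths. To verify the hypothesis of \Cref{thm:ECP:Wh:4pvc}, observe that for every $s \in \mathbb{N}$ the single path $P_s$ on $s$ vertices is connected and is itself a (trivial) disjoint union of paths; hence $P_s \in \mathcal{G}$, and the richness condition holds. Next I would note that, by \Cref{def:distToG}, the distance to this particular family $\mathcal{G}$ coincides exactly with the distance to disjoint paths $\operatorname{ddp}(G)$. Applying \Cref{thm:ECP:Wh:4pvc} with this choice of $\mathcal{G}$ then yields that \ECPshort is \Wh when parameterised by $\operatorname{ddp}(G)$ and the number of parts~$p$ combined, as claimed. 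There is no genuine obstacle here; the only point to be careful about is confirming that paths of every length already lie in the family, so that the generic construction of \Cref{thm:ECP:Wh:4pvc} applies verbatim (its item-gadget for $a_i$ simply becomes a path on $a_i$ vertices, with an arbitrary but fixed endpoint designated as the hub~$c_i$).
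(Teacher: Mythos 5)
Your proposal is correct and matches the paper's argument exactly: the corollary is obtained by instantiating \Cref{thm:ECP:Wh:4pvc} with $\mathcal{G}$ the family of disjoint unions of paths, whose richness condition is witnessed by $P_s$ for each $s$, so that $\operatorname{dist}_{\mathcal{G}}(G)=\operatorname{ddp}(G)$ and the item-gadgets become paths with an endpoint as hub.
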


Enciso et al.~\cite{EncisoFGKRS2009} stated (and we formalized in \Cref{thm:ECP:XP:tw}) that there is an \XP algorithm for the \ECP problem parameterised by the tree-width of $G$. A natural question is then whether this algorithm can be improved to solve the problem in the same running-time also with respect to the more general parameter called clique-width. We give a strong evidence that such an algorithm is unlikely in the following theorem.

\begin{theorem}
	\label{thm:ECP:NPh:cliquewidth}
	The \ECP problem is \NPh even if the graph $G$ has clique-width $3$, and is solvable in polynomial-time on graphs of clique-width at most $2$.
\end{theorem}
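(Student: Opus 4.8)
The plan is to treat the two halves of the statement separately. For graphs of clique-width at most~$2$ I would invoke the classical characterisation of Courcelle and Olariu that a graph has clique-width at most~$2$ if and only if it is a co-graph. Since co-graphs are recognisable and their co-trees computable in linear time (Corneil, Perl, and Stewart), the polynomial-time algorithm of \Cref{thm:P:cograph} applies directly, which settles the positive part.

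For the hardness on clique-width~$3$, I would reduce from \textsc{3-Partition}, which is strongly \NPh and therefore remains \NPh when the $3m$ input integers $a_1,\dots,a_{3m}$ (with sum $mB$ and each strictly between $B/4$ and $B/2$) are bounded polynomially in the input size. The construction is precisely the special case of the reduction in \Cref{thm:ECP:Wh:4pvc} that uses stars as item-gadgets: for each $a_i$ I create a star $S_i$ on $a_i$ vertices with hub~$c_i$, I add $m$ independent bin-vertices $b_1,\dots,b_m$, I join every bin to every hub, and I set $p=m$. Because the total number of vertices is $m(B+1)$ and $p=m$, equitability forces every part to have exactly $B+1$ vertices; \Cref{lem:bins} and \Cref{lem:items} then guarantee that each part consists of exactly one bin together with whole item-gadgets whose sizes sum to~$B$, so solutions correspond exactly to partitions of the $a_i$ into $m$ groups of sum~$B$, i.e.\ to the desired triples. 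As the integers are polynomially bounded, the star sizes and hence $G$ have polynomial size, so the reduction runs in polynomial time.

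The technical heart, and the step I expect to be the main obstacle, is certifying that this graph has clique-width at most~$3$ using only three labels and introducing no spurious edge. I would build each star in isolation by the expression $\eta_{1,3}(c_i^{(1)} \oplus \ell_1^{(3)} \oplus \cdots \oplus \ell_{a_i-1}^{(3)})$, placing the hub at label~$1$ and all leaves at label~$3$; taking the disjoint union of these expressions over all~$i$ leaves every hub at label~$1$ and every leaf at label~$3$ with no edges across distinct stars. I would then introduce the bins as the independent set $b_1^{(2)} \oplus \cdots \oplus b_m^{(2)}$ at label~$2$ and finish with a single join $\eta_{1,2}$, which connects every hub to every bin while leaving the leaves (label~$3$) untouched. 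This reconstructs $G$ exactly with labels $\{1,2,3\}$, so the clique-width is at most~$3$. For tightness I would observe that $G$ contains an induced $P_4$ (a leaf, its hub, a bin, and a second hub), hence is not a co-graph, so by the characterisation above its clique-width is not at most~$2$; together with the clique-width~$2$ algorithm this shows that the threshold~$3$ is sharp.
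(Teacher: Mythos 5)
Your proposal is correct and follows essentially the same route as the paper: the same star-plus-bins construction (the paper reuses its reduction from \textsc{Unary Bin Packing} in \Cref{thm:ECP:Wh:4pvc}, whereas you reduce from \textsc{3-Partition}, but the resulting graph family and the correctness arguments via \Cref{lem:bins} and \Cref{lem:items} are identical), the same three-label clique-width expression up to a relabelling, and the same appeal to \Cref{thm:P:cograph} together with the co-graph characterisation of clique-width at most~$2$. Your added induced-$P_4$ observation showing the clique-width is exactly~$3$ is a small bonus the paper does not spell out.
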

\begin{proof}
	To show the hardness, we reuse the reduction used to prove \Cref{thm:ECP:Wh:4pvc}. Recall that the construction can be done in polynomial time, thus, the reduction is also a polynomial reduction. What we need to show is that the clique-width of the constructed graph $G$ is constant. We show this by providing an algebraic expression that uses $3$ labels. First, we create a graph $G_1$ containing all bin-gadgets. This can be done by introducing a single vertex and by repeating disjoint union operation. We additionally assume that all vertices in $G_1$ have label~$3$. Next, we create a graph $G_2$ containing all item-gadgets. Every item-gadget is a star which can be constructed using two labels $1$ and $2$. Without loss of generality, we assume that all item-gadgets' centres have label $1$ and all leaves have label~$2$. To complete the construction, we create disjoint union of $G_1$ and $G_2$ and, then, we perform a full join of vertices labelled $1$ and $3$. It is easy to see that the expression indeed leads to a desired graph.
	
	Polynomial-time solvability follows from \Cref{thm:P:cograph} and the fact that co-graphs are exactly the graphs with clique-width at most $2$~\cite{CourcelleO2000}.
\end{proof}

Using similar arguments, we can show \paraNP-hardness also for a more restrictive parameter called shrub-depth~\cite{GajarskyLO2013,GanianHNOMR2012,GanianHNOM2019}.

\begin{theorem}
	\label{thm:NPh:shrubdepth}
	The \ECP problem is \NPh even if the graph~$G$ has shrub-depth $3$.
\end{theorem}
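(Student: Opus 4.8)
The plan is to recycle the construction from \Cref{thm:ECP:Wh:4pvc} verbatim, exactly as was done in the proof of \Cref{thm:ECP:NPh:cliquewidth}. That reduction is polynomial and starts from \UBPshort, which is \NPh even when the numbers are encoded in unary (it generalises \textsc{3-Partition}); hence it already witnesses \NPhness of \ECPshort. The only thing left to establish is that the graph $G$ produced by the reduction admits a tree-model of depth $3$ using a constant number of colours, i.e.\ that $G\in\mathcal{TM}_m(3)$ for some fixed $m$ independent of the instance size.

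First I would fix three colours: one for the bin vertices $B_1,\dots,B_k$, one for the hubs (star centres) $c_1,\dots,c_n$, and one for the remaining star leaves. The rooted tree $T$ has a root $r$; directly below $r$ I place one auxiliary node $u_B$ collecting all bin vertices and, for every item-gadget $S_i$, one auxiliary node $u_i$; finally the vertices of $G$ become the leaves of $T$, with $B_1,\dots,B_k$ hung under $u_B$ and the centre $c_i$ together with the leaves of $S_i$ hung under $u_i$. This makes $T$ a tree of depth $3$ with all vertices of $G$ on the bottom level and only three colours in use.

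It then remains to verify that adjacency in $G$ is a function of the colour pair and the tree-distance alone. The crucial observation is that a hub and a leaf of the \emph{same} gadget are at tree-distance $2$ (they share the parent $u_i$) and are adjacent, whereas a hub and a leaf of \emph{different} gadgets are at tree-distance $4$ (their least common ancestor is $r$) and are non-adjacent; the identical colour pair is separated purely by the distance, which is exactly what the model permits. All other pairs behave uniformly across each fixed distance: bin--hub pairs sit at distance $4$ and are always edges, while bin--bin, leaf--leaf, hub--hub, and bin--leaf pairs are never edges. Hence every edge of $G$ is determined by the rule, so $G\in\mathcal{TM}_3(3)$ and the produced instances have shrub-depth at most $3$.

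The hard part is precisely this same-gadget-versus-different-gadget distinction for hub--leaf pairs: since the number of gadgets is unbounded, these pairs cannot be separated by colours, so the separation must be carried entirely by the tree-distance, which forces the extra middle layer of nodes $u_i$ and thus depth $3$ rather than $2$. To argue that $3$ is tight I would note that a depth-$2$ tree-model places all vertices as children of the root, making adjacency a function of the colour pair only and therefore yielding bounded $\nd$; our instances, however, have unbounded neighbourhood diversity, since the hubs (resp.\ the leaves) of distinct gadgets are pairwise of different type. Consequently the family is not contained in any $\mathcal{TM}_{m'}(2)$, and its shrub-depth equals exactly $3$.
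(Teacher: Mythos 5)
Your proposal is correct and follows essentially the same route as the paper: reuse the reduction from \Cref{thm:ECP:Wh:4pvc} and exhibit a $3$-colour, depth-$3$ tree-model with one internal node per item-gadget, so that same-gadget hub--leaf pairs are separated from cross-gadget ones by tree-distance alone. The only differences are cosmetic (you hang the bin vertices under an auxiliary node $u_B$ rather than directly off the root, which keeps all leaves of $T$ at uniform depth) plus your added lower-bound argument that the instances lie in no $\mathcal{TM}_{m'}(2)$, a tightness point the paper's proof does not address.
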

\begin{proof}
	We again use the reduction from \Cref{thm:ECP:Wh:4pvc}. We show an upper-bound on the shrub-depth of this construction by providing a tree-model using $3$ colours and of depth $3$. For every vertex of every bin-gadget, we set the colour to $1$. For every item-gadget centre, we set the colour to $2$, in the remaining vertices are coloured using colour $3$. Now, we create common parent vertex $p_i$ for for all vertices which are part of the item-gadget $X_i$. Finally, we add a root vertex $r$ which is connected with all bin-gadget vertices and with every $p_i$, $i\in[n]$. The edge relation between vertices is than defined as follows:
	\begin{enumerate}
		\item There is an edge between a vertex of colour $2$ and a vertex of colour $3$ if their distance is exactly $2$.
		\item There is an edge between a vertex coloured $1$ and a vertex coloured $2$ if their distance in the tree-model is exactly $3$.
	\end{enumerate}
	For all the remaining combinations, the edge is not present. It is easy to verify that this is indeed a tree-model of the graph from the construction.
\end{proof}

As the last piece of the complexity picture of the \ECP problem, we show that \ECPshort is \Wh with respect to the distance to disjoint cliques. Recall that we give an \XP algorithm for this parameter in \Cref{thm:ECP:XP:dcg}.

\begin{corollary}
	\label{thm:Wh:dcg}
	The \ECP problem is \Wh with respect to the distance to cluster graph $\operatorname{dcg}(G)$ and the number of parts~$p$ combined.
\end{corollary}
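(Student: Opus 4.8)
The plan is to obtain this statement as an immediate instantiation of the general hardness framework established in \Cref{thm:ECP:Wh:4pvc}. That theorem asserts that \ECPshort is \Wh parameterised by $\operatorname{dist}_{\mathcal{G}}(G)$ together with the number of parts~$p$, for \emph{any} graph family $\mathcal{G}$ that contains a connected graph on $s$ vertices for every $s\in\N$. Hence the only work needed is to exhibit a suitable family $\mathcal{G}$ whose associated distance parameter coincides with $\operatorname{dcg}$, and to check the single structural hypothesis on $\mathcal{G}$.

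First, I would take $\mathcal{G}$ to be the family of all cluster graphs, that is, all disjoint unions of cliques. By the definition of the distance to cluster graph, we have $\operatorname{dist}_{\mathcal{G}}(G)=\operatorname{dcg}(G)$ for precisely this choice of~$\mathcal{G}$. Next, I would verify the hypothesis demanded by \Cref{thm:ECP:Wh:4pvc}: for every $s\in\N$, the family must contain a connected $s$-vertex member. This holds because the clique $K_s$ is a (single-component) cluster graph that is connected and has exactly $s$ vertices. Translating this back into the reduction, it means that each item-gadget of size $a_i$ is realised as the clique $K_{a_i}$, so that after deleting the $k$ bin-gadgets the remaining graph is exactly a disjoint union of cliques, i.e.\ a cluster graph with modulator of size $k=p$.

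With both prerequisites satisfied, \Cref{thm:ECP:Wh:4pvc} applies directly and yields \Whness with respect to $\operatorname{dcg}(G)$ and $p$ combined. I expect no genuine obstacle here: the whole content of the corollary is the observation that cliques simultaneously play the role of arbitrarily large connected item-gadgets and of the connected components of a cluster graph, so the generic construction specialises without modification. This complements the \XP algorithm of \Cref{thm:ECP:XP:dcg}, showing that it cannot be improved to a fixed-parameter tractable one under the standard complexity-theoretic assumptions.
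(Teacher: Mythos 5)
Your proposal is correct and matches the paper's intended derivation exactly: the corollary is obtained by instantiating \Cref{thm:ECP:Wh:4pvc} with $\mathcal{G}$ the class of cluster graphs, noting that $K_s$ supplies a connected $s$-vertex member for every $s$ and that deleting the $p=k$ bin-gadgets leaves a disjoint union of cliques. Nothing further is needed.
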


\section{Conclusions}\label{sec:conclusions}

We revisit the complexity picture of the \ECP problem with respect to various structural restrictions of the graph. We complement the existing results with algorithmic upper-bounds and corresponding complexity lower-bounds that clearly show that no existing parameterised algorithm can be significantly improved.

Despite that the provided complexity study gives us a clear dichotomy between tractable and intractable cases, there still remain a few blind spots. One of the most interesting is the complexity classification of the \ECP problem with respect to the band-width parameter, which lies between the maximum leaf number and the path-width; however, is incomparable with the feedback-edge set number.

An interesting line of research can target the tightness of our results. For example, we show a clear dichotomy between the tractable and intractable cases of \ECPshort when parameterised by the clique-width. Using similar arguments, we can show the following result for the recently introduced parameter twin-width~\cite{BonnetKTW2022}.

\begin{theorem}
	\label{thm:NPh:twinwidth}
	The \ECP problem is \NPh even if the graph $G$ has twin-width $2$, and is solvable in polynomial-time on graphs of twin-width~$0$.
\end{theorem}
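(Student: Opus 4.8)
The plan is to follow the same template as the two preceding hardness results (\Cref{thm:ECP:NPh:cliquewidth,thm:NPh:shrubdepth}): reuse verbatim the polynomial-time reduction from \UBP built in \Cref{thm:ECP:Wh:4pvc}, and then argue that the graph $G$ it produces has small twin-width. Recall that $G$ consists of $k$ pairwise non-adjacent bin-vertices $B_1,\dots,B_k$, each joined to every hub $c_1,\dots,c_n$, where each hub $c_i$ additionally carries $a_i-1$ pendant leaves; thus the bins and hubs induce a complete bipartite graph with pendants hung on the hub side. As the reduction is polynomial and its correctness is already established, it remains only to exhibit a contraction sequence (in the sense of~\cite{BonnetKTW2022}) witnessing twin-width at most~$2$, which suffices for the \NPhness claim.

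First I would eliminate the leaves: the $a_i-1$ leaves of each star $S_i$ are false twins, so contracting them pairwise into a single leaf-blob $\ell_i$ creates no red edge and leaves each $\ell_i$ pendant (black) on $c_i$. The delicate part is then to eliminate the hub--leaf structure without the red degree exploding across the dense bipartite core: contracting $\ell_i$ directly into its hub would turn every hub--bin edge into a red edge and blow up the red degree. Instead I would process the hubs incrementally, maintaining the invariant that there is a single super-hub $C$ that is black-adjacent to all bins and adjacent to exactly one super-leaf $\Lambda$, with every red degree at most~$1$. To absorb the next hub $c_j$, contract $C$ with $c_j$: the edges to all bins stay black, but the super-hub acquires red edges to both $\Lambda$ and $\ell_j$, momentarily reaching red degree~$2$; contracting $\Lambda$ with $\ell_j$ then fuses these two red neighbours into a single new super-leaf and restores red degree~$1$. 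Iterating over all hubs leaves only the bins together with $C$ and $\Lambda$; the bins are again false twins and collapse into one vertex with no red edge, and the remaining three vertices are contracted in two further steps, each of red degree at most~$1$. Since the red degree never exceeds~$2$, $G$ has twin-width at most~$2$; moreover $G$ contains an induced $P_4$ (a leaf--hub--bin--hub path), so it is not a cograph and its twin-width is nonzero, and a short check that no red-degree-$1$ sequence exists pins the value at exactly~$2$.

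The polynomial-time half is immediate: graphs of twin-width~$0$ are precisely the cographs~\cite{BonnetKTW2022}, so the statement follows directly from the cograph algorithm of \Cref{thm:P:cograph}, exactly mirroring the clique-width~$2$ case of \Cref{thm:ECP:NPh:cliquewidth}. The only real obstacle is the contraction sequence above, and in particular the realisation that one must never merge a leaf into its hub while that hub still sees the full bipartite neighbourhood of bins; keeping a single shared super-leaf as the unique red neighbour of the advancing super-hub is exactly what caps the red degree at~$2$.
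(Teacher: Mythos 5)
Your proposal is correct and follows essentially the same route as the paper: the identical \UBP reduction, the same observation that the star leaves and the bin vertices are twins and so collapse without creating red edges, and an explicit contraction sequence of maximum red degree~$2$. The only difference is cosmetic --- where the paper contracts down to a tree and cites the known fact that trees have twin-width at most~$2$, you carry out that last phase by hand via the super-hub/super-leaf invariant (and your closing claim that the twin-width is \emph{exactly}~$2$ is asserted rather than checked, but the hardness statement only needs the upper bound, which is also all the paper establishes).
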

\begin{proof}
	Polynomial-time solvability follows from \Cref{thm:P:cograph} and the fact that a graph has twin-width equal to $0$ if and only if it is a co-graph~\cite{BonnetKTW2022}. For \NPhness, recall the construction from \Cref{thm:ECP:NPh:cliquewidth}. We provide a contracting sequence of maximum red degree equal to $2$. We divide the contractions into three phases. In the first phase, we contract all bin-gadgets to a single vertex $b$. Note that all bin-gadgets are twins and, hence, no red edge appears during these constructions. Next, we independently contract all vertices of every item-gadget except the distinguished vertex $c_i$, $i\in[n]$, to a single vertex $a_i$. Since item-gadgets are cliques, again no red edge can appear, and the resulting sub-graph is a path with two vertices. After these two phases, we end up with a tree, which has twin-width~$2$~\cite{BonnetKTW2022}.
\end{proof}
We conjecture that the problem is polynomial-time solvable also on graphs of twin-width $1$, which, unlike the twin-width $2$ graphs, are additionally known to be recognisable efficiently~\cite{BonnetKRTW2022}. Similarly, we can ask whether the provided \FPT algorithms are optimal under some standard theoretical assumptions, such as the well-known Exponential-Time Hypothesis~\cite{ImpagliazzoP2001}.

Last but not least, the parameterised complexity framework not only gives us formal tools for finer-grained complexity analysis of algorithms for \NPh problems, but, at the same time, equips us also with the necessary formalism for analysis of effective preprocessing, which is widely known as \emph{kernelisation}. A natural follow-up question is then whether the \ECP problem admits a polynomial kernel with respect to any of the studied structural parameters. We conjecture that there is a polynomial kernel with respect to the distance to clique and $3$-path vertex cover number. %

\bibliography{references}
	
\end{document}